\newcommand{\ket}[1]{\left|#1\right\rangle}
\newcommand{\bra}[1]{\left\langle#1\right|}
\DeclareMathOperator{\spec}{spec}
\newcommand{\ketbra}[2]{\ket{#1}\!\bra{#2}}
\newcommand{\floor}[1]{\lfloor{#1}\rfloor}
\newtheorem{theorem}{Theorem}
\newtheorem{definition}{Definition}
\newtheorem{lemma}{Lemma}
\newtheorem{proposition}{Proposition}
\newtheorem{corollary}{Corollary}
\newcommand{\eq}[1]{(\ref{eq:#1})}
\newcommand{\thm}[1]{\hyperref[thm:#1]{Theorem~\ref*{thm:#1}}}
\newcommand{\defn}[1]{\hyperref[defn:#1]{Definition~\ref*{defn:#1}}}
\newcommand{\lem}[1]{\hyperref[lem:#1]{Lemma~\ref*{lem:#1}}}
\newcommand{\prop}[1]{\hyperref[prop:#1]{Proposition~\ref*{prop:#1}}}
\newcommand{\fig}[1]{\hyperref[fig:#1]{Figure~\ref*{fig:#1}}}
\newcommand{\tab}[1]{\hyperref[tab:#1]{Table~\ref*{tab:#1}}}
\renewcommand{\sec}[1]{\hyperref[sec:#1]{Section~\ref*{sec:#1}}}
\newcommand{\subsec}[1]{\hyperref[subsec:#1]{Section~\ref*{subsec:#1}}}
\newcommand{\append}[1]{\hyperref[append:#1]{Appendix~\ref*{append:#1}}}
\newcommand{\cor}[1]{\hyperref[cor:#1]{Corollary~\ref*{cor:#1}}}
\newcommand{\obs}[1]{\hyperref[obs:#1]{Observation~\ref*{obs:#1}}}
\newcommand{\TBG}{\widetilde{B^2_G}}
\newcommand{\abser}{\alpha}
\newcommand{\nn}{\nonumber \\}
\newcommand{\edr}{Erd\H os-R\'enyi }
\newcommand{\Google}{\affiliation{
Google Quantum AI, Venice, CA 90291, United States}}
\newcommand{\MQ}{\affiliation{
School of Mathematical and Physical Sciences,
Macquarie University, Sydney, NSW 2109, Australia}}
\newcommand{\Toronto}{\affiliation{
Department of Computer Science, University of Toronto, ON M5S 2E4, Canada}}
\newcommand{\PNNL}{\affiliation{
Pacific Northwest National Laboratory, Richland, WA 99354, United States}}
\newcommand{\Leiden}{\affiliation{%
applied Quantum algorithms (aQa), Leiden University, 2300 RA Leiden, The Netherlands}}
\newcommand{\Caltech}{\affiliation{
Department of Computing and Mathematical Sciences, Caltech, Pasadena, CA 91125, United States}}
\begin{document}

\title{Analyzing Prospects for Quantum Advantage in Topological Data Analysis}

	\author{Dominic W.~Berry}
	\email{corresponding author: dominic.berry@mq.edu.au}
	\MQ

	\author{Yuan Su}
	\Google

	\author{Casper Gyurik}
    \Leiden
	
	\author{Robbie King}
	\Google
	\Caltech
	
    \author{Joao Basso}
    \Google
    
    \author{Alexander Del Toro Barba}
    \Google
	
	\author{Abhishek Rajput}
	\Toronto
	
	\author{Nathan Wiebe}
	\Toronto
	\PNNL
		
	\author{Vedran Dunjko}
	\Leiden

	\author{Ryan Babbush}
	\email{corresponding author: ryanbabbush@gmail.com}
	\Google

\date{\today}

	\begin{abstract}
    Lloyd \emph{et al.}~\cite{Lloyd2016} were first to demonstrate the promise of quantum algorithms for computing Betti numbers, a way to characterize topological features of data sets. Here, we propose, analyze, and optimize an improved quantum algorithm for topological data analysis (TDA) with reduced scaling, including a method for preparing Dicke states based on inequality testing, a more efficient amplitude estimation algorithm using Kaiser windows, and an optimal implementation of eigenvalue projectors based on Chebyshev polynomials. We compile our approach to a fault-tolerant gate set and estimate constant factors in the Toffoli complexity. Our analysis reveals that super-quadratic quantum speedups are only possible for this problem when targeting a multiplicative error approximation and the Betti number grows asymptotically. Further, we propose a dequantization of the quantum TDA algorithm that shows that having exponentially large dimension and Betti number are necessary, but insufficient conditions, for super-polynomial advantage. We then introduce and analyze specific problem examples which have parameters in the regime where super-polynomial advantages may be achieved, and argue that quantum circuits with tens of billions of Toffoli gates can solve seemingly classically intractable instances.   
	\end{abstract}
	
	\maketitle

\section{Introduction}
\label{sec:intro}

An important outstanding challenge in quantum computing is to find quantum algorithms that provide a significant speedup for practical problems.
One area of great interest is quantum machine learning \cite{Biamonte2017}.
Early proposals included, for example, principal component analysis \cite{Lloyd2014}, and were often based on quantum solution of linear equations \cite{HHL}.
However, it has proven possible to dequantize many of these proposals, indicating that there is at most a polynomial speedup \cite{tang:dequantization1,tang:dequantization2}.
Analysis of the cost taking into account error-correction overhead indicates that more than a quadratic speedup would be needed to provide a useful quantum advantage within quantum error-correction \cite{SandersPRQ20,QuadSpeedup}.

An algorithm for topological data analysis proposed by Lloyd \emph{et al.}~\cite{Lloyd2016} turned out not to be directly ``dequantizable'' using the same techniques, raising the question of whether a greater speedup was possible.
A simple analysis by Gunn \emph{et al.}~\cite{Gunn} contradicted some of the scaling results originally reported by Lloyd \emph{et al.}, and indicated that under certain assumptions there would still only be a quadratic speedup for these algorithms (our analysis agrees with that of Gunn \emph{et al.}).
Here we give a far more careful analysis of the complexity, and examine applications which provide better-than-quadratic speedups.

An important goal in data analysis is to extract features of a data set and use them to cluster or classify the data.
This data set would be represented as a set of points in some metric space, such as $\mathbb{R}^n$ with the Euclidean distance function.
One approach for the analysis is to convert the point cloud into a graph where the vertices are the given data points and the edges are determined by whether or not pairs of points lie within a chosen distance $\epsilon$. This approach can capture features such as connectivity but ignores potential higher dimensional features, especially if the data points are sampled from some underlying high-dimensional manifold.
Topological data analysis (TDA) attempts to extract such higher dimensional global topological features of an underlying data set by applying techniques from the field of algebraic topology, in particular what is known as simplicial homology.

A simplex is a point, line segment, triangle, or higher-dimensional equivalent, and a simplicial complex is a collection of simplexes.
One can form a simplicial complex from the data set with respect to a distance scale $\epsilon$, by adding points that are within distance $2\epsilon$ to simplices.
The Betti number $\beta_k$ is the number of $k$-dimensional holes of the complex.
One can determine the Betti number for a chosen range of $\epsilon$.
Betti numbers which persist over an appreciable range of the values of $\epsilon$ are indicative of intrinsic topological features of the data set, as opposed to artifacts that appear at a particular scale and disappear shortly thereafter. The study of such features is referred to as persistent homology.

The classical complexities of algorithms for estimating Betti numbers are typically exponential in $k$.
That means the computation can be intractable even for a moderate amount of data.
That is an important feature for the promise of quantum algorithms, because even fully error-corrected quantum computers with millions of physical qubits are expected to be very limited in data storage.
The most promising applications of quantum computers are therefore those involving a limited amount of classical data that needs to be fed into the quantum algorithm as part of the problem specification.

Recent work on quantum TDA algorithms introduced more efficient fermionic representations of the Dirac operator \cite{Cade} and employed the quantum singular value transformation to implement the kernel projector \cite{Ubaru,hayakawa2021quantum}.
Some of these techniques have led to significant asymptotic improvements over the original approach, but it is unclear whether they are useful for reducing the fault-tolerant implementation cost for solving problems in practice. Indeed, to the best of our knowledge, no study has been done on the fault-tolerant implementation of quantum TDA algorithms for solving \emph{any} instance of problems of practical interest.

In this work, we give a new algorithm for estimating Betti numbers on a quantum computer.
{We significantly reduce the cost of fault-tolerant implementation as compared to prior work, as well as estimating the constant factors that are needed to give realistic estimates of gate counts.} 
Specifically, we 
develop a new method to prepare the initial Dicke states,
introduce improved amplitude estimation using Kaiser windows,
directly construct the quantum walk operator from block encoding,
optimally project onto the kernel of the boundary map,  
then use the overlap estimation to estimate the kernel dimension of the block-encoded operator, leading to a quadratic improvement in precision over classical sampling.
We also provide the concrete constant factors in the complexity of our algorithm and estimate its fault-tolerant cost, going beyond the asymptotic analyses of all existing work on quantum topological data analysis. 
Finally, we show that it is possible to construct specific data sets {which have parameters in a regime where} quantum TDA would appear to have a significant speedup. 
In particular, we give examples of a very specific family of problem instances {exhibiting the required parameters} for the quantum TDA to have a superpolynomial speedup {over the naive general classical algorithm}, and a more general family of instances that {exhibits the required parameters} for a quartic speedup.
Here, we are comparing to well studied classical approaches with complexity approximately linear in the possible number of cliques.

We provide a more detailed explanation of the technical background needed to understand Betti numbers in \sec{introbackground}.
We then provide the improved algorithm and the analysis of its complexity in \sec{algor}.
We use this result to analyse the regimes where large quantum speedups may be expected in \sec{regimes_qspeedup}.
In particular, we consider cases where the Betti number would be large (implying a large quantum speedup) in \subsec{K-graphs}, and
novel competing classical algorithms in \subsec{classical}.
We then conclude in \sec{conc}.

\section{Technical background}
\label{sec:introbackground}
Here we give the more detailed background that is needed to understand the standard approaches for this problem and our contribution.
For the technical definitions of the simplicial complex and Betti number, see \append{background}.

\subsection{Overview of the TDA algorithm and its implementation}
\label{sec:alg_review}
In order to analyse the Betti numbers, the points and lines between the points are represented by a graph $G$.
Then a simplex is represented by a clique in the graph (groups of vertices that are all connected by edges).
The $n$ vertices of the graph $G$ are represented by $n$ qubits.
That is, $\ket{1}\ket{0} \cdots \ket{0}$ would represent the first vertex, and $\ket{0}\ket{0} \cdots \ket{1}\ket{0}$ would represent vertex $n-1$.
Note that this is a distinct representation from that often used to analyse sparse Hamiltonians, where each computational basis state represents a distinct vertex (so $n$ qubits would represent $2^n$ vertices).
In the representation here, a computational basis state with more than one $\ket{1}$ would represent a clique of the graph (a set of vertices with edges connecting every pair of vertices).
For example, $\ket{1}\ket{1}\ket{1}\ket{0} \cdots \ket{0}$ would represent a clique of the first three vertices.

The entire Hilbert space can then be subdivided into subspaces of different Hamming weights.
Using $\mathcal{H}_k$ to denote the space spanned by computational basis states with Hamming weight $k$, we have
\begin{equation}
    \left(\mathbb{C}^{2}\right)^{\otimes n}=\bigoplus_{k=0}^{n}\mathcal{H}_k,
\end{equation}
where $\dim(\mathcal{H}_k)=\binom{n}{k}$.
This space includes all states of the various Hamming weights.
One can also restrict to only states which represent vertices of cliques of the graph $G$.
Denoting by $\mathcal{H}_k^G$ the space spanned by basis states of all $k$-cliques of $G$, we have $\mathcal{H}_k^G\subseteq \mathcal{H}_k$.
We also use $\mathrm{Cl}_k(G)$ to denote the set of bit strings which correspond to $k$-cliques of $G$.

We define boundary maps $\partial_k:\mathcal{H}_{k+1}\rightarrow\mathcal{H}_{k}$ by their actions on the basis states $\ket{x}\in \mathcal{H}_{k+1}$ as
\begin{equation}
\label{eq:bndry}
    \partial_k\ket{x}:=\sum_{i=0}^{k}(-1)^i\ket{x\backslash(i)},
\end{equation}
where $x\backslash(i)$ means the $i$th $1$ in the bit string $x$ is set to $0$. We also define $\partial_k^G:\mathcal{H}_{k+1}^G\rightarrow\mathcal{H}_{k}^G$ as the restriction of $\partial_k$ to $\mathcal{H}_{k+1}^G$. 
{That is, it gives zero for any $x$ not representing a $k+1$-clique of $G$.}
By definition, we have that both $\mathrm{im}(\partial_{k+1}^G)$ and $\ker(\partial_k^G)$ are subspaces of $\mathcal{H}_k^G$. But in fact, we have $\mathrm{im}(\partial_{k+1}^G)\subseteq\ker(\partial_k^G)\subseteq\mathcal{H}_k^G$, which can be seen from (with $\ket{x}\in \mathcal{H}_k^G$)
\begin{align}
    \partial_k^G\partial_{k+1}^G\ket{x}
    &=\sum_{i=0}^{k+1}(-1)^i\partial_k^G\ket{x\backslash(i)}\nonumber\\
    &=\sum_{i=0}^{k+1}(-1)^i\sum_{j=0}^{i-1}(-1)^j\ket{x\backslash(j,i)}
    +\sum_{i=0}^{k+1}(-1)^i\sum_{j=i+1}^{k}(-1)^{j-1}\ket{x\backslash(i,j)}\nonumber\\
    &=\sum_{i=0}^{k+1}(-1)^i\sum_{j=0}^{i-1}(-1)^j\ket{x\backslash(j,i)}
    +\sum_{j=0}^{k+1}(-1)^{j-1}\sum_{i=0}^{j-1}(-1)^{i}\ket{x\backslash(i,j)} \nonumber\\
    &=0.
\end{align}
Since $\mathrm{im}(\partial_{k+1}^G)$ is a subspace of $\ker(\partial_k^G)$, one can define
the quotient space
\begin{equation}
    H_{k}(G):=\ker(\partial_{k}^G)/\mathrm{im}(\partial_{k+1}^G).
\end{equation}
This space is called the $k$th homology group, and its dimension
\begin{equation}
    \beta_{k}^G:=\dim(H_{k}(G))=\dim(\ker(\partial_{k}^G))-\dim(\mathrm{im}(\partial_{k+1}^G))
\end{equation}
is the $k$th Betti number.
In practice, Betti numbers $\beta_k^G$ can be used to extract features of the shape of the data modeled by the graph $G$, and their estimation is the main problem in the topological data analysis we will consider here.
In this work we will be estimating $\beta_{k-1}^G$ for the $k-1$th Betti number, so we can simplify our discussion by considering Hamming weight $k$.

To describe our quantum algorithm and its circuit implementation for estimating Betti numbers, we will introduce the Dirac operator $B_G$. Specifically, for any graph $G$ and a fixed value of $k$, we define
\begin{equation}\label{eq:bdef}
B_G :=
\begin{bmatrix}
    0 & \partial_{k-1}^G & 0\\
    \partial_{k-1}^{G\dagger} & 0 & \partial_{k}^G\\
    0 & \partial_{k}^{G\dagger} & 0
\end{bmatrix},
\end{equation}
where the blocks indicate the subspaces $\mathcal{H}_{k-1}^G$, $\mathcal{H}_{k}^G$ and $\mathcal{H}_{k+1}^G$.
Since $\partial_{k-1}^G\partial_{k}^G$ gives zero, squaring $B_G$ yields
\begin{equation}
B_G^2=
\begin{bmatrix}
    \partial_{k-1}^G \partial_{k-1}^{G\dagger} & 0 & 0 \\
    0 & \partial_{k-1}^{G\dagger} \partial_{k-1}^{G} + \partial_{k}^{G}\partial_{k}^{G\dagger} & 0 \\
    0 & 0 & \partial_{k}^{G\dagger}\partial_{k}^{G}
\end{bmatrix}.
\end{equation}

It can be seen here that the middle part corresponds to the combinatorial Laplacian \cite{Eckmann1944}
\begin{equation}
    \Delta_{k-1}^G = \partial_{k-1}^{G\dagger} \partial_{k-1}^{G} + \partial_{k}^{G}\partial_{k}^{G\dagger}.
\end{equation}
It is known that \cite{Eckmann1944,friedman:betti,Casper}
\begin{equation}
    \dim(\ker(\Delta_{k-1}^G)) = \beta_{k-1}^G,
\end{equation}
which provides a convenient way of computing Betti numbers.
Since $B_G$ is Hermitian, the kernel of $B_G$ and $B_G^2$ is identical.
Therefore, to estimate the Betti number corresponding to a particular graph and a fixed value of $k$, it suffices to construct the Dirac operator and compute the dimension of its kernel on the subspace $\mathcal{H}_{k}^G$.

It can be difficult in general to perform topological data analysis on a classical computer due to the high-dimensional nature of the problem, with the dimension increasing exponentially in $k$.
However, the Dirac operator could be efficiently simulated on a quantum computer, in the sense of solving the Schr\"odinger equation with the Dirac operator as the Hamiltonian.
That indicates exponential speedups are possible, though there are a number of other stages needed for the quantum algorithm.
Previous work has provided several approaches for estimating Betti numbers on quantum computers.
The stages of these approaches include preparation of a uniformly mixed state, construction of the projector onto the kernel subspace, and estimation of the overlap.

The original approach of Lloyd et al.\ \cite{Lloyd2016} applied amplitude amplification and estimation to prepare the desired initial state in $\mathcal{H}_k^G$, starting from a uniform mixture of all Hamming weight $k$ basis states in $\mathcal{H}_k$.
Their approach actually produces a superposition over all values of $k$, so the success probability of obtaining a specific $k$ can be quite low; this issue was addressed by later work such as \cite{Gunn,Casper}.
To construct the projector onto the kernel, they implement Hamiltonian simulation and perform quantum phase estimation on the resulting operator. The Betti number is then estimated as the frequency of zero eigenvalues in the measurement.
That is, the algorithm can be summarised as below.
 \begin{enumerate}
     \item For $i=1,\ldots,m$, repeat:
     \begin{enumerate}
         \item Prepare the mixed state
         \begin{equation}
             \rho_k^G=\frac{1}{\dim(\mathcal{H}_k^G)}\sum_{x\in\mathrm{Cl}_k(G)}\ketbra{x}{x}.
         \end{equation}
         \item Apply quantum phase estimation to the unitary $e^{iB_G t}$.
         \item Measure the eigenvalue register to obtain an approximation $\widetilde{\lambda}_i$.
     \end{enumerate}
     \item Output the frequency of zero eigenvalues:
     \begin{equation}
         \frac{\#\{i,\widetilde{\lambda}_i=0\}}{m}.
     \end{equation}
 \end{enumerate}

In this work, we give a new algorithm for estimating Betti numbers on a quantum computer.
We provide a number of improvements which {significantly reduce the cost of fault-tolerant implementation.} Specifically, we do the following.
\begin{itemize}
    \item Develop new methods to prepare a mixture of fixed Hamming-weight states with garbage information that have significantly lower fault-tolerant cost.
    \item Introduce improved amplitude estimation using Kaiser windows to estimate the number of steps of amplitude estimation needed.
    \item Directly construct the quantum walk operator from block encoding without an additional step of quantum simulation.
    \item Project onto the kernel of the boundary map by implementing a Chebyshev polynomial to optimally filter the zero eigenvalues. This is more efficient than previous approaches that implement the phase estimation or rectangular window functions for filtering.    
    \item Use the overlap estimation to estimate the kernel dimension of the block-encoded operator, leading to a quadratic improvement in precision over the classical sampling approach used by previous work.
\end{itemize}
We also provide the concrete constant factors in the complexity of our algorithm and estimate its fault-tolerant cost for solving example problems, going beyond the asymptotic analyses of all prior work on quantum topological data analysis.

Ultimately, the performance of our algorithm (as well as other algorithms from previous work) will depend on several important problem parameters.
First, the desired state on which we perform the kernel projector is a uniform mixture of all the $|\mathrm{Cl}_k(G)|$ basis states in $\mathcal{H}_k^G$, whereas we start with a uniform mixture of all $\binom{n}{k}$ basis states in $\mathcal{H}_k$.
Their ratio $\binom{n}{k}/|\mathrm{Cl}_k(G)|$ will determine the number of amplification steps required in the state preparation. 
There is potential to improve the efficiency of preparation of the cliques via an improved clique-finding algorithm.

Second, we need to implement a spectral projector that distinguishes the zero eigenvalue from the remaining nonzero eigenvalues of the Dirac operator, and the cost of implementing such a projector will depend on the spectral gap of the Dirac operator.
Third, the output of the quantum TDA algorithm will not be the actual Betti number but instead a normalized version of the quantity $\beta_{k-1}^G/|\mathrm{Cl}_k(G)|$.
In order to estimate the Betti number to some additive precision, we need to increase the complexity by a factor that depends on $|\mathrm{Cl}_k(G)|$, with the result that the complexity would roughly scale as $\sqrt{\binom{n}{k}}$.

An alternative scenario is that a fixed relative error is required; that is, the ratio of the uncertainty in the Betti number to the Betti number.
Then the complexity would roughly scale as $\sqrt{\binom{n}{k}/\beta_{k-1}^G}$, as we show in \sec{algor}.
This means that significant speedups can be provided in cases where the Betti number $\beta_{k-1}^G$ is large, and we provide examples of such graphs in \sec{regimes_qspeedup}.

{
Our overall complexity may be summarised as in the following lemma.
\begin{lemma}[Total complexity]\label{lem:complex}
    The complexity of estimating to relative error $r$ the Betti number $\beta^G_{k-1}$ of graph $G$ with $n$ vertices may be approximated as, for two different methods
    \begin{align}\label{eq:method1}
 &  \frac{\ln(1/\delta_2)}{r_2}\sqrt{\frac{|{\rm Cl}_k(G)|}{\beta^G_{k-1}}} \left[ \frac {\pi}{2} \sqrt{\frac{\binom{n}{k}}{|{\rm Cl}_k(G)|}}  (6|E| + n\log^2 n) + \frac{n}{\lambda_{\min}} \ln\left(\frac{4|{\rm Cl}_k(G)|}{r_3\beta^G_{k-1}}\right) ( 6|E| + 5n ) \right] \, , \\
 \label{eq:method2}
  &    \frac{\ln(1/\delta_2)}{r_2}\sqrt{\frac{|{\rm Cl}_k(G)|}{\beta^G_{k-1}}} \left[ \frac {\pi}{2} \sqrt{\frac{n^k/k!}{|{\rm Cl}_k(G)|}}  (6|E| + 2kn) + \frac{n}{\lambda_{\min}} \ln\left(\frac{4|{\rm Cl}_k(G)|}{r_3\beta^G_{k-1}}\right) ( 6|E| + 5n ) \right] \, ,
\end{align}
with probability of failure $\delta=\delta_1+\delta_2$, 
where $r=r_1+r_2+r_3$, $|{\rm Cl}_k(G)|$ is the number of $k$-cliques, and it is assumed we are given a classical database of edges of the graph.
In the case where we are instead given a database of missing edges, then $6|E|$ is replaced with $4|E^C|$ in the above expressions.
\end{lemma}

See Section \ref{sec:total} for the explanation of this total complexity.
}

\subsection{Complexity classes of TDA}
\label{sec:complexity}

Linear-algebraic applications of quantum computing have led to numerous suggestions of how various types of machine learning subroutines could be implemented on a quantum computer with superpolynomial speed-ups over their classical counterparts. 
Many of these methods were in the end shown to only suffice for at most polynomial speed-ups, due to the randomized ``dequantizations'' of Tang and others~\cite{tang:dequantization1, tang:dequantization2, chia:dequantization_overview}. 
The algorithm of Lloyd et al.~\cite{Lloyd2016}, however, turned out not to be directly ``dequantizable'' using similar techniques, raising the question of whether more robust complexity-theoretic quantum-classical separations can be proven.
The current landscape on this topic is somewhat involved.

In general, we have a number of discrepancies between the computational problems in ordinary TDA applications and the computational problems for which we have certain complexity-theoretic insights.
In ordinary TDA applications one is typically concerned with the computation of the exact count of zero eigenvalues of combinatorial Laplacians. 
By the result of~\cite{marcos}
-- which shows that deciding if a combinatorial Laplacian has a trivial or non-trivial kernel (i.e., Betti number zero or non-zero) is $\mathsf{QMA}_1$-hard -- this problem is likely beyond what is efficient even for quantum computers in the worst case.
This observation goes in line with classical bodies of work showing that exact computations of Betti numbers is $\mathsf{NP}$-hard~\cite{adamaszek:complexity_betti}, and that it can even be $\mathsf{PSPACE}$-hard for more involved topological spaces (i.e., so-called \textit{algebraic-varieties})~\cite{scheiblechner:complexity_betti}.

From the perspective of the types of problems quantum algorithms may be efficient for, one could attempt a few relaxation of the problem. 
First, it may be fruitful to relax the TDA problem with respect to the quantity estimated.
Specifically, instead of the number of exactly zero eigenvalues, one could relax it and count the number of ``small'' eigenvalues (i.e., below a threshold).
This relaxation may be convenient from a quantum algorithmic perspective, but it also still useful from a data analysis perspective, since Cheeger's inequality demonstrates that the magnitudes of the small non-zero eigenvalues of the graph Laplacian characterises the connectedness of the graph~\cite{mohar:discrete_cheeger}, and similar results hold for combinatorial Laplacians~\cite{gundert:cheeger}.
In folklore it is conjectured that for difficult cases, the magnitude of the smallest non-zero eigenvalue of combinatorial Laplacians very often scales inverse polynomially~\cite{friedman:betti}, in which case the number of ``small'' eigenvalues coincides with the number of zero eigenvalues if the threshold is chosen appropriately. 
While the problem of counting small eigenvalues is more suitable to be solved on a quantum computer, it could turn out to still be $\mathsf{QMA}_1$-hard if the TDA matrices have a sufficiently large spectral gap.
Specifically, if the TDA matrices are sufficiently gapped, then one could count the number of zero eigenvalues (which is $\mathsf{QMA}_1$-hard~\cite{marcos}) by counting the number of eigenvalues below the spectral gap (i.e., the number of ``small'' eigenvalues). 

A related (yet different) problem for which complexity-theoretical results are known is that of estimating \textit{normalized} Betti numbers to within additive inverse polynomial precision. 
That is, the number of zero eigenvalues divided by the total number of eigenvalues, which here would be $\beta_{k-1}^G/|\mathrm{Cl}_k(G)|$ (if the TDA matrix is sufficiently gapped).
This quantity is natural from a quantum computational complexity perspective (though not from an applications perspective), since a quantum algorithm naturally estimates probabilities (so normalized quantities in this case), and since additive errors allow for a direct relationship to definitions of complexity classes like $\mathsf{DQC1}$.

Specifically, in~\cite{Casper} it was shown that the generalization of this problem, namely estimating the ratio when allowing a range of small eigenvalues, rather than strictly zero eigenvalues, for \textit{arbitrary} Hermitian operators (i.e., the so-called \emph{low-lying spectral density}) is $\mathsf{DQC1}$-hard.
This result was build upon in~\cite{Cade}, where it was shown that the problem remains $\mathsf{DQC1}$-hard when restricting the input to combinatorial Laplacians of general chain complexes.
It is unknown whether the hardness persists when further restricting to combinatorial Laplacians of clique-complexes, and the closest result to this is the $\mathsf{QMA}_1$-hardness result of~\cite{marcos} for the problem of exact counting. 
This normalized quantity is not typically studied, and indeed there are concerns that Betti numbers may fail to be large enough to be detectable when normalized (see also \append{proof_betti_density}).

As discussed above, estimates of the normalized Betti number with additive error are more natural from a quantum computational complexity perspective.
However, from the perspective of applications, we typically work with (unnormalized) Betti numbers (and perhaps their estimates). 
For this case, the rescaling from normalized Betti numbers to Betti numbers causes an in general exponential blow up of additive errors, and leads to algorithms which always have exponential run-times (for constant error). 
At the same time, in many applications, we only require small additive errors when the quantities in question are themselves small. 
For these reasons here we focus on estimation to within a given relative error; that is, the error in the Betti number divided by the Betti number. 
That is immune to rescaling and can lead to efficient algorithms in the cases when the Betti numbers are large.

Note that the problem of estimating the low-lying spectral density up to a certain relative error is also $\mathsf{DQC1}$-hard.
The reason is that the relative error must always be at least as large as the error in the normalised quantity, and estimating the normalised quantity to additive precision $\epsilon$ is $\mathsf{DQC1}$-hard for $\epsilon = 1/\mathrm{poly}(n)$~\cite{Casper}.
It is unknown whether the hardness result holds for Betti numbers, because they are found by restricting to combinatorial Laplacians, rather than arbitrary Hermitian operators.
Generally, the larger the Betti number the more efficient the quantum algorithm will be, which in certain cases results in a polynomial quantum runtime.
Examples of cases where the Betti numbers are large are discussed in more detail in \sec{regimes_qspeedup}.

\section{Optimization and analysis of quantum topological data analysis}
\label{sec:algor}

{
In this section we describe our algorithm in detail.
Subsections \ref{subsec:sorting}, \ref{sec:clique}, and \ref{sec:amplifying} are for preparing a state similar to $\rho_k^G$ in prior work.
That is a combination of states $\ket{x}$ that correspond to $k$-cliques of the graph $G$.
The general principle is to first prepare a Dicke state, which is explained in Subsection \ref{subsec:sorting}.
That is an equal superposition of all states with $k$ ones, of which only a subset will be $k$-cliques.
Therefore, Subsection \ref{sec:clique} then describes how to efficiently detect the states out of those that are $k$-cliques.

In order to provide a further speedup we then use amplitude amplification, as described in Subsection \ref{sec:amplifying}.
The key difficulty there is that the number of steps of amplitude amplification depends on the amplitude.
We therefore use amplitude estimation separately from the amplitude amplification.
This provides a significant advantage over fixed-point amplitude amplification \cite{YoderPRL14}, which incurs a logarithmic overhead, because we only need to perform the estimation once but perform the amplification many times.
Moreover, our amplitude estimation technique using Kaiser windows is improved over standard amplitude amplification, and can be used in far more general applications.

Then in Subsection \ref{sec:block_encode} we describe how to block encode the operator $B_G$ in order to provide a quantum walk operator that has eigenvalues related to those of $B_G$, as in \cite{Low2019hamiltonian,BerryNPJ18}.
In particular, we need to find eigenvalues $\pm 1$ of this walk operator, which correspond to eigenvalue 0 for $B_G$.
This provides a significant advantage over prior work that was based on simulating a Hamiltonian time evolution under $B_G$, because we avoid the overheads inherent in simulating Hamiltonian evolution.

Next, in Subsection \ref{sec:poly_approx} we show how to use an optimal filter to find eigenstates of $B_G$ with eigenvalue 0 (or $\pm 1$ for the walk operator).
This method improves over prior work which used phase estimation, which has an overhead due to it providing more information (an estimate rather than just distinguishing between zero and nonzero eigenvalues). 

Finally, in Subsection \ref{sec:total} we use the amplitude estimation again to estimate the proportion of zero eigenvalues, then provide the overall complexity.
The use of amplitude estimation here provides a square-root speedup over work based on classical sampling.
}

\subsection{Generating Dicke states with garbage}
\label{subsec:sorting}
In this section, we consider preparing an $n$-qubit uniform superposition of Hamming weight $k$ basis states (which is allowed to be entangled with garbage states). Such a state is known in previous literature as the Dicke state. 
In \cite{Dicke} it was shown how to prepare a Dicke state with ${\mathcal O}(nk)$ gates, although these gates included rotations, so there would be a logarithmic factor in the complexity when counting non-Clifford gates.

Because the preparation here allows an entangled state to be prepared between the superposition for the Dicke state and ancilla states, it is possible to prepare the state more efficiently.
One approach is to apply a quantum sort to $n$ registers, then use it to apply an inverse sort to the $n$ qubits with the first $k$ set in the state $\ket{1}$.
This is similar to the approach used for symmetrising states for chemistry in \cite{BerryNPJ18}.
Another approach is to use inequality testing to obtain $k$ successes.
Both those approaches give a factor of $\log n$ in the number of qubits required, which is costly when $n$ is large.
Throughout we use ``$\log$'' for base 2, and ``$\ln$'' for natural logs.

{We provide two schemes here.}
In {our first scheme} we prepare $n$ registers with approximately $\log n$ qubits in equal superposition.
This is similar to the first step in \cite{BerryNPJ18} where a sort was used.
Here we instead find a threshold such that $k$ registers are less than or equal to this threshold.
{This approach is explained in detail in \append{Dicke1}.

Our second scheme is based on preparing separate superposition states for the positions of each of the individual ones.
The target state is obtained by adding all those ones into the target state, but has a higher amplitude for failure arising from ones in the same locations.
We provide the details in \append{Dicke2}.
The complexities of these two schemes are as in the following lemma.

\begin{lemma}[Dicke preparation]\label{lem:dicke}
    The Dicke state with $k$ ones in $n$ qubits may be prepared with probability of success 
    \begin{equation}
    \frac 1{(cn)^n} \binom{n}{k} \sum_{\ell=1}^{cn} [\ell^k-(\ell-1)^k] (cn-\ell)^{n-k} \approx 1-\frac 1{2c} \, .
\end{equation}
using 
    \begin{equation}\label{eq:dickeprep}
    (n_{\rm seed}+1)\left[\frac n2  (n_{\rm seed}+2)+ \lceil \log n\rceil\right] 
\end{equation}
Toffolis, where $n_{\rm seed}$ is a number of seed qubits
\begin{equation}
    n_{\rm seed}:=\lceil\log cn\rceil \, ,
\end{equation}
for some constant $c$.
Alternatively it may be prepared with probability of success
\begin{equation}
    \frac{k!}{n^k} \binom{n}{k}
\end{equation}
with Toffoli complexity
\begin{equation}
    (k+2)n + k(4\lceil \log n\rceil-1) + \lceil \log k\rceil \, ,
\end{equation}
or 
\begin{equation}
    (k+2)n - 2k + \lceil \log k\rceil \, ,
\end{equation}
for $n$ a power of 2.
For this preparation, the state may be entangled with an ancilla system.
\end{lemma}

Although the first scheme has better asymptotic complexity of $\widetilde{\mathcal{O}}(n)$, we find that for realistic parameters its complexity is considerably larger.
The lower probability of success of the second scheme results in a larger factor in the complexity, so the approach that is optimal will depend on the parameters.
}

{
In comparison, prior work in Refs.~\cite{Gunn,Casper} used a procedure based on an integer enumeration of all basis states for the Dicke state.
Lloyd \emph{et al.}~\cite{Lloyd2016} used a method with a superposition over values of $k$ that is not directly comparable.
The complexity in Ref.~\cite{Gunn} does not appear correct (the complexity in Ref.~\cite{Casper} just cites that result).
The method it uses is to first compute a Pascal's triangle of binomial coefficients up to $\binom{n}{k}$ with complexity $\widetilde{\mathcal{O}}(n^2 k)$.

To convert a natural number $l$ to a Hamming-weight $k$ string, it then starts by finding the largest value of $x$ such that $\binom{x}{k}<l$.
It is said that the value of $x$ can be found using $\widetilde{\mathcal{O}}(k)$ gates via a binary search using the Pascal's triangle as a lookup table.
The complexity is given as the number of stpes in the binary search, which is not correct.
The reason is that $l$ is given in quantum superposition, so $x$ needs to be searched for in superposition, and finding the appropriate entry in the lookup table (to perform the inequality test for the binary search) has complexity of the size of the lookup table.
The value of $k$ is fixed so not the entire lookup table is needed, but there are $n$ entries needed, and each has size $\mathcal{O}(k\log n)$.
This has a complexity of $\mathcal{O}(kn\log n)$, which then needs to be performed $\mathcal{O}(\log n)$ times in the binary search, so would give a complexity $\mathcal{O}(kn\log^2 n)$.
That complexity needs to be multiplied by $k$ steps of the algorithm to give overall complexity $\mathcal{O}(k^2 n\log^2 n)$ for the conversion.

The complexity of converting in the opposite direction, from a Hamming-weight $k$ string to a natural number, is given correctly in Ref.~\cite{Gunn} as $\widetilde{\mathcal{O}}(nk)$.
The complexity of the Pascal's triangle is somewhat less than that given in Ref.~\cite{Gunn}.
It can be calculated classically and entered into a quantum registers with $\mathcal{O}(n k \log n)$ Clifford gates, so zero Toffoli complexity.
The complexity of preparing an equal superposition over natural numbers is $\mathcal{O}(k\log n)$.
That is omitted in Ref.~\cite{Gunn}, which is reasonable because it is smaller than the other complexities.

So in Ref.~\cite{Gunn}, the leading order complexity of the Dicke state preparation is $\mathcal{O}(k^2 n\log^2 n)$, which is a factor of $k^2$ larger than our complexity of $\mathcal{O}(n\log^2 n)$ for our first approach, and a factor of $\widetilde{\mathcal{O}}(k)$ larger than the complexity of Dicke state preparation in Ref.~\cite{Dicke}.
In the example we give in Section \ref{subsec:K-graphs}, $k=16$ so the factor of $k^2$ is 256, and our first approach has about two orders of magnitude improvement in the complexity of this step as compared to Ref.~\cite{Gunn}.
For that example there is about another factor of 4 improvement by using our second approach.
}

\subsection{Detecting the cliques}
\label{sec:clique}

In the previous section, we have discussed the preparation of the $n$-qubit Dicke state with Hamming weight $k$ (and an additional garbage register)
\begin{equation}
    \frac{1}{\sqrt{\binom{n}{k}}}\sum_{|x|=k}\ket{x}\frac{1}{\sqrt{k!(n-k)!}}\sum_{\sigma(0\cdots01\cdots1)=x}\ket{\sigma}.
\end{equation}
Here, the first register holds all $n$-qubit strings with Hamming weight $k$, representing subsets of $k$ vertices in an $n$-vertex graph $G$. We now describe a quantum circuit that detects whether a given string $x$ represents a $k$-clique in the underlying graph, with the promise that $x$ has Hamming weight $k$. Specifically, our goal is to implement the mapping
\begin{equation}
    \ket{x}\ket{0}\ket{0}\mapsto\ket{x}\ket{x\in\mathrm{Cl}_{k}(G)}\ket{\mathrm{garb}_x}.
\end{equation}
Here, the second register has value $1$ if $x$ represents a $k$-clique in $G$ and $0$ otherwise. The third register contains some garbage information $\mathrm{garb}_x$ that can depend on $x$ and need not be uncomputed.

Our implementation of the clique detection is related to the approach of \cite{Metwalli}. Specifically, we introduce a register of $\floor{\log\binom{k}{2}}+1\le 2\log k$ qubits to represent integers $0,\ldots,\binom{k}{2}$. This register will be used to count the number of edges in the subgraph induced by the $k$ vertices denoted by $x$.
For the graph, we assume that it is given by a classical database, so we need to run through this classical database, rather than assuming any oracular access to the graph.
Let us assume that we have a listing of all edges in the graph.
That is, for each edge, we have a listing of the two nodes.
In order to implement this classical data, for each edge in the list we use a Toffoli with the qubits representing those two nodes as controls, and an ancilla as target.
In the case where both qubits are in the state 1, the ancilla qubit will be flipped.

The complexity is then given by a number of Toffolis equal to the number of edges, which we denote $|E|$.
We aim to sum all the bits output by these Toffolis.
Provided that we are restricted to Hamming weight $k$, if $x$ represents a $k$-clique then every pair of ones in $x$ will result in a 1, so the sum will yield $\binom{k}{2}$.
Summing bits in the obvious way would yield a complexity scaling as $2|E|\log k$ Toffolis, because each addition requires multiple Toffolis.
An improved method is given in \cite{Kivlichan2020improvedfault}, where it would take no more than $|E|$ Toffolis, but the same number of ancillas would be required, which would typically be a prohibitively large cost.
An alternative way of summing bits is given in \cite{SandersPRQ20}, where multiple groups of bits are summed, and their sums are summed.
The overall complexity is no more than $2|E|$ Toffolis, and only a logarithmic number of ancillas is used. The costs of the three main parts of the algorithm are as follows.
\begin{enumerate}
    \item There is cost $|E|$ Toffolis for checking the edges of the graph.
    The resulting qubits can be erased with measurements and phase corrections, with zero Toffoli cost.
    \item The complexity of the efficient bit sum approach from \cite{SandersPRQ20} is $2|E|$.
    \item There is complexity no more than $2\log k$ Toffolis to check that the output register is equal to $\binom{k}{2}$.
\end{enumerate}
Therefore, the total cost of clique detection is no more than $3|E|+2\log k$ Toffolis.
In many cases we will need to reflect on the result of this test.
In that case the $2\log k$ cost is not doubled, because we can replace the equality test with a controlled phase.
Therefore the cost in that case is $6|E|+2\log k$.
If we were to retain the qubits resulting from the edge checking and use the sum from \cite{Kivlichan2020improvedfault}, the cost would be $2|E|+2\log k$, though with a large ancilla cost.

Later when we consider the block encoding of the Hamiltonian we will need to allow a wider range of Hamming weights, $k-1$, $k$ and $k+1$,
in the case where we are block encoding the Hamiltonian projected onto this subspace.
First we can sum the ones in the string $x$, which has Toffoli complexity $n$.
We can check if the sum is equal to $k-1$ with $\lceil \log n\rceil$ Toffolis, then check if it is $k$ or $k+1$ with further Toffolis with the unary iteration procedure.
The number of Toffolis needed depends on the value of $k$, and some values will require about another $\lceil \log n\rceil$ Toffolis.
For each we can use CNOTs to output a success flag on an ancilla qubit.

We can also output the value of $\binom{k-1}{2}$, $\binom{k}{2}$, or $\binom{k+1}{2}$ in another register.
In this case we would need to apply an equality test between the result in our sum register and the result in this register, which again has a Toffoli complexity no larger than $2\log k$.
There are also $n$ Toffolis needed to sum the ones in $x$ and no more than $3\lceil \log n\rceil$ Toffolis to check the number of ones.

Much of the complexity of the algorithm is due to the use of amplitude amplification to find the cliques.
There has been much work on quantum algorithms for clique finding, but these algorithms are typically posed in terms of calls to an oracle for the graph, with a possibly large complexity for additional gates.
What that means is that the complexity in terms of oracle calls is no more than $\mathcal{O}(n^2)$ to find all the edges, and then there can be a very large amount of postprocessing to find the cliques.

{
When there are $n$ vertices there cannot be any more than $n(n-1)/2$ edges, but in practice we would only use the above algorithm for $|E|$ less than half this.
The reason is that for larger numbers of edges, it is more efficient to use a database of \emph{missing} edges.
If we use such a database, we can then iterate through all pairs of nodes in the database and use a Toffoli controlled on the corresponding qubits.
If we find any cases where this gives one, it means that there is a missing edge and the state does not represent a clique.
We therefore need to perform an OR on all the resulting qubits.

Similarly to the case above using the list of edges, one could perform an approach using addition and obtain a complexity with $|E|$ replaced by $|E^C|$, where $E^C$ is the set of missing edges.
But, since we only need to find a single one out of all the results, we can instead use an approach for a multiply-controlled Toffoli with a limited number of ancillas.
If one is willing to use about $\sqrt{|E^C|}$ ancillae, then the cost is $2|E^C|$, with the same cost for erasure.

In particular, consider grouping the list of missing edges into sets of approximately $\sqrt{|E^C|}$.
For each we can perform the Toffolis with the qubits representing the corresponding nodes as controls, then perform a multiply-controlled Toffoli with those qubits as controls (with the appropriate bit flips to give an OR).
The multiply-controlled Toffoli has Toffoli cost one less than the number of qubits in the group.
So, for example if $\sqrt{|E^C|}$ is an integer, then it is cost $\sqrt{|E^C|}-1$.
The qubits giving the results of the Toffolis for the missing edges can be erased using Clifford gates similarly as before.

We do this for each of the approximately $\sqrt{|E^C|}$ groups.
Then we perform a multiply-controlled Toffoli on the results for all these groups.
Because there are about $\sqrt{|E^C|}$ this is again the Toffoli and ancilla cost.
For example, if $\sqrt{|E^C|}$ is an integer, then there would be Toffoli cost $(\sqrt{|E^C|}-1)\sqrt{|E^C|}$ for the $\sqrt{|E^C|}$ groups, followed by $\sqrt{|E^C|}-1$ for the multiply-controlled Toffoli on the results, for cost $|E^C|-1$.
That is combined with the $|E^C|$ cost of the Toffolis for the individual missing edges, for total cost $2|E^C|-1$.
Given the improved efficiency of this approach, it would be preferable to using the list of edges for $|E|$ above about $n^2/5$.
We can therefore summarise the complexity of the clique checking as follows.

\begin{lemma}[Clique checking]\label{lem:cliques}
    It can be checked that an $n$-qubit state corresponds to a clique of graph $G$ using $3|E|+2\log k$ Toffolis given a classical database of edges $E$, or $2|E^C|$ given a classical database of missing edges $E^C$.
    The costs of reflecting on the result of the clique check are $6|E|+2\log k$ or $4|E^C|$ Toffolis.
\end{lemma}

The complexity in Ref.~\cite{Lloyd2016} is given as $\mathcal{O}(k^2)$ in terms of calls to an oracle for the distances between the points.
Similar complexities are given in Refs.~\cite{Gunn,Casper} but are not explained, so they are likely using the same assumption as Ref.~\cite{Lloyd2016}.
That complexity is not directly comparable to the result here, because we are instead assuming that we are given an explicit listing of edges (or missing edges).

In order to provide a comparison between that approach and ours, we can consider a slight modification of that where a database of locations of points is given.
In that case, one can use a quantum sort on the Dicke state to also sort the locations of the nodes to the first $k$ data locations.
That sort has complexity $\mathcal{O}(n\log n \log(1/\epsilon))$ given that the locations are given to accuracy $\epsilon$ (relative to the range of positions).
Then the distances can be checked with complexity $\mathcal{O}(k^2\log^2(1/\epsilon))$.
The factor of the square of the log here is from the complexity of performing squares for determining the (squared) distance.

We consider an example of a graph in Section \ref{subsec:K-graphs} with $n=256$, $k=16$ and $|E|=30720$.
In that example, $|E^C|=1920$, so it is more efficient to use the list of missing edges in our approach.
In comparison, if one were to attempt to use the database of locations, then just the sort would have higher complexity than $2|E^C|$.
If edges were determined from positions given in a three-dimensional space, then an accuracy of the components of the locations of only 4 bits would result in complexity larger than our costing by an order of magnitude.
Although it is difficult to compare our approach to Refs.~\cite{Lloyd2016,Gunn,Casper} due to the different model, we can expect an actual implementation of that type of approach to have at least an order of magnitude larger complexity.}

\subsection{Amplifying the initial state}
\label{sec:amplifying}

We aim to amplify the initial state so that we have the state with an equal superposition over cliques.
The strategy is to initially estimate the amplitude {just once}, then apply the appropriate number of steps of amplitude amplification when we are preparing the state to estimate the size of the null eigenspace.
It is possible to show that the complexity of estimating the amplitude is as given in the following Lemma.

\begin{lemma}[Quantum amplitude estimation]
\label{lem:qae}
Let $U$ be a unitary and let $0<a<1$ be such that
\begin{equation}
    U\ket{0,0}=a\ket{\psi_0,0}+\sqrt{1-a^2}\ket{\psi_1,1}.
\end{equation}
There exists a quantum algorithm which estimates $a$ to within error $\epsilon$ with probability of error less than $\delta$, using
\begin{equation}
    N = \frac{\pi}{\epsilon}\sqrt{1+\alpha^2} = \frac{1}{2\epsilon} \ln(1/\delta) + \mathcal{O}(\epsilon^{-1} \ln \ln (1/\delta))
\end{equation}
calls to $U$ or $U^\dagger$.
\end{lemma}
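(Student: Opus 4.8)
The plan is to reduce amplitude estimation to eigenphase estimation of a Grover-type rotation, and then to replace the uniform control register of textbook phase estimation with a Kaiser-window superposition so that the estimator's tails decay exponentially in the window parameter $\alpha$.

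First I would form the amplitude amplification operator $Q = -U S_0 U^\dagger S_\chi$, where $S_\chi$ applies a $-1$ phase to states with the flag qubit in $\ket{0}$ and $S_0 = I - 2\ket{0,0}\bra{0,0}$. Writing $a = \sin\theta$ with $\theta\in(0,\pi/2)$, the vectors $\ket{\psi_0,0}$ and $\ket{\psi_1,1}$ span a two-dimensional $Q$-invariant subspace on which $Q$ acts as a rotation by $2\theta$; hence $U\ket{0,0}$ lies in this subspace and $Q$ has eigenvalues $e^{\pm 2i\theta}$ there. Estimating $a$ therefore reduces to estimating the eigenphase $\theta$, and since $\abs{da/d\theta} = \abs{\cos\theta}\le 1$, an estimate $\widetilde\theta$ with $\abs{\widetilde\theta-\theta}\le\epsilon$ immediately yields $\abs{\sin\widetilde\theta - a}\le\epsilon$. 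Each application of $Q$ costs a bounded number of calls to $U$ and $U^\dagger$, so the total query count is proportional to the number of $Q$-applications, and tracking this constant fixes the leading coefficient in the final formula.

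Next I would run phase estimation on $Q$ with the control register prepared not in a uniform superposition but in the Kaiser-window state $\sum_j w_j\ket{j}$, where $w_j \propto I_0\!\left(\pi\alpha\sqrt{1-(2j/N)^2}\right)$ and $I_0$ is the modified Bessel function. Applying the controlled powers $\sum_j \ketbra{j}{j}\otimes Q^j$ followed by an inverse transform produces an estimate of the phase $2\theta$ whose response function is the discrete-time Fourier transform of the window. The Kaiser window is the near-optimal finite-length approximation to the prolate spheroidal concentration problem, so its response has a main lobe whose half-width (to the first null) is $2\pi\sqrt{1+\alpha^2}/N$ in the $2\theta$ variable, i.e.\ $\pi\sqrt{1+\alpha^2}/N$ in $\theta$, together with sidelobes whose total weight is exponentially small in $\alpha$. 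Identifying the main-lobe half-width with the target precision gives exactly $\epsilon = \pi\sqrt{1+\alpha^2}/N$, i.e.\ $N = (\pi/\epsilon)\sqrt{1+\alpha^2}$, which is the first expression in the claim.

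To obtain the second expression I would bound the probability that the outcome lands outside the main lobe by the spectral leakage of the window, which via the asymptotics $I_0(x)\sim e^x/\sqrt{2\pi x}$ scales like $\delta \sim e^{-2\pi\alpha}$ up to polynomial prefactors. Inverting this relation gives $\sqrt{1+\alpha^2} = \frac{1}{2\pi}\ln(1/\delta) + \mathcal{O}(\ln\ln(1/\delta))$ for large $\alpha$, so that $N = \frac{\pi}{\epsilon}\sqrt{1+\alpha^2} = \frac{1}{2\epsilon}\ln(1/\delta) + \mathcal{O}(\epsilon^{-1}\ln\ln(1/\delta))$. I expect the last quantitative step to be the main obstacle: one must rigorously bound the \emph{tail} of the Kaiser response (not merely its total energy), track the polynomial prefactors hidden in the Bessel asymptotics to confirm they contribute only at the $\ln\ln(1/\delta)$ order, and dispose of the boundary case $\theta$ near $\pi/2$, where $\cos\theta\to 0$ and wraparound/aliasing of the phase estimate must be excluded. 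By contrast, the reduction to phase estimation and the identification of the main-lobe width are routine; the delicate part is the sharp, uniform tail bound that pins down the constant $\tfrac12$ in front of $\ln(1/\delta)$.
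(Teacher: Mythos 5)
Your proposal follows essentially the same route as the paper's proof: phase estimation on the Grover iterate with a Kaiser-window control register, identifying the precision with the main-lobe half-width $\pi\sqrt{1+\alpha^2}/N$ and the failure probability with the $e^{-2\pi\alpha}$ spectral leakage, then inverting to obtain $\alpha = \tfrac{1}{2\pi}\ln(1/\delta) + \mathcal{O}(\ln\ln(1/\delta))$. The step you flag as the main obstacle is resolved in the paper exactly as you anticipate: the tail probability is upper-bounded by integrating the squared response beyond the first null (replacing $\sin^2$ by $1$), the normalization is obtained from a Gaussian approximation to the central lobe, and the resulting polynomial-in-$\alpha$ prefactors are confirmed to contribute only at the $\ln\ln(1/\delta)$ order.
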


The proof for this Lemma is given in \append{proofest}.
To see the value of $\epsilon$ needed, note that probability of success will be reduced to approximately $\sin^2((1\pm\epsilon/a)\pi/2)$ if we incorrectly choose the number of iterates in the amplitude amplification due to imprecision in estimating the amplitude.
That translates to a probability of failure of the amplitude amplification of approximately $(\epsilon\pi/2a)^2$.
For our application, the amplitude is approximately
\begin{equation}
    \sqrt{1-\frac 1{2c}} \sqrt{\frac{|{\rm Cl}_k(G)|}{\binom{n}{k}}},
\end{equation}
where the first factor comes from failure of the Dicke state preparation, and the second from the clique checking.
For simplicity, in the following expressions for complexity we will omit the factor of $\sqrt{1-1/{2c}}$ which is close to 1.
The amplitude estimation is needed because it typically will be unknown how many cliques there are $|{\rm Cl}_k(G)|$.
Inaccuracy in the amplitude estimation translates to a probability for failure of the amplitude amplification due to using an incorrect number of steps.

In practice the ``failure'' of the amplitude amplification is not a major problem, because it can be combined into an uncertainty in estimation of the Betti number.
That is, in the next step instead of estimating the Betti number relative to $|{\rm Cl}_k(G)|$, we will be estimating it relative to a value that may be increased by about a factor of $1/[1-(\epsilon\pi/2a)^2]$ (using the approximation of the $\sin$ function).
If we want $(\epsilon\pi/2a)^2$ no more than a relative error $r$, then we should choose
\begin{equation}
    \epsilon \le \frac{2\sqrt{r}}{\pi}\sqrt{\frac{|{\rm Cl}_k(G)|}{\binom{n}{k}}}.
\end{equation}
That means that the cost would be
\begin{equation}\label{eq:ampest}
    \frac{\ln(1/\delta)}{\sqrt{r}}\frac{\pi}{4}\sqrt{\frac{\binom{n}{k}}{|{\rm Cl}_k(G)|}}
\end{equation}
steps.
In comparison, the number of steps of the amplitude amplification is approximately
\begin{equation}
    \frac{\pi}{4}\sqrt{\frac{\binom{n}{k}}{|{\rm Cl}_k(G)|}}.
\end{equation}
That is, the amplitude estimation is more costly by a factor of $\ln(1/\delta)/\sqrt r$.

This cost of the Dicke state preparation from {Lemma \ref{lem:dicke}} will be doubled in amplitude estimation and amplification when we account for the need to unprepare the Dicke state.
{We also need to reflect on the clique check, with complexity $6|E|+2\log k$ or $4|E^C|$ as described in Lemma \ref{lem:cliques}.
The total complexity for each step of the amplitude estimation and amplification is the total of these two complexities.}

{
This approach of separating the estimation and amplification provides a significant improvement over the obvious approach of using fixed-point amplitude amplification \cite{YoderPRL14} to provide amplification with an unknown overlap.
That requires a logarithmic factor in the complexity similar to amplitude estimation.
In contrast, here we only have that logarithmic factor in the cost once in the initial amplitude estimation, then in the remainder of the algorithm we eliminate the logarithmic factor by just performing amplitude amplification with the initially estimated amplitude.

It is somewhat ambiguous to compare our approach to that in Refs.~\cite{Lloyd2016,Gunn,Casper}.
Reference \cite{Lloyd2016} just invokes the ``multi-solution version of Grover’s algorithm'', which is not sufficiently specific to give a complexity because there are multiple approaches.
Reference \cite{Gunn} cites the version of Grover's algorithm from \cite{Brassard1998}, and Ref.~\cite{Casper} just mentions Grover's algorithm and uses the complexity from \cite{Gunn}.
The problem with citing \cite{Brassard1998} is that it is not sufficient to specify exactly which approach is intended.

One approach for searching with an unknown number of solutions given in that work is to just use the approach of \cite{Boyer}, which would give a factor of $9/\pi$ in the complexity, but that approach would not be compatible with a later amplitude estimation used in Ref.~\cite{Gunn}.
That is because the approach of \cite{Boyer} relies on a sequence of measurements to obtain success of the search.
The measurements would prevent the later amplitude estimation (for the number of zero eigenvalues) being used.

Reference \cite{Brassard1998} also mentions the approach of performing amplitude estimation, followed by Grover's algorithm for a known number of solutions.
Our proposal here is to divide between using amplitude estimation once, followed by amplitude amplification based on the estimation many times within the rest of the algorithm.
That gives a significant improvement over using both at every step
(which would be the obvious interpretation of just citing \cite{Brassard1998}).

Moreover, we provide a significant improvement in the efficiency of amplitude estimation over that in \cite{Brassard1998}. See Theorem 6 of that work for their result in terms of the error in the squared amplitude.
Translating that to the error in the amplitude, the number of steps needed is approximately $\pi/\epsilon$ to obtain $1-\delta=8/\pi^2$.
Repetitions would be needed to obtain a desired $\delta$ which would typically be smaller.
If, for example, $\delta=1/20$ and the number of repetitions is 5, then our approach gives about an order of magnitude improvement.
}

\subsection{Block-encoding the sparse Hamiltonian}
\label{sec:block_encode}
Having constructed the sparse oracles in the previous section, we now implement a quantum circuit that block-encodes the sparse Hamiltonian.
{Block encoding is a generalisation of a linear combination of unitaries, where an operator $B$ is given by $\bra{0}U\ket{0}=B/\lambda$ for a unitary operator $U$ acting on an ancilla system as well, and $\ket{0}$ on that ancilla system.
Together with a reflection on the ancilla system, it can then be used to construct what was dubbed a ``qubitised'' or ``qubiterate'' operator.
These principles were introduced in Ref.~\cite{Low2019hamiltonian}.}
We use a similar principle as in \cite{Ubaru,Cade}, except here we are implementing the Dirac operator $B_G$ rather than the combinatorial Laplacian.
In \cite{Cade} it is shown that the Dirac operator for \emph{all} Hamming weights and unrestricted by the cliques can be written as  
\begin{equation}
\label{decomposition}
    B = \sum_{j=1}^{n} (a_j + a_j^\dagger),
\end{equation}
where $a_j$ and $a_j^\dagger$ are fermionic annihilation and creation operators on qubit $j$. Using the usual Jordan-Wigner representation that gives the Hamiltonian
\begin{equation}
    \sum_{j=1}^{n} Z_1\otimes Z_{j-1}\otimes X_j ,
\end{equation}
where the subscripts indicate the qubits that these operators act on (starting the numbering from 1).
This is the core of the implementation of the complete Hamiltonian, and can easily be implemented by first preparing an equal superposition state over $n$ basis states, then applying the controlled string of Pauli operators as in Figure 9 of \cite{BabbushPRX18}.

To understand the reason that the Pauli string encodes the matrix, note that $\partial_k$ will remove a one from some location in the bit string $x$ of Hamming weight $k+1$ and apply a sign according to the number of ones prior to that location.
That can be achieved by applying an $X$ in that location, and applying $Z$ gates on all qubits prior to that location.
We need a superposition of applying the $X$ in all locations where there are ones.
Moreover, we also want to apply $\partial_{k+1}^\dagger$ to a bit string of Hamming weight $k+1$.
This involves flipping a zero to a one (which can be done with an $X$ gate) and applying a sign according to the number of ones prior to that position.
This can again be done using a string of $Z$ gates.
Now we want a superposition of performing $X$ gates at all locations where there are ones, \emph{and} $X$ gates where there are zeros, which can be implemented by the above sum of Pauli strings.

Here we aim to block encode the matrix
\begin{equation}
B_G =
\begin{bmatrix}
    0 & \partial_{k-1}^G & 0\\
    \partial_{k-1}^{G\dagger} & 0 & \partial_{k}^G \\
    0 & \partial_{k}^{G\dagger} & 0
    \end{bmatrix}.
\end{equation}
The difference of this from the unrestricted case $B$ in \cite{Ubaru} is that it only acts on states with Hamming weight $k-1,k,k+1$, and gives zero otherwise.
Similarly, it only gives states with Hamming weight in this range.
Moreover, $B_G$ is restricted to the clique subspace.
That means it must give zero if the input state is not a clique, and must also not give any output states that are not cliques.

Next we provide a general method of constructing a qubiterate operator in cases where tests on the system state are required.
The block encoding with the tests can be described as
\begin{equation}
    \left( \ket{0}\bra{0} \otimes P \right) V \left( \ket{0}\bra{0} \otimes P \right) = \ket{0}\bra{0} \otimes B_G/\lambda ,
\end{equation}
where $P$ is a projection on the system that tests the Hamming weight and cliques.
We are adopting notation similar to Eq.~(3) in \cite{BerryNPJ18}, but replacing the identity with $P$ to indicate that a projection is needed on the target system.
We will assume $V$ is Hermitian; if it is not we can construct a Hermitian $V$ by block encoding it as $V\mapsto V\otimes \ketbra{1}{0} + V^\dagger \otimes \ketbra{0}{1}$~\cite{Harrow2009}.
Similarly, we are writing $B_G$ for the operator we aim to block encode, but this reasoning applies for a more general Hamiltonian $H$.

If $\ket{k}$ is an eigenstate of $B_G$ with energy $E_k$ and satisfying $P\ket{k}=\ket{k}$, then by definition we must have
\begin{equation}
    V \ket{0} \ket{k} = \frac{E_k}{\lambda} \ket{0} \ket{k}  + i \sqrt{1-\left| \frac{E_k}{\lambda}\right|^2}\ket{0k^\perp},
\end{equation}
where $\ket{0k^\perp}$ is defined as a state such that
\begin{equation}
    \left( \ket{0}\bra{0} \otimes P \right)\ket{0k^\perp} = 0.
\end{equation}
Then we can define the qubiterate as
\begin{equation}
    W := R V,
\end{equation}
with
\begin{equation}
    R := i\left( 2\ket{0}\bra{0} \otimes P - I \right).
\end{equation}
This is similar to that in \cite{BerryNPJ18}, except we have included the projection $P$ in the reflection operation.
That is, we are applying the tests as part of the reflection, instead of applying them in the operation $V$.

Then we obtain
\begin{equation}
    W \ket{0} \ket{k} = i \frac{E_k}{\lambda} \ket{0} \ket{k}  + \sqrt{1-\left| \frac{E_k}{\lambda}\right|^2}\ket{0k^\perp}.
\end{equation}
It is also found that
\begin{equation}\label{eq:Wperp}
W \ket{\chi k^\perp} = i \frac {E_k}\lambda\ket{\chi k^\perp} + \sqrt{1-\left|\frac{E_k}\lambda\right|^2} \ket{\chi}\ket{k}.    
\end{equation}
Here we have corrected a minor error from \cite{BerryNPJ18} where there was an $i$ appearing on the second term.
See \append{proj} for the derivation.
Then it is easy to see that
\begin{equation}
    \frac 1{\sqrt 2} \left( \ket{0} \ket{k} \pm \ket{0k^\perp} \right)
\end{equation}
are eigenstates of $W$ with eigenvalues $\pm e^{\pm i \arcsin(E_k/\lambda)}$.
This is the usual relation for the eigenvalues of the qubitised operators, showing that this approach for constructing the walk operator works.

For our implementation here, $V$ is just the controlled string of Pauli operators together with preparation of an equal superposition state.
The reflection on the target system expressed by the projector can be implemented by computing an ancilla qubit flagging that the projection is satisfied (we have the appropriate Hamming weight range and cliques), reflecting on that qubit and the control qubits, then uncomputing the test.
In some cases this can give a significant reduction in complexity over performing the test before and after $V$.
If the ancilla qubits used to compute the tests are retained, then they can be erased with Clifford gates and measurements.
For the application here that would be too costly in terms of ancilla qubits, so we incur the Toffoli cost of the test again in erasing the ancillas.

For the complexity of the implementation we have the following costs.
\begin{enumerate}
    \item Preparing an equal superposition state over $n$ basis states, which can be performed with complexity $4\lceil \log n\rceil+1$ Toffolis \cite{SandersPRQ20}, or just with Hadamards if $n$ is a power of 2.
    This cost is incurred twice.
    \item The controlled string of Pauli operators can be applied with Toffoli complexity $n-1$ using the method in Figure 9 of \cite{BabbushPRX18}.
    \item The Hamming weight can be computed with no more than $n$ Toffolis and $n$ ancilla qubits \cite{Kivlichan2020improvedfault}.
    In that case we would not need to double the complexity for the reflection because the sum can be uncomputed with Cliffords.
    We could use $2n$ Toffolis with a logarithmic number of qubits \cite{SandersPRQ20}, but in that case we would need to double the complexity for the uncomputation cost.
    \item The complexity of outputting qubits with $\binom{k-1}{2}$, $\binom{k}{2}$, or $\binom{k+1}{2}$ is no more than $3\lceil \log n\rceil$.
    At the same time we can use the QROM to output a qubit which flags if the Hamming weight is outside the range.
    These qubits can be erased with Cliffords by retaining a logarithmic number of ancilla qubits.
    \item As described above the cost of the reflection on the clique test is no more than $6|E|+2\log k$ {given a database of edges, or $4|E^C|$ given a database of missing edges, as in Lemma \ref{lem:cliques}.}
    \item Note that there is a reflection on the result of two tests, but this would correspond to a controlled-$Z$ which is a Clifford gate.
\end{enumerate}
{The overall complexity is therefore as in the following lemma.
\begin{lemma}[Block encoding complexity]\label{lem:block}
    The Toffoli complexity of block encoding $B_G/\lambda$ with the operator $B_G$ as defined in Eq.~\eqref{eq:bdef} is
    \begin{equation}
        6|E| + 2\log k + 5n + 11\log n + \mathcal{O}(1),
    \end{equation}
    when given a database of edges $E$, or
\begin{equation}
    4|E^C| + 5n + 11\log n + \mathcal{O}(1) .
\end{equation}
when given a database of missing edges $E^C$.
The value of $\lambda$ for this block encoding is approximately $n$.
\end{lemma}

These complexities come from adding the Toffoli complexities in the list above.
The value of $\lambda$ is obtained by noting that we use a linear combination of $n$ Pauli strings.}
This value will be increased very slightly because of imperfect preparation of an equal superposition state in the method of \cite{SandersPRQ20}.
That increase is normally less than one part in 1000, so will be ignored here.

{
In comparison, the approaches in Refs.~\cite{Gunn,Casper} are not very specific about the approach.
They give a factor of $n^2$ for an $n$-sparse operator, coming from the general procedure for decomposing an unstructured $n$-sparse operator into $1$-sparse operators from \cite{BerryFOCS15}.
The implementation of the operator also requires checking cliques, which is not addressed in Refs.~\cite{Gunn,Casper}.

Ignoring those issues of how the operator is applied, the major difference between the proposal here is that we use a block encoding to construct a walk operator instead of simulating evolution under the Hamiltonian.
References \cite{Gunn,Casper} invoke the results in Refs.~\cite{BerryFOCS15,QSP} for the Hamiltonian evolution for unit time.
In practice that would need to be adjusted to a time $1/\lambda$ in the simulation in order to prevent wraparound of the eigenvalues (which would cause nonzero eigenvalues to be measured as zero).
For these short times the complexity of the Hamiltonian evolution is multiplied by a logarithmic factor.

If we use the estimate of the complexity from Ref.~\cite{QSP} given in Ref.~\cite{BabbushPRA19}, then we may expect the complexity to be larger than the complexity for the block encoding we have given here by a factor of 6 for the example in Section \ref{subsec:K-graphs}.
There will be more significant factors in other examples with smaller gaps than the example in Section \ref{subsec:K-graphs}.
}

\subsection{Projection-based overlap estimation}
\label{sec:poly_approx}

In order to estimate the number of zero eigenvalues of the Hamiltonian, we project onto the zero eigenspace, then perform amplitude estimation.
The projection can be approximated using a Chebyshev polynomial approach.
First, recall that in the qubitisation the zero eigenvalue of the Hamiltonian is mapped to eigenvalues $\pm 1$ of the qubitised operator.
For the filter function on the phase $\phi$ of the eigenvalues of the walk operator, one can take
\begin{equation}
    \tilde w(\phi) = \epsilon T_{\ell}\left( \beta \cos\left( \phi \right) \right)
\end{equation}
for $\phi$ taking discrete values $\pi k/\ell$ for $k$ from $-\ell$ to $\ell$, and
where $\beta=\cosh(\tfrac 1\ell \cosh^{-1}(1/\epsilon))$.
Taking the discrete Fourier transform of these values gives the window $w_j$ such that
\begin{equation}\label{eq:tildew}
\tilde w(\phi)=  \sum_{j=-\ell}^\ell w_j e^{ij\phi}.
\end{equation}
Note that $\tilde w(\phi)$ is a function of $\cos\phi$, so $w_j=w_{-j}$.
Moreover, we have values of $j$ separated by 2.
If $\ell$ is even, then we have even powers of $\cos\phi$, and therefore only even $j$.
This means that it can be regarded as a polynomial in $e^{2i\phi}$.
We can select between the qubitised walk step and its inverse by controlling on the reflection, so implementing a linear combination of unitaries may be performed with cost $\ell$.

The peak for $\tilde w(\phi)$ will be at $0$ and $\pi$, which is what is needed because the qubitised operator produces duplicate eigenvalues at phases of $0$ and $\pi$.
The width of the operator can be found by noting that the peak is for the argument of the Chebyshev polynomial equal to $\beta$, and the width is where the argument is 1, so $\beta\cos(\phi)=1$.
This gives us
\begin{equation}\label{eq:chebwid}
    \cosh(\tfrac 1\ell \cosh^{-1}(1/\epsilon)) \cos(\phi) = 1 .
\end{equation}
The gap in the Hamiltonian is $\lambda_{\min}$, which translates to a gap in the qubitised operator of $\arcsin(\lambda_{\min}/\lambda)$.
Because the width of the peak should be equal to the gap, we can replace $\phi$ with $\arcsin(\lambda_{\min}/\lambda)$, and solving for $\ell$ gives
\begin{equation}\label{eq:filtering}
    \ell = \frac{\cosh^{-1}(1/\epsilon)}{\cosh^{-1}(1/\sqrt{1-(\lambda_{\min}/\lambda)^2})} \le \frac{\lambda}{\lambda_{\min}} \ln(2/\epsilon).
\end{equation}

{The complexity of the filter on the walk operator may therefore be given as in the following lemma.
\begin{lemma}[Eigenvalue filtering]\label{lem:filter}
    The complexity of filtering out nonzero eigenvalues of $B_G$ by a factor of $\epsilon$ is
    \begin{equation}
    \frac{n}{\lambda_{\min}} \ln(2/\epsilon) 
\end{equation}
calls to the block encoding of $B_G$, given that the gap from eigenvalue 0 is at least $\lambda_{\min}$.
\end{lemma}

This lemma is obtained by using $\lambda=n$ for the block encoding of $B_G$ in Eq.~\eqref{eq:filtering}.}
To determine the appropriate value of $\epsilon$ to take, note that $\epsilon$ tells us the multiplying factor for amplitudes for states with eigenvalues outside the gap.
The state starts with equal weighting on all eigenvalues, so ideally we should have the amplitude after filtering $\sqrt{\beta^G_{k-1}/|{\rm Cl}_k(G)|}$.
If the state amplitudes outside the gap are multiplied by $\epsilon$, then the error in the squared amplitude can be at most $\epsilon^2$.
This corresponds to an error in $\beta^G_{k-1}$ of $\epsilon^2|{\rm Cl}_k(G)|$, or a relative error of $\epsilon^2|{\rm Cl}_k(G)|/\beta^G_{k-1}$.

{In comparison, the approaches in Refs.~\cite{Lloyd2016,Gunn,Casper} are based on phase estimation.
They just give the scaling without specifying the method, which is needed to know the constant factors.
The best algorithm for phase estimation would correspond to the method we have given in Appendix \ref{append:proofest}.
That would have asymptotic complexity similar to the filtering approach given here.
To compare the complexities, we first need to note that the accuracy of the phase estimation should be half the gap.
This is because if the phase estimate has error half the gap, if the eigenvalue is zero then it could give an estimate of $\lambda_{\min}/2$, which could also correspond to an eigenvalue of $\lambda_{\min}$.
When this factor of 2 is accounted for the phase estimation approach has similar complexity to the filter.
The phase estimation has a somewhat larger complexity in the non-asymptotic regime, though by only about 60\%.

That is, there is a moderate improvement over phase estimation even if one were to use the optimal phase estimation introduced in Appendix \ref{append:proofest}.
Because Refs.~\cite{Lloyd2016,Gunn,Casper} did not use that method of phase estimation we would provide a larger improvement over those works, though the size of the improvement is ambiguous because they do not specify the method of phase estimation.
}

\subsection{Total complexity of algorithm}
\label{sec:total}

The Toffoli costs of the algorithm are as follows.
{In the following we will present the complexity when using the database of edges, then explain the modification for a database of missing edges.}
\begin{enumerate}
    \item The preparation of the Dicke state has a leading order complexity
\begin{equation}
    n\log^2 n + \mathcal{O}(n\log n)
\end{equation}    
    {Toffolis, or approximately $2kn$ for the two schemes presented in \append{Dicke}.
    Here we are including a factor of $2$ for inversion.}
    \item The cost of checking cliques is given by $6|E|+2\log k$ where we take into account the need to uncompute the result.
    \item The cost of amplitude estimation is a number of iterations of steps 1 and 2 given as
    \begin{equation}\label{eq:ampest2}
    \frac{\ln(1/\delta)}{\sqrt{r}}\frac{\pi}{4}\sqrt{\frac{\binom{n}{k}}{|{\rm Cl}_k(G)|}} .
\end{equation}
    \item The cost of amplitude amplification of the cliques is given by approximately $(\pi/4)\sqrt{{\binom{n}{k}}/{|{\rm Cl}_k(G)|}}$ of iterations of steps 1 and 2.
    \item The walk step for the qubitisation needs $6|E| + 5n + 11\log n + 2\log k + \mathcal{O}(1)$ Toffolis.
    \item For the filtering there are $\frac{n}{\lambda_{\min}} \ln(2/\epsilon)$ calls to the block encoding with costs in item 5 above.
    \item Lastly, we need to perform amplitude estimation on the entire procedure, using $\approx \log(1/\delta)/2\epsilon$ calls to the amplitude amplification in 4 and filtering in 6. 
\end{enumerate}
{In comparison Ref.~\cite{Gunn} invoked the amplitude estimation scheme of Ref.~\cite{Brassard1998}, which we improve over by about an order of magnitude.
Reference \cite{Casper} just uses classical sampling, which is quadratically more costly.}
To give the leading-order complexities, the combined cost of steps 1 and 2 is
\begin{equation}
    6|E| + n\log^2 n + \mathcal{O}(n\log n).
\end{equation}
To distinguish the $\epsilon$, $\delta$ and $r$ (relative error) needed in different steps we will use subscripts.
The cost of amplitude estimation is then approximately
\begin{equation}\label{eq:amescost}
    \frac{\ln(1/\delta_1)}{\sqrt{r_1}}\frac{\pi}{4}\sqrt{\frac{\binom{n}{k}}{|{\rm Cl}_k(G)|}} (6|E| + n\log^2 n) .
\end{equation}
This is expected to be a trivial cost in the overall algorithm, because the amplitude amplification is performed many more times.

For the remainder of the algorithm, we have an initial cost of 
\begin{equation}
    \frac {\pi}{4} \sqrt{\frac{\binom{n}{k}}{|{\rm Cl}_k(G)|}}  (6|E| + n\log^2 n)
\end{equation}
for the amplitude amplification for the initial state.
Then there is a cost for the block encoding of
\begin{equation}
    6|E| + 5n + \mathcal{O}(\log n),
\end{equation}
for each step.
Multiplying by the number of steps needed for filtering, there is a cost
\begin{equation}
    \frac{n}{\lambda_{\min}} \ln(2/\epsilon_3) [ 6|E| + 5n + \mathcal{O}(\log n)].
\end{equation}
To determine the appropriate value of $\epsilon_3$ to take, note that we are measuring a kernel of size $\beta^G_{k-1}$ as compared to an overall dimension of $|{\rm Cl}_k(G)|\gg \beta^G_{k-1}$.
The relative accuracy in the estimation of $\beta^G_{k-1}$ will therefore be about $\epsilon_3^2|{\rm Cl}_k(G)|/\beta^G_{k-1}$ as explained above at the end of \sec{poly_approx}.
If we aim for relative accuracy $r_3$, then we have complexity
\begin{equation}
    \frac{n}{2\lambda_{\min}} \ln\left(\frac{4|{\rm Cl}_k(G)|}{r_3\beta^G_{k-1}}\right) ( 6|E| + 5n + \mathcal{O}(\log n)).
\end{equation}

Lastly, the amplitude estimation on the entire procedure needs a number of repetitions
\begin{equation}
    \frac{\ln(1/\delta_2)}{2\epsilon_2}.
\end{equation}
But, this amplitude estimation is on a number of steps corresponding to a reflection requiring both the forward and reverse calculations.
That introduces a further factor of $2$, so we should use
\begin{equation}
    \frac{\ln(1/\delta_2)}{\epsilon_2}.
\end{equation}
Next, $\epsilon_2$ corresponds to an accuracy of estimating a ratio $\sqrt{\beta^G_{k-1}/|{\rm Cl}_k(G)|}$.
If we want a relative accuracy $r_2$, then the error propagation formula gives
\begin{equation}
    r_2 = \frac{\Delta \beta^G_{k-1}}{\beta^G_{k-1}} = \frac{\epsilon_2}{\beta^G_{k-1}} \left( \frac{d }{d\beta^G_{k-1}}\sqrt{\frac{\beta^G_{k-1}}{|{\rm Cl}_k(G)|}}\right)^{-1} = 2 \epsilon_2 \sqrt{\frac{|{\rm Cl}_k(G)|}{\beta^G_{k-1}}} ,
\end{equation}
where $\Delta \beta^G_{k-1}$ is uncertainty in $\beta^G_{k-1}$.
In terms of $r_2$, the number of repetitions becomes
\begin{equation}
    2\frac{\ln(1/\delta_2)}{r_2}\sqrt{\frac{|{\rm Cl}_k(G)|}{\beta^G_{k-1}}}.
\end{equation}
Applying this to the complexity required for each step, we get a complexity of approximately
\begin{equation}\label{eq:leadcomp}
   \frac{\ln(1/\delta_2)}{r_2}\sqrt{\frac{|{\rm Cl}_k(G)|}{\beta^G_{k-1}}} \left[ \frac {\pi}{2} \sqrt{\frac{\binom{n}{k}}{|{\rm Cl}_k(G)|}}  (6|E| + n\log^2 n) + \frac{n}{\lambda_{\min}} \ln\left(\frac{4|{\rm Cl}_k(G)|}{r_3\beta^G_{k-1}}\right) ( 6|E| + 5n ) \right].
\end{equation}
{This is the expression given as Eq.~\eqref{eq:method1} in Lemma~\ref{lem:complex}.
If we use the second Dicke preparation scheme, then $n\log^2 n$ would be replaced with $2kn$, and $\binom{n}{k}$ replaced with $n^k/k!$, which gives the expression in Eq.~\eqref{eq:method2} of Lemma~\ref{lem:complex}.
For this complexity the factor of $6|E|$ at the beginning is for the complexity from the database of edges; for the database of missing edges this factor would be replaced with $4|E^C|$.}

Comparing this to the amplitude estimation cost in \eq{amescost} the primary difference is the factor of
\begin{equation}
    \sqrt{\frac{|{\rm Cl}_k(G)|}{\beta^G_{k-1}}}
\end{equation}
here.
There is another difference in that the amplitude estimation cost has the factor $1/\sqrt{r_1}$ rather than $1/r_2$, so the scaling in terms of relative error is improved.
In cases where the number of cliques $|{\rm Cl}_k(G)|$ is much larger than the Betti number $\beta^G_{k-1}$, then the amplitude estimation cost is trivial.

We will have a total probability of failure $\delta=\delta_1+\delta_2$ due to the two amplitude estimations, and a total relative error $r_1+r_2+r_3$.
In order to reduce the complexity, we can use the fact that the cost of the initial amplitude estimation is much smaller, so we can take $\delta_1$ and $r_1$ smaller without much impact on the overall complexity.
The $r_3$ appears inside a logarithm, so can be taken to be smaller than $r_3$.

To give the scaling of the complexity in a simpler way, we can simply ignore the amplitude estimation complexity, and replace $\delta_2$ with $\delta$, and replace both $r_2$ and $r_3$ with $r$ (since $r_3$ can be taken as for example $r/20$ without much impact on the overall complexity).
We will also omit terms of complexity $kn$ or $n$ as compared to $|E|$.
That then gives
\begin{equation}\label{eq:toffoli_requirement}
  T(G, k, r, \delta) := 6|E|\frac{\ln(1/\delta)}{r}\sqrt{\frac{|{\rm Cl}_k(G)|}{\beta^G_{k-1}}} \left[ \frac {\pi}{2} \sqrt{\frac{\binom{n}{k}}{|{\rm Cl}_k(G)|}}   + \frac{n}{\lambda_{\min}} \ln\left(\frac{4|{\rm Cl}_k(G)|}{r\beta^G_{k-1}}\right) \right],
\end{equation}
where $T(G, k, r, \delta)$ gives the required number of Toffoli gates to estimate, with precision parameters $r, \delta$, the $(k-1)$-th order Betti number of a graph $G$ with $n$ nodes, $|E|$ edges and a Laplacian with gap $\lambda_{\min}$.

This expression for the complexity is in terms of the relative accuracy $r$.
Alternatively, if we aimed for a given absolute accuracy $\abser$, then $\abser=r\beta^G_{k-1}$, and so the expression for the complexity becomes
\begin{equation}\label{eq:toffoli_abser}
  T(G, k, r, \delta) := 6|E|\frac{\ln(1/\delta)}{\abser}\sqrt{{|{\rm Cl}_k(G)|}{\beta^G_{k-1}}} \left[ \frac {\pi}{2} \sqrt{\frac{\binom{n}{k}}{|{\rm Cl}_k(G)|}}   + \frac{n}{\lambda_{\min}} \ln\left(\frac{4|{\rm Cl}_k(G)|}{\abser}\right) \right].
\end{equation}

We now discuss the complexity of just the first term in the square brackets, which corresponds to the state preparation cost rather than the filtering cost.
That cost will be dominant if the gap is large, though it must be emphasised that the gap will be small in many cases.
This first term for the cost gives
\begin{equation}\label{eq:simplecost}
  T(G, k, r, \delta) = 3\pi |E|\frac{\ln(1/\delta)}{r}\sqrt{\frac{\binom{n}{k}}{\beta^G_{k-1}}} .
\end{equation}
If we are aiming for a given absolute accuracy $\abser$ in $\beta^G_{k-1}$, then the complexity would be
\begin{equation}
  T(G, k, r, \delta) = 3\pi |E|\frac{\ln(1/\delta)}{\abser}\sqrt{{\binom{n}{k}}{\beta^G_{k-1}}} .
\end{equation}
The complexity is now larger for large Betti number $\beta^G_{k-1}$.
The reason for this is that the amplitude estimation is estimating the \emph{square root} of $\beta^G_{k-1}$.
The square root has a small derivative for large values of $\beta^G_{k-1}$, making it more difficult to estimate the Betti number with small absolute error.
Again, note that the last three expressions above are only for the state preparation cost, without the filtering cost.

To compare to the complexity of classical approaches, an exact diagonalisation approach would tend to scale as 
${\binom{n}{k}}^2$, whereas approximate schemes scale as $\binom{n}{k}$.
Thus the quantum algorithm would give approximately a square-root speedup over these classical algorithms if $\beta^G_{k-1}$ is on the order of a constant and one is targeting a fixed relative error estimate.
On the other hand, for graphs with large $\beta^G_{k-1}$, a speedup that is greater than a square root can be obtained for fixed relative error estimates.

\section{Regimes for quantum speed-up}
\label{sec:regimes_qspeedup}

In this section, we ask if there exist regimes where our quantum algorithm offers a significant speedup over the best classical algorithms. The aim is to compute to relative error the $(k-1)^{\text{th}}$ Betti number of the clique complex of a graph $G$. Say $G$ has $n$ nodes, $|E|$ edges, $r$ is the desired multiplicative error, and $\lambda_{\min}$ is the spectral gap of the combinatorial Laplacian $\Delta_{k-1}^G = \partial_{k-1}^{G\dagger} \partial_{k-1}^{G} + \partial_{k}^{G}\partial_{k}^{G\dagger}$. To simplify the arguments, we will represent the quantum complexity of this problem as
\begin{equation}
T_q = \widetilde{\mathcal{O}}\left(\frac{n \left| E \right|}{r \, \lambda_{\min}} \sqrt{\frac{1}{\beta_{k-1}} \binom{n}{k}}\right) .
\end{equation}
Comparing to Eq.~\eq{toffoli_requirement}, this will asymptotically upper bound both terms up to log factors.

For a rough estimate of the cost of computing the Betti number classically, one could use $|{\rm Cl}_k(G)|$ (i.e., the number of $k$-cliques) or $\binom{n}{k}$.
The reason is that classical algorithms typically start by constructing a list of $k$-cliques, and afterwards compute the nullity of the combinatorial Laplacian or boundary operator.
The cost of this second step (i.e., estimating the nullity of the combinatorial Laplacian or boundary operator), scales at best linearly in size of the matrix $|{\rm Cl}_k(G)|$~\cite{ubaru:num_rank}.
On the other hand, the first step (i.e., listing all $k$-cliques) can be done using a brute force search at cost $\binom{n}{k}$.
There are more efficient algorithms for listing cliques, though the complexity tends to be dependent on the properties of the graph.
However, $|{\rm Cl}_k(G)|$ always lower bounds the cost of listing all the $k$-cliques.
Therefore, $|{\rm Cl}_k(G)|$ and $\binom{n}{k}$ can be considered to be lower and upper bounds on the scaling of the classical complexity, respectively.
In conclusion, the best classical algorithms for this problem have scaling lower bounded by
\begin{equation}
T_c = \Omega\left(|{\rm Cl}_k(G)|\right) .
\end{equation}
Recall ${\rm Cl}_k(G)$ is the set of cliques of size $k$, which form the $(k-1)$-simplices of the simplicial complex.
Classical algorithms have extra factors in the complexity, such as $1/r^2$ dependence on the required precision, that introduce orders of magnitude over this lower bound.
Another category of classical algorithms to compare to are those tailored for the specific regime where quantum algorithms are most efficient. 
Notable examples include the algorithm developed in Section~\ref{subsec:classical}, and the algorithm of Apers et al.~\cite{Apers22}.
We defer their comparison to Section~\ref{subsec:classical}, where we will highlight regimes where the examples introduced in Section~\ref{subsec:K-graphs} continue to exhibit a superpolynomial speedup.

The quantum algorithm will offer a speedup on instances where $\beta_{k-1}$ is large, and where $\lambda_{\min}$ is not too small.
We can remove the dependence on $\lambda_{\min}$ if we instead focus on computing an approximate Betti number, in the following sense.

\begin{definition}
The $\delta$-approximate $k^{\text{th}}$ Betti number is $B^\delta_k = \dim{\{v \in \mathcal{H}^G_{k-1} : \frac{v^\dag \Delta_k v}{v^\dag v} \leq \delta\}}$. Note $B^0_k = \beta_k$, and in general $B^\delta_k \geq \beta_k$.
\end{definition}

The same quantum algorithm computes $B^\delta_{k-1}$ to relative error with cost
\begin{equation}
T_q = \widetilde{\mathcal{O}}\left(\frac{n \left| E \right|}{r \, \delta} \sqrt{\frac{1}{B^\delta_{k-1}} \binom{n}{k}}\right) .
\end{equation}

\subsection{A family of graphs with large Betti numbers and large spectral gaps} \label{subsec:K-graphs}

In this section, we will construct a family of graphs with all the necessary parameters to enable a large quantum speedup. 
Our objective here is to demonstrate the existence of instances that fulfill all the prerequisites for the quantum algorithm to achieve a superpolynomial quantum speedup.

Let $K(m,k)$ be the $k$-partite complete graph, where each partition contains $m$ vertices. That is, $K(m,k)$ consists of $k$ clusters, each with $m$ vertices; there are no edges within clusters, but all edges between clusters are included. Note $K(m,1)$ is a collection of $m$ points with no edges.
$K(m, k)$ gives a useful example of a clique complex with a high Betti number \cite{adamaszek:extremal_betti}. It also has a Laplacian with a large spectral gap.

\begin{figure}[tbh]
\centering
\label{fig:kpartite}
  \includegraphics[width=0.3\linewidth]{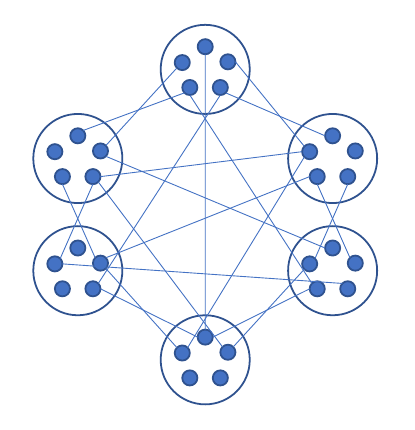}
  \caption{The graph $K(5, 6).$}
  \label{partite_complete}
\end{figure}

\begin{proposition} \label{prop:Betti_number}
The $(k-1)^{\text{th}}$ Betti number of (the clique complex of) $K(m,k)$ is
\begin{equation}
\beta_{k-1} = (m-1)^k .
\end{equation}
\end{proposition}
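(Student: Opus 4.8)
The plan is to exploit the rigid combinatorial structure of $K(m,k)$. Since there are no edges inside a cluster, any clique contains at most one vertex from each of the $k$ clusters, and conversely any transversal picking one vertex from each of a subset of clusters is a clique. Hence the $k$-cliques are exactly the $m^k$ transversals $(v_1,\dots,v_k)$ with $v_p$ in cluster $p$, these are the top-dimensional simplices, and there are no $(k+1)$-cliques, i.e.\ $\mathcal{H}_{k+1}^G=0$ and $\mathrm{im}(\partial_k^G)=0$. Therefore the top homology is just the cycle space, $\beta_{k-1}^G=\dim\ker\!\left(\partial_{k-1}^G|_{\mathcal{H}_k^G}\right)$, and I only have to count $(k-1)$-cycles.

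Next I would use the transversal structure to write $\mathcal{H}_k^G\cong\bigotimes_{p=1}^k \C^m$, one factor per cluster, with $\ket{v_1\cdots v_k}$ the corresponding product basis vector. Labelling the $mk$ vertices cluster by cluster (which is harmless, as Betti numbers are independent of the vertex ordering), the vertex chosen in cluster $p$ is always the $(p-1)$-th one in the sorted bit string, so the boundary map of \eq{bndry} simplifies to
\begin{equation}
\partial_{k-1}^G\ket{v_1\cdots v_k}=\sum_{p=1}^{k}(-1)^{p-1}\ket{v_1\cdots\widehat{v_p}\cdots v_k},
\end{equation}
where the hat denotes omission. Grouping boundary terms by which cluster is emptied, the coefficient of a given $(k-2)$-simplex omitting cluster $p$ is $(-1)^{p-1}$ times the augmentation $\epsilon_p$ applied in the $p$-th tensor slot, where $\epsilon_p\!\left(\sum_i a_i\ket{v_p^i}\right)=\sum_i a_i$.

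Consequently a chain is a $(k-1)$-cycle precisely when its coefficient tensor is annihilated by $\mathrm{id}^{\otimes(p-1)}\otimes\epsilon_p\otimes\mathrm{id}^{\otimes(k-p)}$ for every $p$. The simultaneous kernel of these $k$ commuting projections is $\bigotimes_{p=1}^k\ker\epsilon_p$, and $\ker\epsilon_p=\{\sum_i a_i\ket{v_p^i}:\sum_i a_i=0\}$ has dimension $m-1$. This yields $\dim\ker(\partial_{k-1}^G)=(m-1)^k$ and hence $\beta_{k-1}^G=(m-1)^k$. Note that including the Hamming-weight-$0$ level makes $\partial_0^G$ the augmentation, so the same computation correctly gives $\beta_0=m-1$ for the edge case $K(m,1)$.

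The main obstacle is the sign and indexing bookkeeping that makes the boundary decouple cluster by cluster; the cluster-contiguous labelling resolves it cleanly, since then the omitted-vertex sign $(-1)^{p-1}$ depends only on $p$ and factors out of the augmentation in slot $p$. A useful independent cross-check, which also shows all lower reduced Betti numbers vanish, is to identify $\mathrm{Cl}(K(m,k))$ as the $k$-fold join of $k$ copies of the $m$-point set; by the K\"unneth-type join formula its reduced homology is concentrated in degree $k-1$, and the reduced Euler characteristic $\tilde\chi=-1+\sum_{j=0}^{k-1}(-1)^j\binom{k}{j+1}m^{j+1}=(-1)^{k-1}(m-1)^k$ then confirms $\beta_{k-1}^G=(m-1)^k$.
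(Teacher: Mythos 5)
Your proof is correct, but it takes a genuinely different route from the paper's. The paper observes that the clique complex of $K(m,k)$ is the join $K(m,k-1) \ast K(m,1)$ and invokes the K\"unneth formula for joins: since the reduced Betti numbers of $K(m,1)$ ($m$ isolated points) are $(m-1,0,0,\dots)$, induction on $k$ immediately gives $\beta_{k-1}=(m-1)^k$. You instead compute the kernel of the top boundary map directly: you identify $\mathcal{H}_k^G\cong(\C^m)^{\otimes k}$ via the transversal structure, note that the absence of $(k+1)$-cliques forces $\mathrm{im}(\partial_k^G)=0$, show that with cluster-contiguous labelling the boundary splits into signed slot-wise augmentations landing in linearly independent summands of $\mathcal{H}_{k-1}^G$ indexed by the omitted cluster (so a chain is a cycle iff each slot-wise augmentation kills it), and conclude the cycle space is $\bigotimes_p \ker\epsilon_p$, of dimension $(m-1)^k$. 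Both key steps --- the direct-sum decomposition of the codomain and the identification of the simultaneous kernel with the tensor product of kernels --- are sound (though calling the maps $E_p$ ``commuting projections'' is loose, since they map between different spaces; the kernel identity is what matters and it holds). What each approach buys: yours is elementary and self-contained, working straight from the paper's definitions of $\mathcal{H}_k^G$ and $\partial_{k-1}^G$ with no imported topology, and it makes explicit the reduced-homology convention needed for the $k=1$ edge case, which the paper relegates to its appendix; the paper's join/K\"unneth argument is shorter, shows in one stroke that homology is concentrated in top degree, and reuses the identical join structure to obtain the spectral gap in Proposition~\ref{prop:spectral_gap} via the product rule for Laplacians of joins --- a payoff your tensor factorization could also deliver but which you do not pursue. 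Your closing cross-check (join structure plus reduced Euler characteristic) is essentially the paper's argument in disguise, so you have in effect supplied both proofs.
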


\begin{proposition} \label{prop:spectral_gap}
The combinatorial Laplacian $\Delta_{k-1} = \partial_{k-1}^{\dagger} \partial_{k-1} + \partial_{k}\partial_{k}^{\dagger}$ of (the clique complex of) $K(m,k)$ has spectral gap
\begin{equation}
\lambda_{\min} = m .
\end{equation}
\end{proposition}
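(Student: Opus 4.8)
The plan is to compute the \emph{entire} spectrum of $\Delta_{k-1}$ for $G=K(m,k)$ in closed form by exploiting the product structure of the complex; the gap $\lambda_{\min}=m$ then falls out immediately, and as a bonus the kernel dimension reproduces \prop{Betti_number}.

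First I would set up the top chain space. Since there are no edges inside a cluster, every clique of $K(m,k)$ uses at most one vertex per cluster, so the maximal cliques have size exactly $k$ and there are no $(k+1)$-cliques. Hence $\mathcal{H}_{k+1}=0$, the ``up'' term $\partial_k\partial_k^{\dagger}$ vanishes, and on $\mathcal{H}_k$ the Laplacian collapses to the down-Laplacian
\begin{equation}
\Delta_{k-1}=\partial_{k-1}^{\dagger}\partial_{k-1}.
\end{equation}
Labelling each $k$-clique by its chosen vertex in each cluster identifies $\mathcal{H}_k\cong(\mathbb{C}^m)^{\otimes k}$, with orthonormal basis $e_{v_1}\otimes\cdots\otimes e_{v_k}$, $v_j\in\{1,\dots,m\}$.

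Next I would express $\partial_{k-1}$ in this tensor language. Dropping the vertex of cluster $i$ from a top simplex gives a $(k-1)$-clique omitting cluster $i$; such cliques span the subspace $V_i\cong\bigotimes_{j\neq i}\mathbb{C}^m$, and the $V_i$ are mutually orthogonal because a $(k-1)$-clique omits exactly one cluster. Writing $\partial_{k-1}=\sum_i\partial^{(i)}$ with $\partial^{(i)}$ the component landing in $V_i$, orthogonality kills all cross terms and yields $\Delta_{k-1}=\sum_{i=1}^k\partial^{(i)\dagger}\partial^{(i)}$. The one point requiring care—and the part I expect to be the main obstacle—is the orientation sign of $\partial^{(i)}$. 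If I order the $mk$ global vertices cluster by cluster, then in \emph{every} top simplex the vertex of cluster $i$ occupies position $i-1$ of the sorted bit string, so by the sign rule in \eq{bndry} the boundary sign for dropping cluster $i$ is the constant $(-1)^{i-1}$, independent of the chosen vertices. This constancy is essential: if the sign varied with the simplex, each $\partial^{(i)\dagger}\partial^{(i)}$ would only be \emph{conjugate} to a fixed operator by a diagonal sign matrix, and the mismatched conjugations could alter the spectrum of the sum. With a constant sign, $\partial^{(i)}$ is $(-1)^{i-1}$ times the augmentation $\epsilon\colon\mathbb{C}^m\to\mathbb{C}$, $\epsilon(e_v)=1$, applied to the $i$-th factor, so $\partial^{(i)\dagger}\partial^{(i)}=I^{\otimes(i-1)}\otimes(\epsilon^{\dagger}\epsilon)\otimes I^{\otimes(k-i)}$.

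Finally I would diagonalize. Since $\epsilon^{\dagger}\epsilon=J_m$, the $m\times m$ all-ones matrix, the two previous observations combine to
\begin{equation}
\Delta_{k-1}=\sum_{i=1}^k I^{\otimes(i-1)}\otimes J_m\otimes I^{\otimes(k-i)}.
\end{equation}
The summands commute, and $J_m$ has eigenvalue $m$ on the all-ones vector and $0$ on its orthogonal complement. A product eigenvector therefore has eigenvalue $m\,s$, where $s$ counts the factors in the all-ones state, so the spectrum is exactly $\{0,m,2m,\dots,km\}$. The kernel ($s=0$) has dimension $(m-1)^k$, recovering \prop{Betti_number}, and the smallest nonzero eigenvalue ($s=1$) is $\lambda_{\min}=m$, as claimed.
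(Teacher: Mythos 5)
Your proof is correct, but it takes a genuinely different route from the paper's. The paper proves \prop{spectral_gap} abstractly: it notes $K(m,k) = K(m,k-1) \ast K(m,1)$, shows that the combinatorial Laplacian of a join acts as a Kronecker sum, $\Delta^{X\ast Y}(\sigma\otimes\tau)=(\Delta^X\sigma)\otimes\tau+\sigma\otimes(\Delta^Y\tau)$ (Lemma~\ref{Laplacian_lemma}, proved via the graded Leibniz rule for $\partial$), concludes that spectra add under joins (Corollary~\ref{Laplacian_corollary}), and then inducts on $k$ from the base case $\spec\Delta_0^{K(m,1)}=\{0,m\}$, computed in reduced homology where $\partial_0$ is the augmentation. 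Your argument is the unrolled, self-contained version of that induction: you identify the top chain space with $(\mathbb{C}^m)^{\otimes k}$, split $\partial_{k-1}$ into the $k$ components with mutually orthogonal images, verify the sign constancy that legitimizes the tensor factorization (this is indeed the step needing care, and your diagnosis of why a simplex-dependent sign would break the argument — mismatched diagonal conjugations in the sum — is right), and arrive at the explicit commuting sum $\Delta_{k-1}=\sum_{i=1}^{k} I^{\otimes(i-1)}\otimes J_m\otimes I^{\otimes(k-i)}$, whose diagonalization is immediate. Note that your $J_m=\epsilon^\dagger\epsilon$ is precisely the paper's base-case Laplacian $\Delta_0^{K(m,1)}$, so both proofs exploit the identical join structure; the difference is in packaging. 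What your version buys: it avoids the K\"unneth and reduced-homology machinery entirely, and it yields the full spectrum with multiplicities, $\binom{k}{s}(m-1)^{k-s}$ for eigenvalue $ms$, hence the kernel dimension $(m-1)^k$ of \prop{Betti_number} for free. What the paper's version buys: reusability --- the same join lemma and K\"unneth formula are applied verbatim to prove \prop{Betti_number_Rips} and \prop{spectral_gap_Rips}, where the joined factors are no longer all identical and a bare-hands tensor computation would be clumsier.
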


We prove these in \append{proof_betti_density} using techniques from simplicial homology. A further useful fact is that
\begin{equation} \label{eq:clique_k}
|{\rm Cl}_k(K(m,k))| = m^k .
\end{equation}
Standard classical approaches need to at least store a vector of this length, so we can give a classical complexity as
\begin{equation}
T_c \sim e^{k \ln m} .
\end{equation}
As a first approximation for the quantum cost, we use the formula in \eq{simplecost} and consider just the square root factor and $|E|$.
{In fact, for this example the large number of edges means that it is better to use the list of missing edges to give complexity proportional to $|E^C|$.}
Bearing in mind that $n=mk$, Stirling's approximation gives
\begin{align}
\binom{n}{k} &\simeq {\frac{1}{\sqrt{2\pi}} \sqrt{\frac{mk}{(n-k)k}} \frac{(mk)^n}{(n-k)^{n-k} k^k}}
 \nn
&= \frac{1}{\sqrt{2\pi}} \sqrt{\frac{m}{n - \frac{n}{m}}} \left(\frac{m}{(m-1)^{1 - \frac{1}{m}}}\right)^n \nn
&\leq \left(\frac{m}{(m-1)^{1 - \frac{1}{m}}}\right)^n \, ,
\end{align}
{where in the first line we have omitted the exponentials in Stirling's approximation because they cancel.}
Proposition \ref{prop:Betti_number} gives $\beta_{k-1} = (m-1)^{n/m}$, giving a quantum complexity scaling as
\begin{equation}
T_q \sim {|E^C| \frac{n}{m} \left(\frac{m}{m-1}\right)^{n/2} \leq n^2 e^{k/2}.}
\end{equation}

Therefore, for constant $m$, there is a polynomial speedup by a $2\ln m$ root {(ignoring the $n^2$ factor).}
Alternatively, taking $k$ constant, the above formulae give
\begin{align}
    T_c &= \mathcal{O}(n^k), \\
    T_q &= \mathcal{O}(n^2) .
\end{align}
Then there is a polynomial speedup by a $k/2$ root.
To obtain a superpolynomial speedup, $m$ can be taken to increase close to linear in $n$, but $k$ can be taken to also increase with $n$.
Close to the best result is obtained for $k = c\ln^2n$ with some constant $c$.
Then the logs of the complexities are approximately
\begin{align}
    \ln T_c &\sim c \ln^3 n, \\
    \ln T_q &\sim 2\ln n + (c/2)\ln^2 n.
\end{align}
That implies a speedup by a $2\ln n$ root, which is superpolynomial.

This is still not an exponential speedup, but as far as the graph is concerned this is the best speedup that could be obtained from this type of approach.
This is because, with $k$ constant, the quantum complexity ignoring the {$|E^C|$} factor is $\mathcal{O}(1)$.
The Betti number is already scaling the same as $\binom{n}{k}$, but the overhead from {$|E^C|$} means that the speedup is not exponential.

Next we provide numerical results for the Toffoli complexity as a function of $n$ and $k$.
For these calculations we have made a number of adjustments to our simplified expressions in order to provide more accurate results.
In particular we compute the integral of the Kaiser window rather than using the asymptotic expression, as well as including the Dicke state preparation cost and the initial amplitude estimation cost for the number of steps needed for the state preparation.
{We are also using the second Dicke preparation scheme from \append{Dicke} which provides a smaller complexity for this example.}

The results are as given in Fig.~\ref{Toffoli_counts} as a function of $n$ for a range of values of $k$.
It can be seen that the cost of the quantum algorithm for a given $k$ scales approximately as $n^2$, with the cost scaling primarily coming from the number of edges in the graph.
The classical cost given as the number of k-cliques or $\binom{n}{k}$ has similar scaling, which is considerably worse than for the quantum algorithm, and is much worse for larger values of $k$, as expected from the analysis above.

For the example of $n=256,k=16$ the quantum cost is approximately {$6.8$ billion} Toffolis, which is comparable to gate counts for classically intractable instances of quantum chemistry \cite{Lee2020}.
In contrast, the number of cliques is about $2\times 10^{19}$, and $\binom{n}{k}\approx 10^{25}$.
These numbers are sufficiently large that it should be classically intractable {for any method that scales as $|{\rm Cl}_k(G)|$.}
For example, just storing the vector would be beyond the storage capacity of supercomputers.
{Potentially, more advanced classical algorithms that do not need to store the vector could be tractable.}

{To compare to the scheme as presented in Refs.~\cite{Gunn,Casper}, We improve by about two orders of magnitude for this example by using a more efficient Dicke state preparation scheme.
We have a further order of magnitude improvement in complexity by using optimal quantum amplitude estimation in the final step.
That gives at least three orders of magnitude improvement, which is the difference between a quantum computer running for a day versus years.
The total improvement is unclear because some parts of the algorithm were not specified in Refs.~\cite{Gunn,Casper}.

We obtain about another order of magnitude improvement by separately performing an amplitude estimation to avoid needing to repeatedly perform it in the initial state preparation.
The question of how this would be performed was not addressed in Refs.~\cite{Gunn,Casper}.
There is a more modest improvement in using the optimal filter as compared to optimal phase estimation.
But, the optimal phase estimation is a procedure introduced here, and there would be larger improvement over less efficient phase estimation.
The type of phase estimation was not addressed in prior work.
We also provide an improved clique checking procedure, but the model of the graph considered in Refs.~\cite{Gunn,Casper} is different, making a direct comparison of complexities impossible.
}

\begin{figure}[t!]
\centering
  \includegraphics[width=0.5\linewidth]{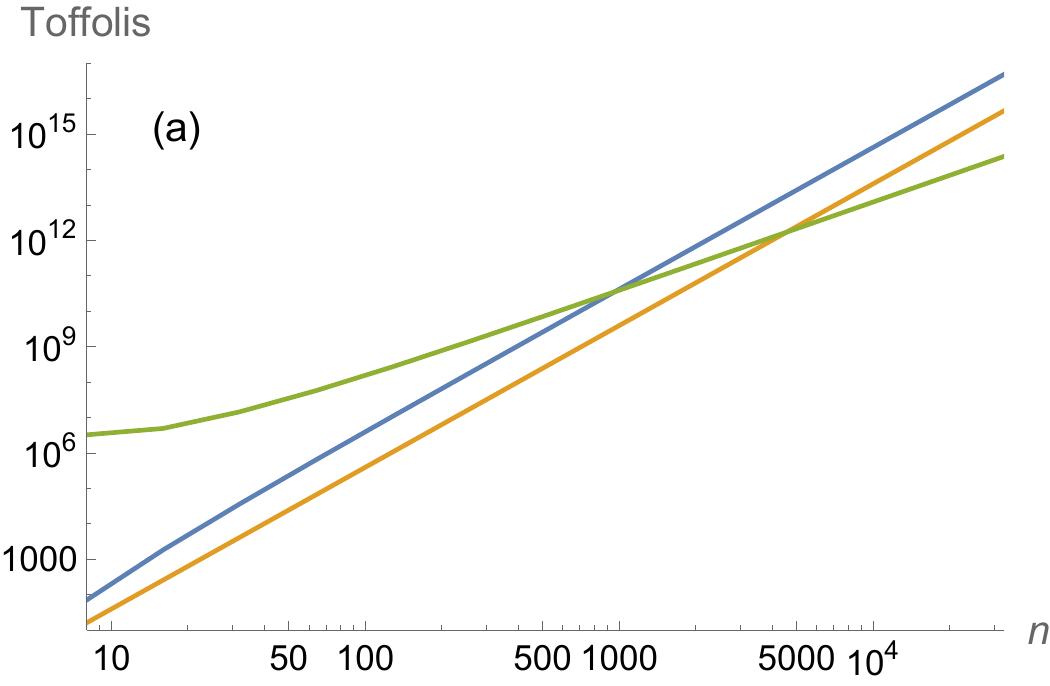}\includegraphics[width=0.5\linewidth]{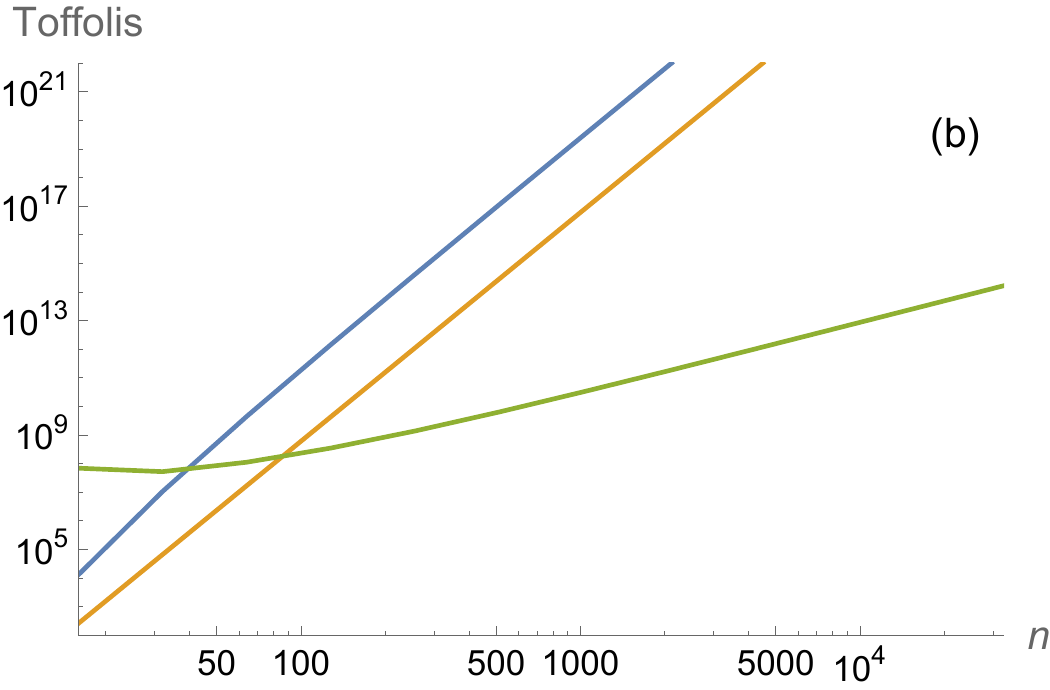}
  \includegraphics[width=0.5\linewidth]{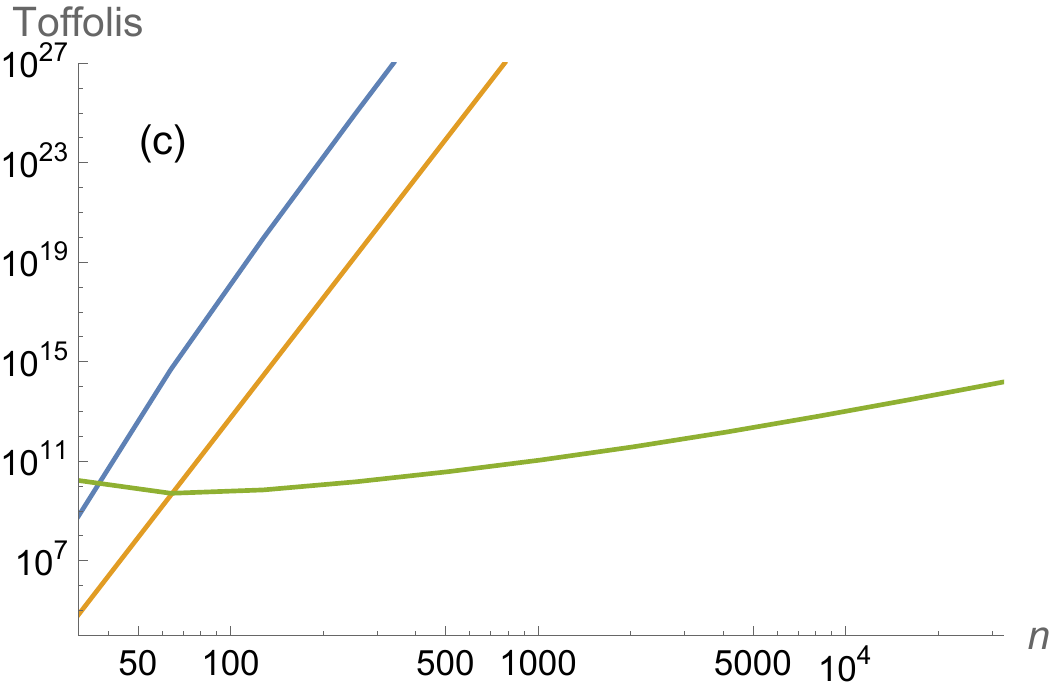}\includegraphics[width=0.5\linewidth]{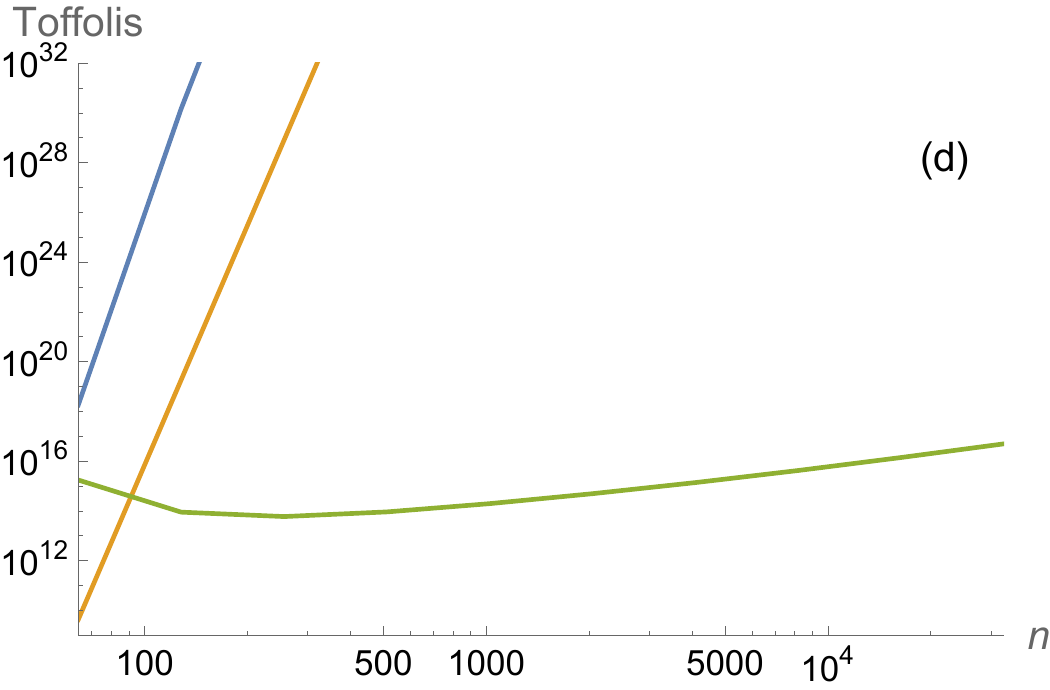}
  \caption{The Toffoli counts for the quantum algorithm for the Betti number of $K(m,k)$ as a function of $n$ for $k=4$ (a), $k=8$ (b), $k=16$ (c), and $k=32$ (d).
  The lines are green for the quantum complexity, blue for $\binom{n}{k}$, and orange for the number of cliques $m^k$.
  The values of $\delta$ and $r$ are held constant at $1/20$.
  For the relative precision required $r$, the value of $r_2$ (filtering error) is taken to be $r/20$, and $r_3$ (the amplitude estimation error) is taken to be $r\times 0.95$.}
  \label{Toffoli_counts}
\end{figure}

\subsection{\edr graphs}

The family of graphs in \subsec{K-graphs} is specifically constructed to have high Betti number and large spectral gap. One might wonder what speedups are \emph{generically} possible. To shed light on this question, we examine the \edr family of random graphs.

The \edr random graph $G(n,p)$ has $n$ vertices, and each of the $\binom{n}{2}$ edges is present i.i.d.\ with probability $p$. In \cite{Kahle}, the following theorem is established.

\begin{theorem}
Let $p = n^\alpha$. If $-1/k < \alpha < -1/(k+1)$, then
\begin{equation}
\frac{\beta_k}{\binom{n}{k+1} p^{\binom{k+1}{2}}} \rightarrow 1 \quad \text{almost surely}
\end{equation}
On the other hand, if $\alpha < -1/k$ or $\alpha > -1/(2k+1)$, then $\beta_k \rightarrow 0$ almost surely.
\end{theorem}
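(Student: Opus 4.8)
The plan is to relate the Betti number $\beta_k$ of the random flag complex $X(G(n,p))$ to the face counts of the complex, and then to feed in concentration estimates (in the middle regime) together with topological vanishing arguments (in the two extreme regimes). Write $f_j$ for the number of $j$-dimensional faces, i.e.\ the number of $(j+1)$-cliques, so that $\mathbb{E}[f_j] = \binom{n}{j+1}p^{\binom{j+1}{2}}$; in particular the normalizing quantity in the statement is exactly $\mathbb{E}[f_k]$. The first, purely deterministic, step is the rank--nullity sandwich obtained from the chain complex as set up in \sec{alg_review}. Since $\beta_k = \dim\ker\partial_k - \operatorname{rank}\partial_{k+1}$ and $\dim\ker\partial_k = f_k - \operatorname{rank}\partial_k$, using $\operatorname{rank}\partial_k\le f_{k-1}$ and $\operatorname{rank}\partial_{k+1}\le f_{k+1}$ gives
\[
f_k - f_{k-1} - f_{k+1} \;\le\; \beta_k \;\le\; f_k .
\]
Thus it suffices to control the three neighbouring face counts.

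Next I would compute the expectation exponents $g(j) := \log_n\mathbb{E}[f_j] = (j+1) + \alpha\binom{j+1}{2}$ and record the identity $g(j) - g(j-1) = 1 + \alpha j$, so that $\mathbb{E}[f_j]$ increases across $j-1\to j$ precisely when $\alpha > -1/j$. This single computation explains the two inner thresholds: in the window $-1/k < \alpha < -1/(k+1)$ the exponent $g(k)$ strictly dominates both $g(k-1)$ and $g(k+1)$, so $\mathbb{E}[f_{k\pm1}] = o(\mathbb{E}[f_k])$, while $g(k) = (k+1)(1+\alpha k/2) > 0$ forces $\mathbb{E}[f_k]\to\infty$. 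To upgrade this to the claimed almost-sure statement I would use the second moment method: $K_{k+1}$ is strictly balanced, so $f_k/\mathbb{E}[f_k]\to 1$ in probability as soon as $\mathbb{E}[f_k]\to\infty$, and Markov's inequality gives $f_{k\pm1} = O(\mathbb{E}[f_{k\pm1}]) = o(\mathbb{E}[f_k])$ asymptotically almost surely (a.a.s.). Substituting into the sandwich yields $\beta_k = (1+o(1))\,\mathbb{E}[f_k]$, which is the first claim.

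For the two vanishing regimes the counting sandwich is useless (its lower bound turns negative), and genuinely topological input is needed. In the sparse regime $\alpha < -1/k$ the idea is that the smallest possible support of a nontrivial $k$-cycle in a flag complex is the $1$-skeleton of the $(k+1)$-dimensional cross-polytope, i.e.\ the complete multipartite graph $K_{2,\dots,2}$ on $2k+2$ vertices with $2k(k+1)$ edges; its appearance threshold is exactly $n^{-v/e} = n^{-1/k}$. Below this threshold a first-moment computation shows $\mathbb{E}[\#\,K_{2,\dots,2}] = \Theta\big(n^{(2k+2)(1+\alpha k)}\big)\to 0$, so a.a.s.\ no such obstruction is present, and a discrete-Morse structural lemma (every nonzero class forces such a minimal empty sphere) then gives $H_k = 0$. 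In the dense regime $\alpha > -1/(2k+1)$ one instead shows that the complex becomes $k$-connected, which I would attack either with the nerve lemma applied to a cover by vertex stars, or with Garland's method, by proving that the links of all low-dimensional faces acquire large spectral gap and that this connectivity propagates to the vanishing of $H_k$.

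The main obstacle is the dense vanishing regime. The middle regime is essentially bookkeeping (rank--nullity plus standard strictly-balanced subgraph concentration), and the sparse collapse follows cleanly once the minimal cross-polytope obstruction is shown absent by a first moment. By contrast, establishing $k$-connectivity up to the exponent $-1/(2k+1)$ requires controlling the topology of \emph{all} the links simultaneously, and extracting the precise constant in the exponent (rather than the weaker $-1/(k+1)$ that a naive single-link Garland estimate would yield) is where the real difficulty lies.
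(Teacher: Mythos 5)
You should first note that the paper does not prove this statement at all: it is quoted verbatim from the literature (``In \cite{Kahle}, the following theorem is established''), so the only meaningful comparison is against Kahle's own proofs (the middle window and sparse vanishing are from the 2009 paper on random clique complexes; the dense vanishing at exponent $-1/(2k+1)$ is the hard result of the later Annals paper). Your treatment of the middle regime $-1/k<\alpha<-1/(k+1)$ is correct and is essentially Kahle's argument: the Morse/rank--nullity sandwich $f_k-f_{k-1}-f_{k+1}\le\beta_k\le f_k$, the exponent bookkeeping $g(j)-g(j-1)=1+\alpha j$, and strictly-balanced second-moment concentration. (Minor caveat: Chebyshev gives convergence in probability; upgrading to the stated almost-sure convergence needs summable tail bounds, e.g.\ higher moments or Azuma, but this is routine.)

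The genuine gap is in your sparse regime $\alpha<-1/k$. The structural lemma you lean on --- ``every nonzero class forces such a minimal empty sphere,'' i.e.\ an induced copy of the cross-polytope $1$-skeleton $K_{2,\dots,2}$ --- is false. The cross-polytope is the \emph{vertex-minimal} flag complex carrying nontrivial $H_k$, but a nontrivial class need not contain one: already for $k=1$, the flag complex of an induced $5$-cycle has $H_1\neq 0$ and contains no induced $C_4=K_{2,2}$; in general, any flag triangulation of $S^k$ other than the cross-polytope gives a counterexample. So ruling out the single minimal obstruction by a first moment does not yield $H_k=0$; one must control \emph{all} possible supports of nontrivial cycles. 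Moreover, the natural repair --- a union bound over all candidate supports, using the fact that a minimal $k$-cycle support has minimum degree $\ge 2k$ (hence edge density $e/v\ge k$) --- provably fails in this regime for $k\ge 2$: since $\alpha<-1/k$ still gives $np=n^{1+\alpha}\to\infty$, the random graph itself has minimum degree far exceeding $2k$ and contains an abundance of $(k+1)$-cliques ($\mathbb{E}[f_k]=n^{(k+1)(1+\alpha k/2)}\to\infty$ here), so ``dense enough subgraphs exist'' and the first moment over them diverges. This is exactly why sparse vanishing for $k\ge2$ is a real theorem in \cite{Kahle} rather than a threshold-for-one-subgraph computation; your sketch does not bridge this.

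Your dense regime $\alpha>-1/(2k+1)$ correctly identifies Garland's spectral method on links as the right tool (this is indeed how the sharp exponent is obtained in the literature, and a naive single-application estimate does give only a weaker exponent, as you say), but what you have written is a plan, not a proof: the entire content of that result is the uniform control of the spectral gaps of all links together with the local-to-global step, which you defer. So as it stands the proposal establishes only the first claim of the theorem; both vanishing regimes remain open in your write-up, one because the key lemma invoked is false, the other because the argument is not carried out.
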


Taking $p = n^{-1/(k+\frac{1}{2})}$ gives $\beta_k \sim \binom{n}{k+1} n^{-k/2}$ almost surely. Ignoring the factor $n|E|/\lambda_{\min}$, our quantum algorithm can compute the $k^{\text{th}}$ Betti number in time scaling as $T_q \sim n^{k/4+2}$ for constant $k$.
For large $k$, where the $+2$ coming from {$|E^C|$} is negligible, this is approximately a quartic speedup.

\subsection{Rips complexes} \label{sec:Rips_complex_speedup}

One of the main applications of topological data analysis is to Rips complexes induced by finite-dimensional data in $\mathbb{R}^d$. This is another shortcoming of the graph family from \subsec{K-graphs} -- they are defined as abstract graphs, rather than being induced from finite-dimensional data. But are such speedups possible for Rips complexes? Unfortunately, there are results which prevent these large speedups.

It is shown in \cite{goff2009extremal} that, for any fixed $k$ and $d$
\begin{equation}
\max_{S \subset \mathbb{R}^d : |S|=n} \frac{\beta_k(\mathcal{R}_\epsilon (S))}{n^k}  \rightarrow 0 \ \text{as} \ n \rightarrow \infty
\end{equation}

In \cite{kahle2011random}, the author studies a setting where $n$ data points are drawn from a fixed underlying probability measure on $\mathbb{R}^d$. This is arguably the setting of interest in topological data analysis. They show that the Betti numbers of the derived Rips complexes have three `phases' depending on the scale $\epsilon$. (Recall that we include an edge if two points are within distance $\epsilon$.) For small $\epsilon = o(n^{-1/d})$, called the subcritical phase, the Betti numbers vanish asymptotically. Intuitively the complex is highly disconnected, since we are below the percolation threshold. There is a critical phase $\epsilon \sim n^{-1/d}$, where the Betti numbers will scale linearly $\beta_k \sim n$. Then for large $\epsilon = \omega(n^{-1/d})$, in the supercritical phase, the Betti number grows sublinearly $\beta_k = o(n)$. Thus in all regimes, the Betti number grows at most linearly in the number of points. This is of course far from the $n^k$ scaling needed for superpolynomial speedup.

However, it is possible to construct a Rips complex with large Betti number and large spectral gap, even in $\mathbb{R}^2$ \cite{goff2009extremal}. We describe such a Rips complex here.

Construct $S \subset \mathbb{R}^2$ as follows. Let $m = n/2k$, $\theta = \pi/k$, and $\delta = n^{-4}$. For $i=1,\dots,k$, let $x^+_i = (1/2, i\delta)$ and $x^-_i = (-1/2, i\delta)$. Let $S_0 = \{x^+_1,\dots,x^+_m,x^-_1,\dots,x^-_m\}$. For $j=1,\dots,k-1$, construct $S_j$ by rotating $S_0$ about the origin by an angle $j \theta$. Then finally $S = S_0 \cup \dots \cup S_{k-1}$. We will take the Rips complex $\mathcal{R}_1(S)$ with $\epsilon = 1$. $\{x^+_1,\dots,x^+_m\}$ and $\{x^-_1,\dots,x^-_m\}$ become $m$-simplices. There is an edge $(x^+_i,x^-_i)$ for every $i$, but no edges $(x^+_i,x^-_j)$ for $i\neq j$. Due to the small value of $\delta$, each $S_i$ is completely connected to every other $S_j$.

\begin{figure}[H]
\centering
  \includegraphics[width=2in]{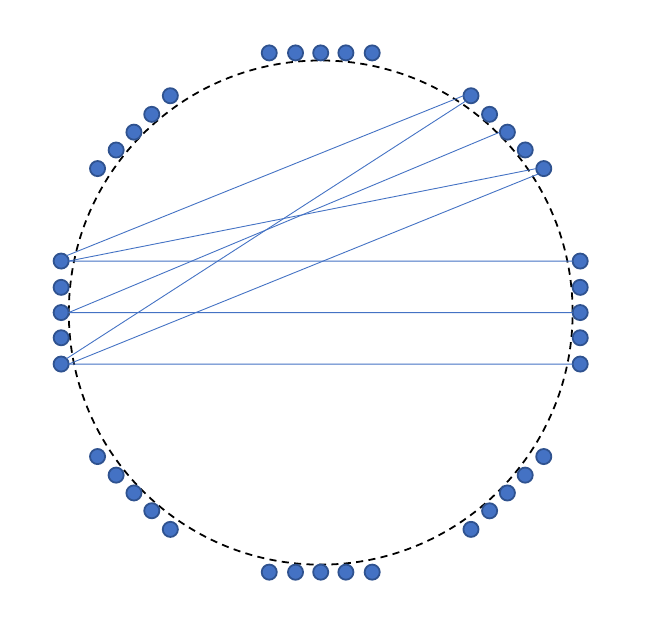}
  \caption{The Rips complex $\mathcal{R}_1(S)$.}
  \label{rips_complex}
\end{figure}

\begin{proposition} \label{prop:Betti_number_Rips}
The $(2k-1)^{\text{th}}$ Betti number of $\mathcal{R}_1(S)$ is
\begin{equation}
\beta_{2k-1}(S) = (m-1)^k = \left(\frac{n}{2k} - 1\right)^k .
\end{equation}
\end{proposition}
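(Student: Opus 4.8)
The plan is to reduce the computation to a \emph{join} of simpler complexes, exploiting the near-coincidence of points forced by the tiny parameter $\delta = n^{-4}$. Because $\delta$ is negligible, each of the $2k$ tight clusters of $m$ points sits essentially at one of the $2k$ equally spaced positions $\ell\pi/k$ (for $\ell = 0,\dots,2k-1$) on the circle of radius $1/2$ about the origin. First I would pin down the combinatorial adjacency structure of $\mathcal{R}_1(S)$ by checking three facts about which pairs lie at distance $\le 1$. (i) Within a single cluster all points are mutually adjacent, since the spread is at most $(m-1)\delta = O(m\delta) \ll 1$. (ii) Two clusters whose angular separation is strictly less than $\pi$ are \emph{completely} joined: the largest such center distance is $\cos(\pi/2k) < 1$, and the gap $1 - \cos(\pi/2k) = \Theta(1/k^2)$ dwarfs the $O(m\delta)=O(n^{-3})$ perturbation from the offsets. (iii) Two antipodal clusters (separation exactly $\pi$) are joined only by the \emph{perfect matching} $x^+_i \leftrightarrow x^-_i$, since matched points lie at exactly distance $1$ whereas mismatched points lie at distance $\sqrt{1+(i-j)^2\delta^2} > 1$, and rotation preserves these distances. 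The quantitative comparison of the $\Omega(1/k^2)$ geometric gap against the $O(n^{-3})$ perturbation is the only genuinely numerical step; it is routine but is precisely where the choice $\delta = n^{-4}$ earns its keep.

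The conceptual core is to recognize the resulting graph as a graph \emph{join}. Group the $2k$ clusters into the $k$ antipodal pairs $P_p = C_p \cup C_{p+k}$ for $p = 0,\dots,k-1$. I would verify that distinct pairs are always non-antipodal, so by (ii) every vertex of $P_p$ is adjacent to every vertex of $P_q$ for $p \ne q$. Hence $\mathcal{R}_1(S)$ is the graph join $P_0 * \cdots * P_{k-1}$, and its clique complex is the simplicial join $X_0 * \cdots * X_{k-1}$, where $X_p$ denotes the clique complex of $P_p$.

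I would then compute each factor. By (i) and (iii), $X_p$ consists of two full simplices (on $C_p$ and on $C_{p+k}$) joined by $m$ matching edges, with no higher faces crossing the matching (a triangle cannot contain a matched edge, since that would force two equal indices in one cluster). Collapsing the two contractible simplices to points — a homotopy equivalence, since each is a contractible subcomplex of a good pair — leaves two vertices connected by $m$ edges, so $X_p \simeq \bigvee_{m-1} S^1$ and therefore $\tilde H_j(X_p) = \mathbb{Z}^{m-1}$ for $j=1$ and $0$ otherwise.

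Finally I would apply the Künneth formula for joins,
\[
\tilde H_n(X_0 * \cdots * X_{k-1}) \cong \bigoplus_{i_0+\cdots+i_{k-1}=n-(k-1)} \tilde H_{i_0}(X_0) \otimes \cdots \otimes \tilde H_{i_{k-1}}(X_{k-1}).
\]
Since each factor has reduced homology concentrated in degree $1$, the only surviving summand forces $i_p = 1$ for every $p$, giving $n = 2k-1$ and $\tilde H_{2k-1} \cong (\mathbb{Z}^{m-1})^{\otimes k} = \mathbb{Z}^{(m-1)^k}$, with all other reduced homology vanishing. Hence $\beta_{2k-1}(\mathcal{R}_1(S)) = (m-1)^k = (n/2k - 1)^k$. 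The main obstacle is not any single hard estimate but marshalling the join decomposition correctly — especially confirming that distinct antipodal pairs are completely cross-connected so that the graph genuinely factors as a join — because the entire $(m-1)^k$ answer is exactly the product of the $(m-1)$ independent loops contributed by each pair.
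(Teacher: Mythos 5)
Your proof is correct and takes essentially the same route as the paper's: your antipodal pairs $P_p$ are exactly the paper's rotated sets $S_p$, and the paper likewise asserts $\mathcal{R}_1(S) = \mathcal{R}_1(S_0) \ast \cdots \ast \mathcal{R}_1(S_{k-1})$, that each factor has reduced Betti numbers $(0,m-1,0,\dots)$, and concludes via the K\"unneth formula for joins. The only difference is one of detail: you fill in the distance estimates, the identification of the clique complex of a graph join with the simplicial join, and the wedge-of-$(m-1)$-circles homotopy equivalence, all of which the paper states without proof.
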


\begin{proposition} \label{prop:spectral_gap_Rips}
The combinatorial Laplacian $\Delta_k = \partial_k^{\dagger} \partial_k + \partial_{k+1}\partial_{k+1}^{\dagger}$ of $\mathcal{R}_1(S)$ has constant spectral gap $\lambda_{\min}$.
\end{proposition}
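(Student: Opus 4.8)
The plan is to identify the abstract simplicial complex underlying $\mathcal{R}_1(S)$ and reduce the problem to a single, $n$-independent building block. As in the construction (and as used for \prop{Betti_number_Rips}), the $1$-skeleton of $\mathcal{R}_1(S)$ consists of $k$ ``bars'', where each bar carries two $m$-cliques joined only by a perfect matching, and any two distinct bars are completely joined to one another. Since a Rips complex is the clique complex of its $1$-skeleton, and the clique complex of a complete graph join is the simplicial join of the factors, $\mathcal{R}_1(S)$ is the $k$-fold join $B_m * \cdots * B_m$, where $B_m$ is the clique complex of $K_m \sqcup K_m$ together with a perfect matching $M$. One checks directly that $B_m$ has no mixed simplices above dimension $1$: a triangle $\{a_i,b_i,a_j\}$ fails to close because $\{a_j,b_i\}\notin M$. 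Hence every $2$- and higher simplex of $B_m$ lies inside one of the two $m$-cliques.

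I would then invoke the combinatorial Hodge theory of joins. Working with augmented (reduced) chain complexes in the standard inner product in which simplices are orthonormal, the chain-level join isomorphism $\tilde C_\bullet(X*Y)\cong \tilde C_\bullet(X)\otimes\tilde C_\bullet(Y)$ (with a degree shift) turns the Hodge Laplacian of a join into a sum, $\tilde\Delta(X_1*\cdots*X_k)=\sum_j 1\otimes\cdots\otimes\tilde\Delta(X_j)\otimes\cdots\otimes 1$. Consequently the eigenvalues of the degree-$(2k-1)$ Laplacian relevant to $\beta_{2k-1}$ are exactly the sums $\mu_1+\cdots+\mu_k$ with $\mu_j$ an eigenvalue of $\tilde\Delta_{d_j}(B_m)$ and $d_1+\cdots+d_k=k$. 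The harmonic part forces every $d_j=1$ and every $\mu_j=0$, recovering $\beta_{2k-1}=(m-1)^k$; the spectral gap is the smallest strictly positive such sum. This is precisely the reduction used in \append{proof_betti_density} for \prop{spectral_gap}, with the ``$m$ discrete points'' factor replaced by $B_m$.

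It then remains to bound the eigenvalues of the single block $B_m$ uniformly in $m$, in two cases. If all $d_j=1$, the smallest positive sum is the gap of $\tilde\Delta_1(B_m)$ itself; by the Hodge decomposition its nonzero spectrum splits into an exact part, equal to the nonzero eigenvalues of $\partial_2\partial_2^\dagger$, and a coexact part, equal to the nonzero eigenvalues of the graph Laplacian $L$ of $K_m \sqcup K_m + M$. Since every triangle lies inside a full $m$-clique, the exact eigenvalues are the up-Laplacian eigenvalues of a complete simplex and equal $m$, so the binding quantity is the algebraic connectivity of $K_m \sqcup K_m + M$. I would bound this below by writing $x=(u,v)$ on the two cliques, decomposing $u=\bar u\mathbf 1+u'$, $v=\bar v\mathbf 1+v'$ with $u',v'\perp\mathbf 1$, and using $x\perp\mathbf 1$ (so $\bar v=-\bar u$) to get $x^\top L x\ge m(\lVert u'\rVert^2+\lVert v'\rVert^2)+4m\bar u^2$ while $\lVert x\rVert^2=2m\bar u^2+\lVert u'\rVert^2+\lVert v'\rVert^2$; since $m\ge 2$ the Rayleigh quotient is at least $2$. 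In the remaining case some $d_j\neq 1$: because $\sum_j d_j=k$ is fixed, any factor dropped to degree $0$ or $-1$ forces another factor up to degree $\ge 2$, whose smallest eigenvalue is the complete-simplex value $m$, so every such configuration contributes at least $m$. Combining the two cases, the smallest positive eigenvalue of the join Laplacian is at least $2$, independent of $n$ and $k$, establishing the constant gap.

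The main obstacle is the uniform-in-$m$ lower bound on the connectivity block: topologically the bar is a wedge of $m-1$ circles whose number of small modes grows with $m$, so one must verify that these modes do not accumulate at $0$. The explicit Rayleigh-quotient estimate above is exactly what rules this out, and it is the place where the large within-clique eigenvalue $m$ and the single $O(1)$ global mode induced by the matching must be balanced carefully. A secondary technical point is making the join decomposition of the Hodge Laplacian rigorous, including the degree shift and the signs in $\partial=\partial\otimes 1\pm 1\otimes\partial$; this is standard but should be stated precisely, and it is cleanest to reuse the framework already developed for \prop{spectral_gap}.
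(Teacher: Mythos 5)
Your proposal is correct, and its skeleton is the same as the paper's: decompose $\mathcal{R}_1(S)$ as the $k$-fold join of the bar complexes $\mathcal{R}_1(S_i)$ and use the additivity of join spectra (Corollary~\ref{Laplacian_corollary}, via Lemma~\ref{Laplacian_lemma}) to reduce the spectral gap of the join Laplacian to that of a single factor. The difference is in what happens next. The paper's proof stops at the reduction: it says ``say [the factor Laplacian] has smallest eigenvalue $c$'' and concludes by induction that the gap of $\Delta^S_k$ is also $c$. This leaves implicit the claim that $c$ is a constant, which is not automatic: the factor $\mathcal{R}_1(S_i)$ is itself a growing complex on $2m = n/k$ vertices, so a priori its gap could shrink with $n$. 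Your proof supplies exactly this missing ingredient. The Hodge splitting of the nonzero spectrum of $\tilde\Delta_1(B_m)$ into an exact part (all eigenvalues equal to $m$, because every simplex of dimension at least $2$ lies inside one of the two full $m$-cliques) and a coexact part (the nonzero graph-Laplacian spectrum of $K_m \sqcup K_m$ plus a matching) is right; your Rayleigh-quotient estimate
\begin{equation*}
x^\top L x \ge m\left(\lVert u'\rVert^2+\lVert v'\rVert^2\right)+4m\bar u^2, \qquad \lVert x\rVert^2 = 2m\bar u^2+\lVert u'\rVert^2+\lVert v'\rVert^2,
\end{equation*}
correctly gives algebraic connectivity at least $2$ uniformly in $m$ (and it is exactly $2$, witnessed by $u=\mathbf{1}$, $v=-\mathbf{1}$); and your handling of mixed-degree configurations is sound, since $\sum_j d_j = k$ forces any configuration with a factor below degree $1$ to include a factor in degree $\ge 2$, which contributes exactly $m$. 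So your argument buys a complete, quantitative statement (gap equal to $2$, independent of $n$ and $k$), whereas the paper's one-line proof only performs the join reduction and, as written, does not establish that the resulting $c$ is independent of the problem size.
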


We prove these in \append{proof_betti_density} using techniques from simplicial homology. Our quantum algorithm can compute the $k^{\text{th}}$ Betti number in time scaling as $T_q \sim n^{3 + k/4}$ for constant $k$.

%%%%%%%%
\subsection{Randomized classical algorithms for Betti number estimation}
\label{subsec:classical}

While the previous discussion shows there are cases where quantum algorithms can provide {superpolynomial} advantages with respect to deterministic classical algorithms for TDA, the question of how randomized classical algorithms perform in this setting is comparably understudied. There are works studying generalizations of random walk operators corresponding to higher order Laplacians for simplicial complexes \cite{cohen:walk, mukherjee:walk, parzanchevski:walk}. These are specific to the combinatorial Laplacian context and while they could lead to more efficient classical approaches, no such result is known at the present.

Here we show, perhaps surprisingly, that there exists a randomized classical algorithm which can compute normalized Betti numbers in the clique dense case using a polynomial number of operations under appropriate assumptions. This shows that the sufficient conditions needed for quantum algorithms to provide an advantage are more subtle than anticipated and that simply having a high-dimensional vector space does not necessarily guarantee a super-polynomial speedup.

The main idea of our algorithm is to use imaginary-time evolution to create a projector onto the kernel of $\Delta_{k-1}^G ={\partial_{k-1}^{G\dagger}} \partial_{k-1}^G + \partial_{k}^G {\partial_{k}^{G\dagger}}$. More specifically, we focus on the Dirac operator $B_G$ and look at simulating its imaginary  time dynamics of its square using path integral Monte-Carlo. We further simplify this approach by taking $\tilde{B}_G$ to be an analogous operator to $B_G$ except we now use an energy penalty to penalize any configuration that is not a clique or of the correct parity. In particular, 
\begin{equation}
    \TBG = B_G^2 + \gamma_{\min} (1-P),
\end{equation}
where $P$ as before is the projector onto the states of appropriate Hamming weight and configurations that correspond to a clique, and $\gamma_{\min}$ is an upper bound on the spectral gap of $B_G^2$ which coincides with the second smallest eigenvalue of the combinatorial Laplacian. Further, it is easy to see that if a vector is in the kernel of $\TBG$ it is also in the kernel of $B_G$.
 Following the same reasoning as before, as $B_G$ is Hermitian so is $\tilde{B}_G$ and thus it has a complete set of orthonormal eigenvectors.
This implies that any unit vector $\ket{\psi}$ which is supported on $\mathcal{H}_k^G$ can be decomposed as
\begin{equation}
    \ket{\psi} = \cos(\theta) \ket{\psi_g} + \sin(\theta) \ket{\psi_b},
\end{equation}
where $\ket{\psi_g}$ is the projection of $\ket{\psi}$ onto the kernel of $\TBG$ and $\ket{\psi_b}$ is its orthogonal complement.
Then
\begin{align}
    e^{-\TBG t} \ket{\psi} = \cos(\theta) \ket{\psi_g} + e^{-\TBG t} \sin(\theta) \ket{\psi_b} .
\end{align}
Let $\TBG\ket{\lambda_{\mu}} = \lambda_{\mu}\ket{\lambda_{\mu}}$ such that $\lambda_1\le \lambda_2 \le \cdots \le \lambda_{d_{k-1}}$, where $d_{k-1} = \binom{n}{k}$.  The operator $\TBG$ is positive semi-definite and thus
\begin{equation}
    \bra{\psi} e^{-\tilde{B^2_G} t} \ket{\psi} = \cos^2(\theta) +\sin^2(\theta) \bra{\psi_b} e^{-\Delta_{k-1}^G t}\ket{\psi_b} \le \cos^2(\theta) + \sin^2(\theta) e^{-\gamma_{\text{min}}t}.
\end{equation}
If we then pick 
\begin{equation}
    t\ge  \log(1/\epsilon)/\gamma_{\min}, \label{eq:tbd}
\end{equation} 
{where $\gamma_{\min}$ is the smallest non-zero eigenvalue of $\TBG$, the expectation value will be at most $\cos^2(\theta) + \mathcal{O}(\epsilon)$}.

If $\ket{\psi}$ is chosen such that it is a column of a Haar random unitary over the constrained parity subspace $\mathcal{H}_{k}^G$, the expectation value will be
\begin{equation}
    \mathbb{E}_{\rm Haar}(\cos^2(\theta))= \frac{\beta_{k-1}}{d_{k-1}}.
\end{equation}
Thus performing imaginary time evolution and a Haar expectation value will give the required normalized Betti number.

The remaining question centers around whether the imaginary time evolution can be performed on a classical computer in polynomial time. First, let us consider a decomposition of the Hamiltonian of the form 
\begin{equation}
    \TBG = \sum_{p =1}^D c_p H_p
\end{equation} where each $H_p$ is one-sparse, {Hermitian}, and unitary. Hence the eigenvalues of each are $\lambda_{p_i,\nu_i} = \pm c_p$, where $\nu_i$ is an index of the eigenvalue and $p_i$ is the index of the Hamiltonian \cite{berry2007efficient,BerrySTOC14}. The Jordan-Wigner decomposition on $B$ provides such a decomposition and the projector $P$ can always be written as a sum of a reflection over computational basis states and an identity gate, which provides an efficient decomposition into one-sparse Hermitian and unitary terms.

With this decomposition in hand, we focus on using a path-integral Monte-Carlo simulation of $\exp(-\TBG t)$. The path integral expansion works by first breaking up $e^{-\TBG t}$ into $r$ timeslices, Trotterizing over the matrices in our one-sparse {Hermitian} decomposition of $\TBG$ {(which is efficient to determine \cite{berry2007efficient,BerrySTOC14})}, and then expanding each one-sparse Hermitian matrix in its eigenbasis. Since one-sparse {Hermitian} matrices can be efficiently diagonalized, this process is classically efficient. {(Hermitian one-sparse matrices can be decomposed as a direct sum of 1 and 2-dimensional matrices, which are trivial to diagonalize.)
}

Let $\Gamma$ denote a particular path of eigenvectors in the path integral representation, $W(\Gamma)$ be the product of overlaps between the eigenvectors and $\lambda_{p_i,\Gamma_i}$ be the eigenvalue corresponding to the eigenvector that appears in the $i^{\rm th}$ step in the path $\Gamma$.  Finally, let ${\rm Pr}(\Gamma)$ be a probability distribution from which the paths are drawn that can be chosen to reduce the variance (as is standard in importance sampling).
We show in \append{PIMCdequant} that taking the Haar-expectation of the result leads to
\begin{equation}
    \mathbb{E}_{\rm Haar}(\cos^2(\theta))=\frac{1}{d_{k-1}} \mathbb{E}_\Gamma \left(\frac{\exp{\left(-\lambda_{p_1,\Gamma_1}t/r-\sum_{i=2}^{2rD-1}\lambda_{p_i,\Gamma_i}t/2r\right)} W(\Gamma) \delta_{k_1,k_{2rD}}}{\text{Pr}(\Gamma)} \right) .
\end{equation}

We then average over a finite ensemble of these random paths to estimate the expectation value drawn from an appropriate probability distribution. We propose, for general purposes, a Metropolis-Hastings based algorithm for selecting appropriate paths in the decomposition that are unlikely to have zero values of $W$. 
More specifically, the algorithm works by drawing an initial eigenstate of the first term in the one-sparse decomposition of the Dirac operator $B_G$ uniformly. Then a path is drawn by transitioning to one of the two possible connected eigenstates for it randomly. As the terms are Hermitian by assumption, all eigenvalues at each step in the path integral are the same up to a sign. This means that there are only two choices when constructing a path: either we choose to traverse the positive eigenvalue or the negative eigenvalue.  Thus each path can be described using $\mathcal{O}(\log(d_{k-1})rD)$ bits.  A path that has non-zero overlaps between the neighboring eigenstates can then be selected in $\mathcal{O}(\log(d_{k-1})rD)$ time.  This is used as an initial guess that is improved using Metropolis-Hastings, wherein the probability of transitioning between two randomly chosen paths $\Gamma^{(a)}$ and $\Gamma^{(b)}$ is:

\begin{equation}
    P(\Gamma^{(b)}|\Gamma^{(a)})=\frac{\exp{\left(-2\lambda_{p_1,\Gamma_1^{(b)}}t/r-\sum_{i=2}^{2rD-1}\lambda_{p_i,\Gamma_i^{(b)}}t/r\right)}}{\exp{\left(-2\lambda_{p_1,\Gamma_1^{(a)}}t/r-\sum_{i=2}^{2rD-1}\lambda_{p_i,\Gamma_i^{(a)}}t/r\right)}}. \label{eq:deltaPIBd0}
\end{equation}
The equilibrium distribution leads to a thermal distribution over the path $K$ with 
 \begin{equation}
     {\rm Pr}(\Gamma) =\frac{ \exp{\left(-2\lambda_{p_1,\Gamma_1}t/r-\sum_{i=2}^{2rD-1}\lambda_{p_i,\Gamma_i}t/r\right)}\delta_{\Gamma\in S_\Gamma}}{\sum_{\Gamma\in S_\Gamma}\exp{\left(-2\lambda_{p_1,\Gamma_1}t/r-\sum_{i=2}^{2rD-1}\lambda_{p_i,\Gamma_i}t/r\right)}},
 \end{equation}
 where $S_\Gamma$ is the set of all paths.  The number of such updates needed to achieve this distribution (within fixed error) scales as $\mathcal{O}(1/\gamma_M)$ where $\gamma_M$ is the gap of the Markov chain. We show in \append{PIMCdequant} that, provided the gap is large, this distribution can be efficiently sampled from and forms a good choice for the importance distribution for the paths that minimizes the variance over the paths.
 
We ultimately find that the number of arithmetic operations needed to estimate the ratio of the kernel to the size of the set of all $k$-simplices within additive error $\epsilon$ is, assuming that the sample variance in the estimates yielded by Algorithm~\ref{alg:classical} is $\sigma^2$, is in
 \begin{equation}
     \widetilde{\mathcal{O}}\left(\frac{\sigma^2}{\epsilon^2}\left(\frac{|E|d_{k-1}}{|{\rm Cl}_k(G)|}+\frac{D^4 }{\gamma_M}\frac{\kappa^3}{\epsilon} \left( \log(d_{k-1}) D^{-2} +  \frac{\kappa^3}{\epsilon} \right)\right)\right)
 \end{equation}
where $\kappa$ is the ratio of the largest eigenvalue to the smallest non-zero eigenvalue, i.e.\ the condition number of the combinatorial Laplacian, $|E|$ is the size of the edge set in the input graph, and $|\text{Cl}_k(G)|$ is the number of $k$-cliques in the input graph (this is \textit{not} equal to $d_{k-1}$ in general!). This shows that even in cases where the dimension is exponentially large, we can use path integration to estimate {the ratio $\dim \ker \Delta_k/d_{k-1}$} using a number of operations that scales polynomially with the number of vertices $n$ provided that $\sigma$, $D$, $\kappa$, $\gamma_{M}^{-1}$, and $|\mathrm{Cl}_k(G)|^{-1}$ are at most $\mathrm{poly}(n)$. We also show that $\sigma$ can be polynomially large in some cases in \append{PIMCdequant}. 
 
Quantum algorithms for TDA were thought to outperform classical counterparts in the clique dense case \cite{Casper}, but this algorithm serves as a counter-example. Another key point behind this dequantization result is that while an exponentially large dimension is a necessary condition for an exponential speedup for quantum TDA, it is not a sufficient condition. This implies that further work is needed in order to understand when, and even if, quantum algorithms can provide truly exponential advantages relative to all classical randomized algorithms for TDA.
\begin{algorithm}[h!]
\caption{Classical randomized algorithm for Betti number computation.\label{alg:classical}}
\SetAlgoNoLine
\DontPrintSemicolon
\KwData{$k>0, n>0, N_{\rm samp}>0$, $t\ge 0$, $r\ge 0$, a function ${\rm Pr}(\Gamma)$ which assigns a non-zero probability to each vector $\Gamma\in \mathbb{R}^{2rD}$, a function $W(\Gamma)=\braket{\lambda_{p_1,\Gamma_1}}{\lambda_{p_2,\Gamma_2}}\cdots \braket{\lambda_{p_{2rD-1},\Gamma_{2rD-1}}}{\lambda_{p_{1},\Gamma_{1}}}$ where $\ket{\lambda_{p_j,\Gamma_j}}$ is the $\Gamma_j^{\rm th}$ eigenvector of the one sparse matrix $U_{p_j}$.}
\KwResult{Estimate $\bar{E}$ which is an unbiased estimator of $\beta_{k-1}/d_{k-1}$}
\For{$q$ from $1$ to $N_{\rm samp}$}{
\qquad$\Sigma \gets$ a set of $k$ points encoded as an integer\;
\qquad{\textbf{while}} {$\Sigma$ is not a $(k-1)$-simplex }\textbf{do}\;
{\qquad\qquad $\Sigma \gets$ a random set of $k$ points encoded as an integer\;}
\qquad \textbf{end}\;
\qquad Draw a vector $\Gamma= [\Sigma,\Gamma_2,\ldots, \Gamma_{2rD}]$ from the probability distribution ${\rm Pr}(\Gamma)$.\;
\qquad $E_q \gets \frac{1}{d_{k-1}}\left(\frac{\exp{\left(-\lambda_{p_1,\Gamma_1}t/r-\sum_{i=2}^{2rD-1}\lambda_{p_i,\Gamma_i}t/2r\right)} W }{{\rm Pr}(\Gamma)} \right)$
}
$\bar{E} \gets \frac{1}{N_{\rm samp}} \sum_q E_q$ average of $E$.
\end{algorithm}

Apers \textit{et al}.~\cite{Apers22} subsequently developed another classical path-integral Monte Carlo algorithm which can also be efficient in some of the regimes where the quantum algorithm for Betti number estimation works best. It is therefore natural to investigate how it fares compared to our classical and quantum algorithms. They study additive error estimation of the normalized Betti number, so we focus on comparing the runtime of their algorithm for this case.

Apers \textit{et al}. claim a runtime of $n^{\mathcal{O}(\gamma^{-1} \log (1/\epsilon))}$ for general simplicial complexes, where $n$ is the number of vertices, $\epsilon$ the additive error, and $\gamma \leq \lambda_2(\Delta_k)/\hat{\lambda}$ with $\hat{\lambda} \in \Theta(\lambda_{\max}(\Delta_k))$. For clique complexes in particular, their runtime is generically $\mathrm{poly}(n)\cdot (n/\hat{\lambda})^{\mathcal{O}(\gamma^{-1} \log (1/\epsilon))}$, or $2^{\mathcal{O}(\gamma^{-1} \log 1/\epsilon)}$ when $k$ or the maximum up-degree (see the last section of \Cref{append:PIMCdequant} for the definition) of the $k$-simplices are $O(n)$. 

For general simplical complexes, their runtime depends exponentially on $\gamma^{-1}$ of $\Delta_k$, which is an upper bound on the condition number $\kappa$. They thus require constant $\gamma$ and $\epsilon$ for general simplicial complexes. By contrast, our classical algorithm above has a run time depending on fixed polynomials in $\gamma^{-1}$ and $\epsilon^{-1}$, and can thus tolerate $\mathrm{poly}(n)$ scaling for both provided $\sigma^2$ also depends polynomially on $\kappa$ and $1/\epsilon$. For clique complexes, their best-case runtime is polynomial if $\gamma \in \Omega(1)$ and $\epsilon = 1/\mathrm{poly}(n)$, or $\gamma \in \Omega(1/\log n)$ and $\epsilon \in \Omega(1)$. Analogous conclusions hold for our classical algorithm in these regimes if we are once again given suitable promises on the polynomial dependence of $\sigma^2$ on those parameters. Thus given the difficulty in analyzing the dependence of $\sigma^2$ on $\kappa$ and $\epsilon^{-1}$ in general, our algorithms cannot be easily compared. In the worst case bound for $\sigma^2$ given in \eqref{eq:varworst}, we would require $D\kappa$ to be poly-logarithmic in $n$ and $\kappa/\epsilon$ a constant at worst and thus also cannot tolerate inverse polynomial error scaling or spectral gap.

The algorithm of Apers \textit{et al}.\ can efficiently compute a \emph{constant} additive error estimate of the normalized Betti number but cannot efficiently compute an \emph{inverse polynomial} additive error estimate for the graphs $K(n/k, k)$ discussed in this work. This yields an immediate exponential separation with the quantum algorithm since the latter is able to efficiently compute such an inverse polynomial additive error estimate. Moreover, even stronger separations are possible. As noted above, the runtime of the algorithm of Apers \textit{et al}.\ depends exponentially on the inverse of the (normalized) spectral gap of the combinatorial Laplacian, unlike in the case of the quantum algorithm whose runtime only depends polynomially on this parameter. 

For the family $K(n/k, k)$, this (normalized) spectral gap is large and suitable for the algorithm of Apers \textit{et al}. There are nevertheless many graphs that do have a (normalized) spectral gap which incurs an additional exponential scaling for the case of the classical algorithm but not for the quantum one. 
For example, by adding a single edge to each cluster in $K(n/k, k)$ the (normalized) spectral gap of the combinatorial Laplacian becomes much smaller, which causes the algorithm of Apers \textit{et al}.\ to no longer be able to efficiently compute even a constant additive precision estimate of the normalized Betti number.
On the other hand, the quantum algorithm can still efficiently compute an inverse polynomial additive error estimate as can our classical algorithm given similar guarantees on the polynomial scaling of $\sigma^2$ with $1/\gamma$. In Appendix \ref{app:new_graph_examples}, we detail some of these modified graph examples with smaller spectral gaps.

\section{Conclusion}
\label{sec:conc}
In order to provide applications where quantum computers can practically outperform classical computers on hardware anticipated in the near-future, it is necessary to develop algorithms where there is a greater than square-root speedup in the complexity \cite{QuadSpeedup}.
This is because the large overheads involved in implementing quantum gates in an error-corrected code mean that there is a huge slowdown in the gate frequency as compared to classical computers.
When the Betti number is of order 1, the complexity of the quantum algorithm for estimating Betti numbers is only a square root speedup over classical approaches.
This is as compared to classical approaches that scale approximately linearly in $\binom{n}{k}$.
On the other hand, when the Betti number is large, the {quantum} complexity of estimating the Betti number to given \emph{relative} error (error as a ratio to the Betti number) will be small.

There exist classes of graphs with very large Betti numbers.
{We introduce graph classes which exhibit Betti numbers in the regime where} the speedup is superpolynomial; specifically it is approximately a $2\ln n$ root.
The magnitude of the speedup is limited by the need to enter the data in the quantum algorithm, which introduces a $|E|$ factor to the complexity.
These are very specially constructed graphs for large Betti number, but we show there exist far more general classes of graphs {whose parameters give a \emph{quartic} speedup over naive classical algorithms}, showing that speedups beyond quadratic are possible far more generally.

We have also provided a host of new techniques for quantum Betti number estimation that reduce the complexity.
These include Kaiser-window amplitude estimation, improved Dicke state preparation, and improved eigenstate filtering.
These improvements greatly improve the complexity in many ways, though the main scaling of the complexity as $\sqrt{\binom{n}{k}}$ remains.
{In particular, we have major improvements in the complexity arising from improved Dicke state preparation as well as improved amplitude estimation, which together give about 3 orders of magnitude improvement.
We have further improvements arising from our clique checking procedure, separation of amplitude estimation and amplification in initial state preparation, and use of filtering instead of phase estimation, which may give a further order of magnitude of improvement depending on how prior work is interpreted.
Moreover, our}
methods enable accurate estimation of the complexity of the quantum algorithm, including all constant factors.

Based on that, we estimate that tens of billions of Toffolis would be sufficient to estimate a Betti number that should be classically intractable.
This number of Toffoli gates is reasonable for early generations of fully fault-tolerant quantum computers. While the exact threshold for quantum advantage depends on constant factors in the classical algorithms, it seems likely that this application will fall somewhere in between quantum chemistry applications and Shor's algorithm in terms of the resources required for quantum advantage.
The standard classical approaches would be expected to be intractable because they would require an extremely large storage.
We have also presented an alternative approach for classical estimation that could be more efficient, because it does not require large storage and instead requires Monte-Carlo sampling.
That classical approach may be tractable, but it is difficult to evaluate its complexity because it depends on the gap of a Markov chain which is unknown.

There is scope for further improvement of the quantum algorithm for Betti numbers by implementing a more efficient method of clique finding.
We have currently applied just amplitude amplification for clique finding, but there are more efficient classical methods for clique finding that could potentially be adapted for the quantum algorithm.
That is nontrivial because these methods often require large storage, which would not be practical in the quantum algorithm where we need to minimize the number of ancilla qubits.

\section*{Acknowledgements}
	
	The authors acknowledge helpful conversations with David Gamarnik, Robin Kothari, Seth Lloyd, Alexander Schmidhuber, Nikhil Srivastava and Adam Zalcman. DWB worked on this project under a sponsored research agreement with Google Quantum AI. DWB is also supported by Australian Research Council Discovery Projects DP190102633 and DP210101367. NW and VD were funded by a grant from Google Quantum AI. NW was also funded by grants from the US Department of Energy, Office of Science, National Quantum Information Science Research Centers, Co-Design Center for Quantum Advantage under contract number DE-SC0012704. VD and CG were supported by the Dutch Research Council (NWO/ OCW), as part of the Quantum Software Consortium programme (project number 024.003.037). Some of the discussions and collaboration for this project occurred while using facilities at the Kavli Institute for Theoretical Physics, supported in part by the National Science Foundation under Grant No.~NSF PHY-1748958.

\bibliography{ref,ryan}

\appendix

\section{Detailed background on topological data analysis}
\label{append:background}

We present some background material from singular homology needed in topological data analysis, broadly following the treatments in \cite{lee2010,McGuirl2017,ghrist2008barcodes}. 

Let $v_0, \ldots, v_k$ be $k+1$ distinct points in $\mathbb{R}^n$. The set $\{v_0,\ldots,v_k\}$ is said to be \textbf{affinely independent} if the set $\{v_1 - v_0, \ldots, v_k - v_0\}$ is linearly independent. In other words, we consider the given set of points to be affinely independent if when we take one of the points to be the ``origin'' (say $v_0$ WLOG) and draw vectors from this point to the others, the collection of the resulting vectors is linearly independent. 

If $\{v_0,\ldots,v_k\}$ is affinely independent, the $\mathbf{k}$\textbf{-simplex} spanned by them is the set 
\begin{equation}
    [v_0,\ldots,v_k] \coloneqq \bigg \{ \sum_{i=0}^k t_i v_i \colon t_i \geq 0 \text{ and } \sum_{i=0}^k t_i = 1 \bigg \} \, .
\end{equation}
Equivalently, a simplex is just the convex hull of its affinely independent set of vertices. The points $v_i$ are the \textbf{vertices} of the simplex and the integer $k$ is the \textbf{dimension} of the simplex. \autoref{fig:simplices} shows some examples of simplices. 
Note that it follows from the definitions given that $k \leq n$ since any set of $n+2$ points or more cannot be affinely independent. This is because no collection of $n+1$ vectors or more in an $n$-dimensional vector space can be linearly independent. Such a collection of vectors therefore cannot determine any simplicies of dimension $n+1$ or higher.

Let $\sigma$ be a $k$-simplex. A simplex spanned by a non-empty subset of the vertices of $\sigma$ is a \textbf{face} of $\sigma$. For example, the $0$-dimensional faces of $\sigma$ are its vertices and its $1$-dimensional faces are the edges, which are spanned by two vertices. Faces of $\sigma$ that are not equal to $\sigma$ are called \textbf{proper faces}. The $(k-1)$-dimensional faces of $\sigma$ are called its \textbf{boundary faces} and their union is its \textbf{boundary}. 

\begin{figure}[ht]
\includegraphics[width=5cm]{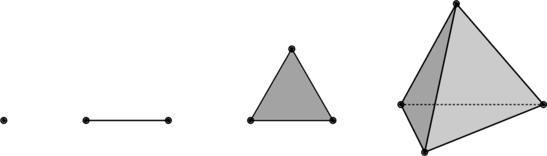}
\centering 
\caption{A 0-simplex (point), 1-simplex (edge), 2-simplex (triangle), and 3-simplex (tetrahedron) from left to right.}
\label{fig:simplices}
\end{figure}

\begin{definition}[\textbf{Simplicial Complex}]
\label{defn:simplicial}
    A simplicial complex $S$ is a finite collection of simplices satisfying the following conditions: 
    \begin{enumerate}
        \item If $\sigma \in S$, every face of $\sigma$ is in $S$
        \item If $\sigma_1, \sigma_2 \in S$, then $\sigma_1 \cap \sigma_2 = \varnothing$ or $\sigma_1 \cap \sigma_2$ is a face of $\sigma_1$ and $\sigma_2$
    \end{enumerate}
\end{definition}

The first condition says that a simplicial complex should also contain all the faces of a given simplex in the complex. The second condition says that any two simplices in a simplicial complex either do not intersect or intersect at a common face of both. We define the \textbf{dimension of a simplicial complex} to be the maximum of the dimensions of all simplices in the complex. 

\begin{figure}[ht]
\centering
  \includegraphics[width=.2\linewidth]{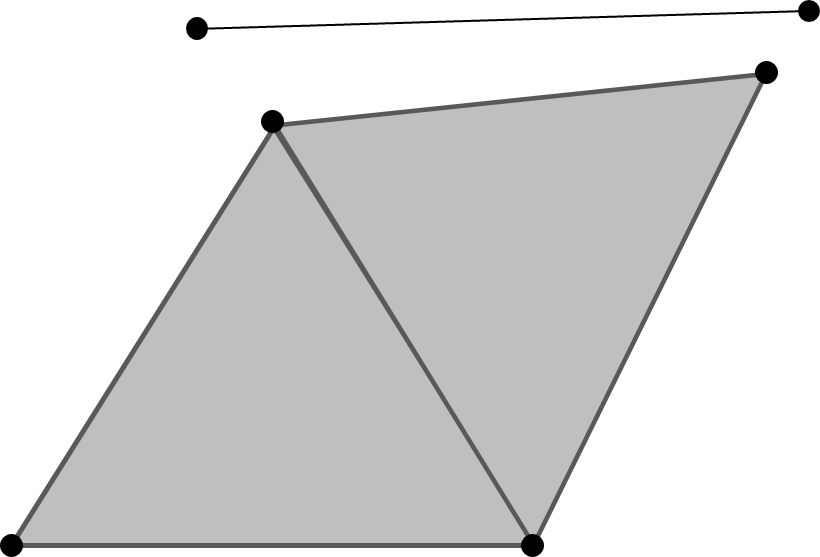}
\hspace{5cm}
  \includegraphics[width=.2\linewidth]{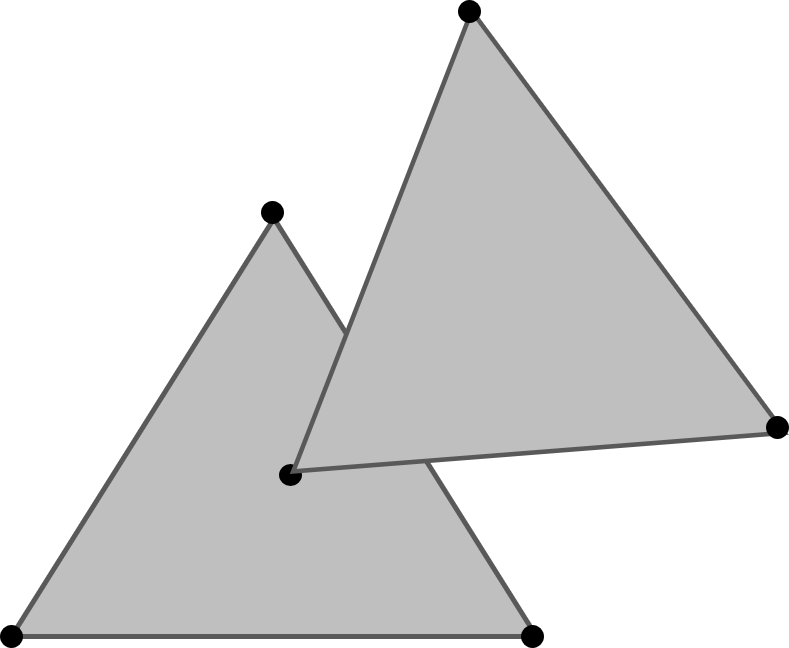}
\caption{(Left) A 2D simplicial complex in $\mathbb{R}^2$. (Right) A set that is not a simplicial complex in $\mathbb{R}^2$. It violates condition 2 of \defn{simplicial}.}
\label{fig:exampsimpcomplexes}
\end{figure}

It can be shown that a simplicial complex is completely determined by its vertices and information about which sets of vertices span which simplices. This provides the motivation for the following definition.

\begin{definition}[\textbf{Abstract Simplicial Complex}]
An abstract simplicial complex is a collection $C$ of non-empty finite sets such that if $s \in C$, then every non-empty subset of $s$ is also in $C$. 
\label{defn:abssimpcomp}
\end{definition}

This general notion of a simplicial complex is particularly useful when we wish to construct one ``abstractly", i.e.\ without reference to a particular embedding into Euclidean space. 

We will mainly be concerned with computing topological invariants of certain kind of simplicial complexes constructed from graphs, called clique complexes, throughout this paper. The graphs serving as inputs into the quantum algorithms in this paper are not necessarily induced by any finite-dimensional data and are instead abstract graphs, i.e.\ abstract simpicial complexes in the sense of \defn{abssimpcomp} in the preceding section.%%highlight this on subsequent revision. Made Sept 27, 2023

\begin{definition}[\textbf{Graphs}]
{A graph $G$ is a pair of objects $G = (V,E)$, where $V$ is a set of elements referred to as the ``vertices" of $G$ and $E$ is a set consisting of pairs of vertices thought of as ``edges" connecting the pairs of vertices.}
\end{definition}

{Given an undirected graph (i.e.\ one in which the edges are not assumed to have direction), a \textbf{clique} $C$ of a graph is a subset of $V$ such that every pair of distinct vertices in $C$ is connected by an edge. $C$ is called a $k$-clique if $|C| = k$.} 

{We now define the notion of a clique complex:}

\begin{definition}[\textbf{Clique Complex}]
    {The clique complex of a graph $G$ is the abstract simplical complex formed by associating a $k$-simplex to every $k+1$-clique in $G$.}
\end{definition}

The invariants we are most interested in are the Betti numbers of a clique complex associated to a graph, which give the number of holes of a given dimension in that clique complex. We now show how to determine the Betti numbers of an arbitrary simplicial complex via simplicial homology.%%highlight this in the next revision. Made Sept 27, 2023

{Let $K$ be a simplicial complex and consider the set of $k$-simplices in $K$. In what follows, we would like to make sense of taking ``linear combinations" of the $k$-simplices in $K$ with coefficients in some field $R$ (we will only need $R = \mathbb{R}$ or $R = \mathbb{C}$ for our purposes).} A \textbf{k-chain} is a formal sum of $k$-simplices $\sum_i c_i \sigma_i$ where $\sigma_i \in K$, $c_i \in R$. 
The set of all $k$-chains is denoted by $C_k(K)$ and is a vector space over $R$. The $k$-simplices form a basis for $C_k(K)$, so the dimension of $C_k(K)$ equals the number of $k$-simplices in $K$. 

\begin{definition}[\textbf{Boundary Map}]
Let $\sigma = [v_0,\ldots,v_k]$ be a $k$-simplex. The boundary map on $k$-simplices is a map 
    \begin{equation}
    \partial_k \colon C_k(X) \rightarrow C_{k-1}(X)
    \end{equation}
    that acts as
    \begin{equation}
    \partial_k \sigma = \sum_{i=0}^k (-1)^i [u_0,u_1,\ldots,\hat{u}_i,\ldots,u_k]
    \end{equation}
    where $\hat{u}_i$ denotes that the vertex $i$ has been removed.
\end{definition}

The boundary map acts on $k$-simplices $\sigma \in C_i(K)$ and gives a $(k-1)$-simplex $\partial_k \sigma$ that can be interpreted as the boundary of $\sigma$. 

An $\mathbf{k}$\textbf{-cycle} is a $k$-chain $c \in C_k(K)$ such that $\partial_k c = 0$. Therefore, $k$-cycles are precisely the kernel of the boundary map and are a subspace of $C_k(K)$ denoted by $Z_k = \ker \partial_k$. An $k$-chain $c$ is an $\mathbf{k}$\textbf{-boundary} if there exists an $(k+1)$-chain $\sigma \in C_{k+1}(K)$ such that $c = \partial_{k+1}(\sigma)$. Equivalently, $k$-boundaries are precisely the image of the boundary map and form a subspace denoted by $B_k(K) = \text{Im} \ \partial_{k+1}$. \autoref{fig:1boundaryand1cycle} shows an example of a 1-boundary and 1-cycle. 

\begin{figure}[ht]
\includegraphics[width=6cm]{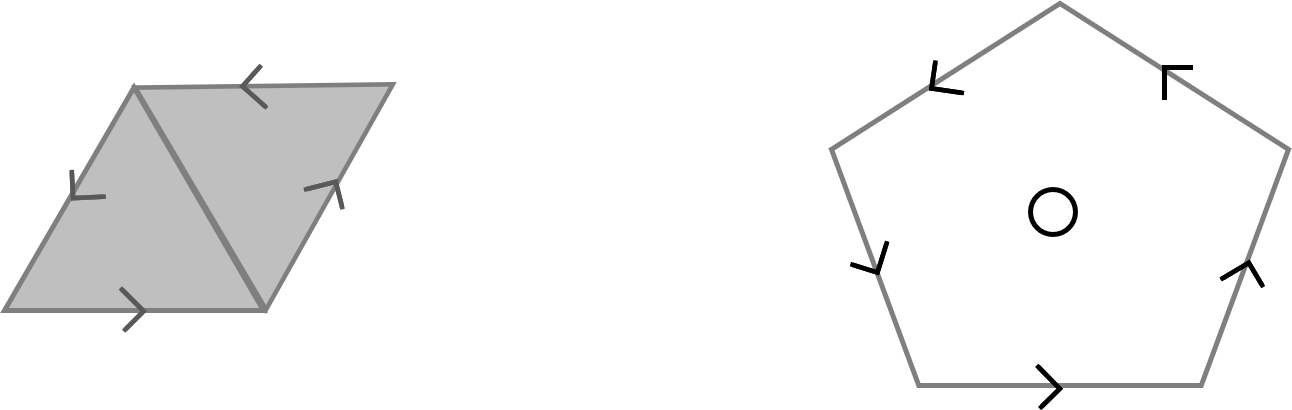}
\centering 
\caption{(Left) The 1-boundary of a 2-chain as indicated by the arrows. (Right) A 1-cycle that is not the boundary of any 2-chain.}
\label{fig:1boundaryand1cycle}
\end{figure}

It turns out that there is a relationship between the two subspaces as implied by the following fundamental result in homology theory. 

\begin{proposition}
    $\partial_k \circ \partial_{k+1}(\sigma) = 0$ for all $k+1$ chains $\sigma$ and all $0 \leq k \leq \dim K$. 
\end{proposition}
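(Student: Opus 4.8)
The plan is to reduce to a single simplex and then use linearity. Since $\partial_i$ and $\partial_{i+1}$ are linear maps between the chain spaces $C_{i+1}(K)$, $C_i(K)$, and $C_{i-1}(K)$, and every $(i+1)$-chain is a formal $\mathbb{F}$-linear combination of $(i+1)$-simplices, it suffices to verify $\partial_i\partial_{i+1}\sigma = 0$ for an arbitrary $(i+1)$-simplex $\sigma = [v_0,\ldots,v_{i+1}]$. First I would expand the outer boundary,
\begin{equation}
\partial_{i+1}\sigma = \sum_{j=0}^{i+1}(-1)^j\,[v_0,\ldots,\hat v_j,\ldots,v_{i+1}],
\end{equation}
and then apply $\partial_i$ to each resulting $i$-simplex, obtaining a double sum over ordered pairs of deleted vertices.

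The crucial point is the sign bookkeeping when a second vertex $v_\ell$ is deleted from the face $[v_0,\ldots,\hat v_j,\ldots,v_{i+1}]$. If $\ell<j$, then $v_\ell$ still occupies position $\ell$ and contributes sign $(-1)^\ell$; if $\ell>j$, the earlier deletion of $v_j$ has shifted $v_\ell$ into position $\ell-1$, so it contributes sign $(-1)^{\ell-1}$. Next I would collect, for each pair of indices $p<q$, the two terms of the double sum that produce the common codimension-two face $[v_0,\ldots,\hat v_p,\ldots,\hat v_q,\ldots,v_{i+1}]$: one arises from deleting $v_q$ first and then $v_p$ (so $j=q$, $\ell=p<q$, total sign $(-1)^q(-1)^p=(-1)^{p+q}$), and the other from deleting $v_p$ first and then $v_q$ (so $j=p$, $\ell=q>p$, total sign $(-1)^p(-1)^{q-1}=(-1)^{p+q-1}$). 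These signs are opposite, so the two copies cancel, and since every such face is produced in exactly these two ways, the entire sum vanishes.

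The main obstacle is nothing more than tracking this index-shifting sign correctly; once the convention is fixed, the pairwise cancellation is immediate and no estimates are required. I would note that this is precisely the computation already performed in the bit-string (qubit) representation in \sec{alg_review}, where the relation $\partial_k^G\partial_{k+1}^G\ket{x}=0$ is established by the same splitting of the double sum into the cases $j<i$ and $j>i$; here the identical argument is simply recast in the language of oriented simplices and formal chains.
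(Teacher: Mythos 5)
Your proof is correct and is essentially the paper's own argument: the appendix states this proposition without proof, but the identical double-sum cancellation is carried out explicitly in \sec{alg_review}, where $\partial_k^G\partial_{k+1}^G\ket{x}=0$ is derived by splitting the sum according to whether the second deleted index precedes or follows the first and pairing off terms with opposite signs. Your version simply recasts that computation in the language of oriented simplices, with the index-shift sign bookkeeping handled correctly, so there is nothing to add.
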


The above proposition essentially says that the boundary of a boundary is 0. It implies that the image of $\partial_{k+1}$ is contained in the kernel of $\partial_k$. This allows us to define the homology groups as follows:

\begin{definition}[\textbf{Homology Groups}]
    The $k$-th singular homology group $H_k$ of a simplicial complex is the quotient vector space \begin{equation}
    H_k(K) = Z_k(K)/B_k(K) = \mathrm{Ker} \ \partial_k / \mathrm{Im} \ \partial_{k+1} \, .
    \end{equation}
\end{definition}

\begin{definition}[\textbf{Betti Numbers}]
    The $i$-th Betti number $\beta_i$ is the dimension of the $i$-th homology group. 
\end{definition}

The $i$-th homology group is generated by cycles that are not the boundaries of any simplex. In other words, these are simplices that enclose a ``void'' or ``hole'' (see the right image of \autoref{fig:1boundaryand1cycle} above). It is worth noting that $\beta_0$, the $0^{\text{th}}$ Betti number, represents the number of connected components the simplicial complex has. 

The problem of computing the Betti numbers of a simplicial complex therefore reduces to the problem of computing the rank of the boundary map. A common and simple classical approach to doing this at the computational level is as follows.

Let $K$ be a simplical complex and assume for simplicity we are working over the field $\mathbb{Z}_2$. Label the $p$-simplices in $C_p(K)$ by $x_1,\ldots,x_{n_p}$ and the $(p-1)$-simplices in $C_{p-1}(K)$ by $y_1,\ldots,y_{n_{p-1}}$. These simplices form bases for $C_p(K)$ and $C_{p-1}(K)$ as mentioned previously. We can then represent the action of the boundary map $\partial_p$ on $C_p(K)$ as follows

\begin{equation}
    \partial_p(x_j) = \sum_{i=1}^{n_{p-1}} a^i_j y_i \text{ where } a^i_j =
\begin{cases}
    1& \text{if } y_i \text{ is a face of } x_j\\
    0              & \text{otherwise}
\end{cases}    
\end{equation}

Then for any $p$-chain $c = \sum_{j=1}^{n_p} a_j x_j$, we can write the above in matrix form 

\begin{equation}
\partial_p c = \begin{bmatrix} 
    a_1^1 & a_1^2 & \dots & a_1^{n_p} \\
    a_2^1 & a_2^2 & \dots & a_2^{n_p} \\
    \vdots & \vdots & \ddots & \vdots \\
    a_{n_{p-1}}^1 & a_{n_{p-1}}^2 & \dots & a_{n_{p-1}}^{n_p} 
    \end{bmatrix}
    \begin{bmatrix}
           a_1 \\
           a_2 \\
           \vdots \\
           a_{n_p} 
         \end{bmatrix} \, .
\end{equation}

Thus the boundary map on $C_p(K)$ can be represented as an $n_{p-1} \times n_p$ sparse matrix with entries in $\mathbb{Z}_2$. The columns of this matrix span $\text{Im } \partial_p = B_{p-1}$, so $\text{rank } \partial_p = \dim B_{p-1} = b_{p-1}$. The boundary matrix can brought into the Smith Normal Form via a generalization of Gaussian elimination which applies to any principal ideal domain (this includes fields). This results in a matrix with a number of 1's on the diagonal equal to the rank of the boundary matrix. Gaussian elimination for $\partial_p$ takes $\mathcal{O}(n_{p-1}n_p \min(n_p,n_{p-1}))$ time and requires $\mathcal{O}((n_{p-1} + n_p)^2)$ memory \cite{zomorodian2005,edelsbrunnerTop}.  {When working over arbitrary fields, the same procedure holds with the exception that the matrix elements of the boundary operator can be $\pm 1$ or $0$.} 

{In the worst-case scenario however, the number of $p$ simplices grows exponentially with the number of points in the complex, so this procedure becomes intractable for large data sets. This motivates the problem of finding efficient classical and quantum algorithms for extracting Betti numbers of simplicial complexes.} 

\section{Schemes for Dicke state preparation}
\label{append:Dicke}
\subsection{First scheme}\label{append:Dicke1}
{In our first scheme we prepare $n$ registers with $n_{\rm seed}:=\lceil\log cn\rceil$ qubits in equal superposition.
Then the principle is to find a threshold such that $k$ registers are less than or equal to this threshold, and the inequality test is used to set the ones in the Dicke state.
In the limit of large $n_{\rm seed}$, there is very low amplitude for the registers to be equal to each other, which can cause a failure.
The problem is that making $n_{\rm seed}$ larger increases the complexity.
Reducing $n_{\rm seed}$ will reduce the complexity but increase the amplitude for failure.
We will adjust the value of $c$ in order to make both the complexity and probability of failure reasonably small.

The steps of the} procedure are as follows.
\begin{enumerate}
    \item Prepare $n$ registers in an equal superposition of {$2^{n_{\rm seed}}$ values.}
    This may be done with Hadamards (no non-Clifford gates).
    \item Sum the most significant bits of these registers. This may be done with no more than $n$ Toffolis.
    This gives the number of registers at least as large as $10000\ldots$.
    \item Perform an inequality test with $k$ (with $\log n$ Toffolis) and place the result in an ancilla.
    That is, we are testing if the number is $\ge k$, and we call the bit representing the result $b_1$.
    \item For $j=2$ to $n_{\rm seed}$, we perform the following.
    \begin{enumerate}
        \item Perform an inequality test on each register checking if the first $j$ bits are $\ge b_1\ldots b_{j-1}1$.
        The cost is $(j-1)n$ Toffolis.
        \item Sum the results of those inequality tests, with cost $n$.
        \item Perform an inequality test if that number is $\ge k$, setting the result as $b_j$, with cost $\log n$.
    \end{enumerate}
    \item Perform an inequality test if all $n_{\rm seed}$ bits are $\ge b_1\ldots b_{n_{\rm seed}}$.
    The cost is $n_{\rm seed} n$ Toffolis.
    \item Sum the results of those inequality tests, with cost $n$.
    \item Check if the result is equal to $k$, with cost $\log n$.
\end{enumerate}

The logic of this procedure is that whenever {there are more than $k$ registers greater than or equal to the threshold,} we increase the threshold which is given by the bits $b_1b_2\ldots$, otherwise we reduce it.
{This test corresponds to the inequality test $\ge k$ in part 4(c), and the threshold is adjusted by choosing $b_j$.}
At the end, provided we have success where there sum of the ones was equal to $k$, then we have $k$ ones in a superposition of permutations corresponding to a Dicke state.
These are obtained by the inequality tests in step 5, and we check there are exactly $k$ ones in steps 6 and 7.
{The prepared state} is entangled with the ancilla registers, but this is suitable for our application.
Summing all the costs in this procedure gives a total Toffoli cost
\begin{equation}
    (n_{\rm seed}+1)\left[\frac n2  (n_{\rm seed}+2)+ \lceil \log n\rceil\right].
\end{equation}

{
To derive the probability of success, note that the failure occurs where it is not possible to provide a threshold where there are $k$ numbers greater than or equal to it.
That will be true if the $k$th smallest and $k+1$th smallest numbers are equal.
In other words, if we were to sort our numbers, and compare the values in register $k$ and register $k+1$, there would be failure if these numbers were equal.
(Note that we do not perform this sort in practice.)

The success probability for a given $k$ may be given in the form of a sum as
\begin{equation}\label{eq:exactprob}
    \frac 1{(cn)^n} \binom{n}{k} \sum_{\ell=1}^{cn} [\ell^k-(\ell-1)^k] (cn-\ell)^{n-k} \, .
\end{equation}
This expression is obtained by considering values up to $\ell$ in registers $1$ to $k$, and greater than $\ell$ for registers $k+1$ to $n$.
The number of combinations of $k$ registers with a maximum of $\ell$ is $\ell^k-(\ell-1)^k$, and the number of combinations of $n-k$ registers with values above $\ell$ is $(cn-\ell)^{n-k}$.
The factor of $\binom{n}{k}$ is to account for the number of positions to choose the $k$ registers with maximum value $\ell$.
Summing over $\ell$ then gives the number of combinations of values where there is a success when sorting the numbers, and we divide by the total number $(cn)^n$ to give the probability.

This sum can be approximated by an integral to give
\begin{align}\label{eq:intaprx}
& \binom{n}{k} \sum_{\ell=0}^{cn} \left(\frac {\ell}{cn}\right)^k \left(1-\frac {\ell}{cn}\right)^{n-k}
- \binom{n}{k} \sum_{\ell=1}^{cn} \left(\frac {\ell}{cn}-\frac {1}{cn}\right)^k \left(1-\frac {\ell}{cn}\right)^{n-k} \nn
    &\approx cn \binom{n}{k} \int_{0}^{1} dx \, x^k (1-x)^{n-k} - cn \binom{n}{k} \int_{1/cn}^{1} dx \, \left(x-\frac {1}{cn}\right)^k (1-x)^{n-k} \nn
    &= \frac{cn}{n+1} - \frac{(cn-1)^{n+1}}{(cn)^{n}(n+1)} \nn
    &= \frac{cn}{n+1} \left[ 1- \left(1-\frac 1{cn}\right)^{n+1} \right] \nn
    &\ge 1 - \frac {1}{2c} \, .
\end{align}
The integrals are evaluated by repeated integration by parts which cancels the factor of $\binom{n}{k}$.
This shows that the exact expression is well approximated by an upper bound of $1-1/2c$.
See Appendix \ref{sec:probs} for analysis of the error in this approximation, showing that it is of order $\mathcal{O}(c^{-(\min(k,n-k)+1)})$.
Moreover, we show that $1-1/2c$ is an exact lower bound to the success probability.
}

In practice, we find that moderate constant values of $c$ are suitable for minimising the complexity, and we will take $c=8$ in examples below.
That only increases the cost about $3\%$, while only needing another 3 qubits to store the numbers.
The result of taking $c$ constant is that the Toffoli complexity is approximately $(n/2)\log^2 n$.
In comparison, in the sorting approach from \cite{BerryNPJ18}, the complexity is approximately $2 n_{\rm seed} n \log n$, where $n\log n$ is the number of steps in the sorting network.
Moreover, that approach fails when \emph{any} {of the numbers in the sorted registers} have equal values, which requires taking {$2^{n_{\rm seed}}\gtrsim n^2$} to provide a reasonable probability of success.
That results in a preparation complexity of approximately {$4n\log^2 n$, or 8} times what we have provided here.

As a simple example where our threshold procedure is performed, consider $n=4$ and $c=4$, so we have 4 registers of 4 qubits.
An example of a basis state is
\begin{equation}
    \ket{0110}  \ket{1110}  \ket{0111}  \ket{0010}.
\end{equation}
Now say we aim for $k=2$.
The steps are as follows.
\begin{itemize}
    \item Sum the first (most significant bits) giving 1.
    This is equivalent to checking how many registers are at least $1000$.
    The sum is 1, which is less than $k$, so $b_1=0$.
    \item Perform an inequality test on the first two bits of each register with $01$.
    We get a total of 3, which is greater than $k$, so $b_2=1$.
    \item Perform an inequality test of the first three bits with $011$.
    We get 3 again, so $b_3=1$.
    \item Perform an inequality test with $0111$.
    We get a total of 2, which is equal to $k$, so we get $b_4=1$.
    \item Perform an inequality test with $0111$ again.
    Now test equality with $k$, which succeeds, so we get overall success.
\end{itemize}
As an alternative to the above example, consider the case that the basis state is
\begin{equation}
    \ket{0110}  \ket{1110}  \ket{0110}  \ket{0010}.
\end{equation}
Then in the second-last step, the total would be 1, so we would have $b_4=0$, and in the last step we would test inequality with $0110$.
That would give three greater than or equal to $0110$, so we would fail the test that the number is equal to $k$.
In this case, if you sort the numbers you have $0010$, $0110$, $0110$, $1110$, and you can see that the second and third numbers are equal.
This is the condition for failure discussed above.

\subsection{Second scheme}\label{append:Dicke2}
Here we provide an alternative scheme for Dicke state preparation that {can be more efficient,} but provides a lower amplitude for success.
It is useful in the case of large $n$ but small $k$ (smaller than $\sqrt n$), where $\binom{n}{k}\sim n^k/k!$.
The steps of this scheme are as follows.
\begin{enumerate}
    \item First prepare $k$ registers in equal superposition states over $n$ values.
    The complexity of preparing an equal superposition over $n$ values (with high probability of success) is $4\lceil \log n\rceil+1$ Toffolis \cite{SandersPRQ20}, so this would give complexity $k(4\lceil \log n\rceil+1)$.
    In the case where $n$ is a power of 2, then this preparation can be performed just with Hadamards.
    \item For each of the $k$ registers, apply unary iteration as in \cite{BabbushPRX18} with the $n$ qubits for the Dicke state as the target.
    This is used to flip the corresponding qubit.
    The complexity of each unary iteration is $n-2$, giving total complexity $k(n-2)$.
    There will be approximately $\log n$ temporary ancillas used in the unary iteration that are reset to zero.
    \item Now we sum the bits in the string of $n$ qubits.
    A method of summing bits is given in \cite{SandersPRQ20}, where multiple groups of bits are summed, and their sums are summed.
The overall complexity is no more than $2n$ Toffolis, and only a logarithmic number of ancillas is used.
The ancillas used that need to be kept for later stages of the calculation can be given as $\lceil \log n\rceil$, and there are $2\lceil \log n\rceil$ temporary ancillas used.
\item The sum of the bits is compared to $k$.
This has complexity $\lceil \log k\rceil$ because we are guaranteed that the number of ones is at most $k$.
\end{enumerate}

As before, this is a scheme which prepares the Dicke state in an entangled state with an ancilla.
The overall complexity is then
\begin{equation}
    (k+2)n + k(4\lceil \log n\rceil-1) + \lceil \log k\rceil
\end{equation}
Toffolis with the number of ancillas used being
\begin{equation}
    (k+1)\lceil \log n\rceil
\end{equation}
with a logarithmic number of working ancillas as well.
In comparison, the scheme used above uses a number of qubits scaling as $n\log n$, so is much larger when $n\gg k$.
In the case where $n$ is a power of 2, the initial preparation of the equal superposition can be just performed with Hadamards, and so the Toffoli complexity is
\begin{equation}
    (k+2)n - 2k + \lceil \log k\rceil.
\end{equation}

There is a probability of failure for the preparation, because it is possible for the ones to overlap, which will be detected in the final stage where the sum of bits is compared to $k$.
This is an example of the birthday problem, and the probability of success will be at least $1/2$ when $k$ is not larger than approximately $\sqrt n$.
The probability of success is more specifically given by
\begin{equation}
    \frac{k!}{n^k} \binom{n}{k}.
\end{equation}
It is possible to perform amplitude amplification on this step, but it is simpler to combine this step with the clique finding.
The net result on the complexity is that wherever we have $\binom{n}{k}$ in the original costs, it is replaced with $n^k/k!$.

\section{Accuracy of Dicke preparation success probability}
\label{sec:probs}
{ 
Here we analyse the accuracy of the approximation of the sums by integrals in Eq.~\eqref{eq:intaprx}.
The sums correspond to the integrals discretised in steps of $1/cn$, so the approximation can be expected to converge for large $c$.
Even for moderate $c$ the approximation of the sums by integrals is highly accurate, because the integrands are zero at the bounds of the integrals, and the first non-zero derivatives at the bounds are of order $\min(k,n-k)$.
That means the error term in the Euler–Maclaurin formula will correspond to the derivative of order $\min(k,n-k)$.

In particular, since the summands have nonzero derivatives only up to order $n$, the first sum can be given by
\begin{align}
    \sum_{\ell=0}^{cn} \left(\frac {\ell}{cn}\right)^k \left(1-\frac {\ell}{cn}\right)^{n-k} &= \int_{0}^{cn} d\ell \left(\frac {\ell}{cn}\right)^k \left(1-\frac {\ell}{cn}\right)^{n-k} + \frac{f(0)+f(cn)}{2} \nn
    & \quad + \sum_{j=1}^{\lceil\tfrac n2 \rceil} \frac{B_{2j}}{(2j)!} \left[ f^{2j-1}(cn) - f^{2j-1}(0) \right] \,
\end{align}
where $f(\ell)$ is the summand, $B_{2j}$ are Bernoulli numbers, and the upper bound on the sum over $j$ is chosen so that only derivatives up to order $n$ are included.
Now the summand is zero at the bounds, $f^{2j-1}(0)$ is nonzero only for $2j-1\ge k$, and  $f^{2j-1}(cn)$ is nonzero only for $2j-1\ge n-k$.
When $2j-1\ge k$, the only nonzero term in the derivative at $\ell=0$ is for $k$ derivatives of $(\ell/cn)^k$ and $2j-1-k$ derivatives of $(1-\ell/cn)^{n-k}$.
This is because we need exactly $k$ derivatives of $(\ell/cn)^k$ for it to be nonzero at $\ell=0$.
The general Leibniz rule therefore gives
\begin{align}
    \left. \frac{d^{2j-1}}{d\ell^{2j-1}} \left(\frac {\ell}{cn}\right)^k\left(1-\frac {\ell}{cn}\right)^{n-k}\right|_{\ell=0}
    &= \left. \binom{2j-1}{k}
    \left\{\frac{d^{k}}{d\ell^{k}}\left(\frac {\ell}{cn}\right)^k \right\}
    \left\{\frac{d^{2j-1-k}}{d\ell^{2j-1-k}}\left(1-\frac {\ell}{cn}\right)^{n-k}\right\}\right|_{\ell=0} \nn
    &= (-1)^{2j-1-k} \frac{(2j-1)!}{(cn)^{2j-1}k!(2j-1-k)!} k! \frac{(n-k)!}{[n-k-(2j-1-k)]!} \nn
    &= -(-1)^k \frac{(2j-1)!(n-k)!}{(cn)^{2j-1}(2j-1-k)!(n-2j+1)!} \, .
\end{align}
Similarly, the only nonzero term in the derivative at $\ell=cn$ is for $n-k$ derivatives of $(1-\ell/cn)^{n-k}$, and $2j-1-n+k$ derivatives of $(\ell/cn)^k$.
In that case the general Leibniz rule gives
\begin{align}
    \left. \frac{d^{2j-1}}{d\ell^{2j-1}} \left(\frac {\ell}{cn}\right)^k\left(1-\frac {\ell}{cn}\right)^{n-k}\right|_{\ell=cn}
    &= \left. \binom{2j-1}{n-k}
    \left\{\frac{d^{2j-1-n+k}}{d\ell^{2j-1-n+k}}\left(\frac {\ell}{cn}\right)^k \right\}
    \left\{\frac{d^{n-k}}{d\ell^{n-k}}\left(1-\frac {\ell}{cn}\right)^{n-k}\right\}\right|_{\ell=cn} \nn
    &= (-1)^{n-k} \frac{(2j-1)!}{(cn)^{2j-1}(n-k)!(2j-1-n+k)!} \frac{k!}{k-(2j-1-n+k)]!} (n-k)! \nn
    &= (-1)^{n-k} \frac{(2j-1)!k!}{(cn)^{2j-1}(2j-1-n+k)!(n-2j+1)!} \, .
\end{align}
These values for the derivatives then give us
\begin{align}
    &\sum_{\ell=0}^{cn} \left(\frac {\ell}{cn}\right)^k \left(1-\frac {\ell}{cn}\right)^{n-k} \nn
    &= \int_{0}^{cn} d\ell \left(\frac {\ell}{cn}\right)^k \left(1-\frac {\ell}{cn}\right)^{n-k}  + \sum_{j=\lceil\tfrac {k+1}2 \rceil}^{\lceil\tfrac n2 \rceil} 
    \frac{B_{2j}}{(2j)!} (-1)^k \frac{(2j-1)!(n-k)!}{(cn)^{2j-1}(2j-1-k)!(n-2j+1)!}\nn
    & \quad 
    +\sum_{j=\lceil\tfrac {n-k+1}2 \rceil}^{\lceil\tfrac n2 \rceil} 
    \frac{B_{2j}}{(2j)!} (-1)^{n-k} \frac{(2j-1)!k!}{(cn)^{2j-1}(2j-1-n+k)!(n-2j+1)!}
    \nn
&= \int_{0}^{cn} d\ell \left(\frac {\ell}{cn}\right)^k \left(1-\frac {\ell}{cn}\right)^{n-k}  + \sum_{j=\lceil\tfrac {k+1}2 \rceil}^{\lceil\tfrac n2 \rceil}
\frac{B_{2j}}{(2j)}(-1)^k \frac{(n-k)!}{(cn)^{2j-1}(2j-1-k)!(n-2j+1)!}  \nn
    & \quad +\sum_{j=\lceil\tfrac {n-k+1}2 \rceil}^{\lceil\tfrac n2 \rceil}
\frac{B_{2j}}{(2j)} (-1)^{n-k} \frac{k!}{(cn)^{2j-1}(2j-1-n+k)!(n-2j+1)!}
\end{align}
Next, for the evaluation of the second integral, we have exactly the same reasoning, except the values of the derivatives at the bounds are slightly different.
For the value at $\ell=1$ no (instead of $\ell=0$), we now have a factor of
\begin{equation}
    \left. \left( 1- \frac{\ell}{cn} \right)^{n-k-(2j-1-k)}\right|_{\ell=1} = \left( 1- \frac{1}{cn} \right)^{n-2j+1} .
\end{equation}
Similarly, for the values of the derivatives at $\ell=cn$, we obtain a factor of
\begin{equation}
    \left. \left( \frac{\ell}{cn} - \frac{1}{cn} \right)^{k-(2j-1-n+k)}\right|_{\ell=cn} = \left( 1- \frac{1}{cn} \right)^{n-2j+1} ,
\end{equation}
which is the same.

Using these factors and combining the expressions for the two sums then gives
\begin{align}
& \binom{n}{k} \sum_{\ell=0}^{cn} \left(\frac {\ell}{cn}\right)^k \left(1-\frac {\ell}{cn}\right)^{n-k}
- \binom{n}{k} \sum_{\ell=1}^{cn} \left(\frac {\ell}{cn}-\frac {1}{cn}\right)^k \left(1-\frac {\ell}{cn}\right)^{n-k} 
\nn
&= \frac{cn}{n+1} \left[ 1- \left(1-\frac 1{cn}\right)^{n+1} \right] 
     + \sum_{j=\lceil\tfrac {k+1}2 \rceil}^{\lceil\tfrac n2 \rceil}
    D(n,k,c,j) +\sum_{j=\lceil\tfrac {n-k+1}2 \rceil}^{\lceil\tfrac n2 \rceil}
    D(n,n-k,c,j) \, ,
\end{align}
where
\begin{equation}
    D(n,k,c,j) := (-1)^k\frac{n!}{k!} \frac{B_{2j}}{2j(cn)^{2j-1}(2 j - 1 - k)!(n - 2 j + 1)!}\left[ 1- \left(1-\frac 1{cn}\right)^{n-2j+1} \right] \, .
\end{equation}
We will show that $|D(n,k,c,j)|$ decreases with $j$, so the dominant terms come from the smallest $j$.
Moreover, we will show that the largest values come from taking $j=1$ with $k=1$ or $n-k=1$, in which case it is of order $1/c^2$.
That means the largest value the correction can take is of higher order than the failure probability.

To show that $|D(n,k,c,j)|$ decreases with $j$, first note that
\begin{equation}
    B_{2j} = \frac{(-1)^{j+1}2(2j)!}{(2\pi)^{2j}} \zeta(2j) \, ,
\end{equation}
so the monotonic decreasing property of the Riemann zeta function $\zeta(2j)$ gives
\begin{equation}
    \left|\frac{B_{2j+2}}{B_{2j}}\right| < \frac{2(j+1)(2j)}{(2\pi)^{2}}  \, .
\end{equation}
We can therefore upper bound the ratio of $|D(n,k,c,j+1)|$ to $|D(n,k,c,j)|$ as
\begin{align}
    \frac{|D(n,k,c,j+1)|}{|D(n,k,c,j)|} < \frac{(2j)(2j+1)(n - 2 j)(n - 2 j + 1)}{(2\pi cn)^2(2 j - k)(2 j + 1 - k)} < \frac 1{(2\pi c)^2} \, .
\end{align}
Here we have used the fact that the factor in square brackets for $D(n,k,c,j)$ is monotonically decreasing in $j$.
Note that in the sums we use both $D(n,k,c,j)$ and $D(n,n-k,c,j)$, and the same reasoning holds for both.
This shows that the summands $D(n,k,c,j)$ and $D(n,n-k,c,j)$ rapidly decrease with $j$, so the value for the smallest $j$ is dominant.

Now let us consider the size of $D(n,k,c,j)$ for the smallest $j$ in the sum.
For $k=1$, the sum starts from $j=1$, which gives
\begin{align}\label{eq:largest}
    &\frac{n!}{k!}\frac{B_{2j}}{2j(cn)^{2j-1}(2 j - 1 - k)!(n - 2 j + 1)!}\left[ 1- \left(1-\frac 1{cn}\right)^{n-2j+1} \right]  \nn
    &= \frac{1}{12c}\left[ 1- \left(1-\frac 1{cn}\right)^{n-1} \right] \nn
    &\le \frac{n-1}{12nc^2} \, ,
\end{align}
where we have used $B_2=1/6$.
Then for $k=2$, the sum starts from $j=2$, which gives
\begin{align}
    |D(n,2,c,2)|
    &= \frac{(n-1)(n-2)}{240n^2 c^3}\left[ 1- \left(1-\frac 1{cn}\right)^{n-3} \right] \nn
    &\le \frac{(n-1)(n-2)(n-3)}{240n^3 c^4} \, ,
\end{align}
where we have used $B_4=-1/30$.

Thus we find that the first term in the sum is $\mathcal{O}(c^{-(k+1)})$ for $k=1$ or $2$.
Next we will consider the value as we increase $k$ in steps of 2.
Let us put $j_k := \lceil\tfrac {k+1}2 \rceil$ for the first $j$ in the sum.
Regardless of whether $k$ is even or odd we find $(2j_k-1-k)!=1$.
For the case of odd $k$ we have $2j_k=k+1$, so 
\begin{align}
    \frac{|D(n,k+2,c,j_{k+2})|}{|D(n,k,c,j_k)|} < \frac{(2j_k)(2j_k+1)(n - 2 j)(n - 2 j + 1)}{(2\pi cn)^2(k+1)(k+2)} < \frac {1}{(2\pi c)^2} \, .
\end{align}
This shows that the sum is decreasing when we consider steps of 2 in $k$.
For $k$ even we have $2j_k=k+2$, so
\begin{align}
    \frac{|D(n,k+2,c,j_{k+2})|}{|D(n,k,c,j_k)|} < \frac {k+3}{(2\pi c)^2(k+1)} \le \frac {5}{3(2\pi c)^2} \, ,
\end{align}
which shows the sum is still decreasing.

In either case, increasing the value of $k$ by 2 corresponds to reducing the value by at least $c^2$, so we find that this first term in the sum is $\mathcal{O}(c^{-(k+1)})$ in general.
Moreover, because we have shown that successive terms in the sum are smaller by factors of $1/(2\pi c)^2$, the entire sum is $\mathcal{O}(c^{-(k+2)})$.
We have two sums, one with $k$ and one with $n-k$, so the total of the two sums over $j$ can be given as
\begin{equation}
    \mathcal{O} \left( \frac 1{c^{\min(k,n-k)+1}}\right) \, .
\end{equation}

These results can be used to show a lower bound of $1-1/2c$ for the success probability.
We will take $n\ge 3$, since in the trivial case $n=2$ we find the probability is exactly $1-1/2c$.
First, we use
\begin{align}\label{eq:combnd}
0&\le (n - 2)n - (n - 2)(n - 3)/3 - n^2/3 \nn
n^2/3 &\le (n - 2)n - (n - 2)(n - 3)/3 \nn
\frac 1{3}    &\le   \frac{(n-2)}{n}-\frac{(n-2)(n-3)}{3n^2} \nn
\frac 1{2\pi^2}    &<   \frac{(n-2)}{n}-\frac{(n-2)(n-3)}{3n^2} \nn
\frac 1{2\pi^2}    &<   \frac{c(n-2)}{n}-\frac{(n-2)(n-3)}{3n^2} \nn
\frac{n-1}{nc} \frac 1{(2\pi c)^2}    &<   \frac{(n-1)(n-2)}{2(nc)^2}-\frac{(n-1)(n-2)(n-3)}{6(nc)^3} \nn
\frac{n-1}{nc} \left( 1-\frac 1{(2\pi c)^2} \right)     &>  \frac{n-1}{nc} - \frac{(n-1)(n-2)}{2(nc)^2}+\frac{(n-1)(n-2)(n-3)}{6(nc)^3} \nn
\frac{n-1}{nc} \left( 1-\frac 1{(2\pi c)^2} \right)     &>  \left[ 1- \left(1-\frac 1{cn}\right)^{n-1} \right] \nn
    \frac{n-1}{6nc^2 }      &>  \frac{1}{6c}\left[ 1- \left(1-\frac 1{cn}\right)^{n-1} \right] \frac{1}{1-\frac 1{(2\pi c)^2}} \, .
\end{align}
In the fifth line we have used $c\ge 1$.
Next, we lower bound the probability of success by upper bounding the sums over $j$ by the cases with $k=1$ or $n-k=1$, and using a factor of $(1-1/(2\pi c)^2)^{-1}$ to account for the sum where each successive term is less than the previsou by at least a factor of $1/(2\pi c)^2$.
We find that
\begin{align}
& \binom{n}{k} \sum_{\ell=0}^{cn} \left(\frac {\ell}{cn}\right)^k \left(1-\frac {\ell}{cn}\right)^{n-k}
- \binom{n}{k} \sum_{\ell=1}^{cn} \left(\frac {\ell}{cn}-\frac {1}{cn}\right)^k \left(1-\frac {\ell}{cn}\right)^{n-k} 
\nn
&\ge \frac{cn}{n+1} \left[ 1- \left(1-\frac 1{cn}\right)^{n+1} \right] 
     -2 \frac{1}{12c}\left[ 1- \left(1-\frac 1{cn}\right)^{n-1} \right] \frac{1}{1-\frac 1{(2\pi c)^2}} \nn
    &\ge 1 -  \frac {1}{2c} + \frac{n-1}{6nc^2 }      -  \frac{1}{6c}\left[ 1- \left(1-\frac 1{cn}\right)^{n-1} \right] \frac{1}{1-\frac 1{(2\pi c)^2}} \nn
    &> 1 -  \frac {1}{2c} \, .
\end{align}
In the last line we have used the bound we derived in Eq.~\eqref{eq:combnd}.
Including the case $n-2$ as well, this shows that the probability of success is lower bounded by $1-1/2c$ in general.

\begin{figure}[bh]
\centering
\label{fig:interr}
  \includegraphics[width=0.6\linewidth]{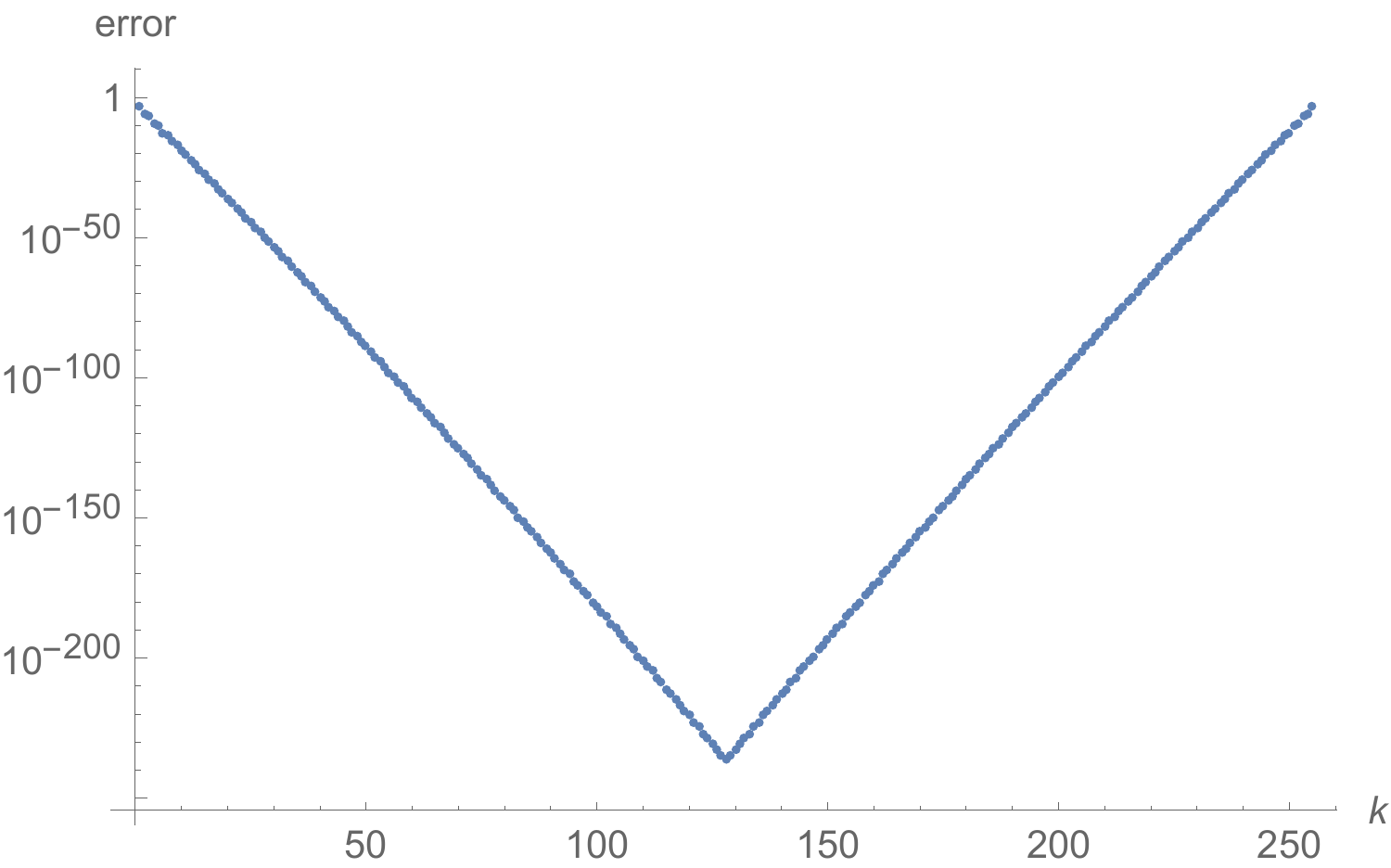}
  \caption{The error in the integral approximation in Eq.~\eqref{eq:intaprx} as a function of $k$ for $n=256$, $c=8$.}
\end{figure}

For the example where $n=256$ and $c=8$, the integral approximation gives estimated success probability $0.94001551223575$.
We show the error in the integral approximation in Fig.~\ref{fig:interr} for this example.
As can be expected from our analysis of the Euler-Maclaurin formula, it is found that the integral approximation is more accurate as $\min(k,n-k)$ is increased.
For $k$ near $n/2$ the approximation is accurate to well over $200$ decimal places.
For $k=16$, the integral approximation is accurate to about $30$ decimal places.
For $k=2$ the error in the approximation is still less than $10^{-6}$, though it increases to $0.0012$ for $k=1$ or $n-1$.

We can also more simply prove that $1-1/2c$ lower bounds the probability of success when averaging over $k$.
Summing Eq.~\eqref{eq:exactprob} over $k=1$ to $n-1$ gives
\begin{align}
   & \frac 1{(cn)^n} \sum_{k=1}^{n-1} \binom{n}{k} \sum_{\ell=1}^{cn} [\ell^k-(\ell-1)^k] (cn-\ell)^{n-k} \nn
   &= \frac 1{(cn)^n} \sum_{k=0}^{n} \binom{n}{k} \sum_{\ell=1}^{cn} [\ell^k-(\ell-1)^k] (cn-\ell)^{n-k} - \frac 1{(cn)^n} \sum_{\ell=1}^{cn} [\ell^n-(\ell-1)^n] \nn
   &= \frac 1{(cn)^n} \sum_{\ell=1}^{cn} [(cn)^n-(cn-1)^n]  - \frac 1{(cn)^n} (cn)^n \nn
   &= cn-1 - \frac{(cn-1)^n}{(cn)^{n-1}} \nn
   &\ge cn-1 - \frac{(cn)^n-n(cn)^{n-1}+n(n-1)(cn)^{n-2}}{(cn)^{n-1}} \nn
   &= (n-1) \left( 1 - \frac 1{2c} \right) \, .
\end{align}
Dividing by $n-1$ for the number of values of $k$ then gives a lower bound of $1-1/2c$ for the average success probability.}

\section{Proof of complexity of amplitude estimation}
\label{append:proofest}

For the complexity of amplitude estimation, the standard approach is to use phase estimation on the Grover iterate of amplitude amplification.
If there is an initial amplitude of $a$, then the phase of each step of amplitude amplification is $2\arcsin{a}$.
The original proposal was to use control registers in the phase estimation in an equal superposition, but of course for phase estimation that is a poor choice.
Here we would like a small probability of error beyond a given confidence interval, and for that case it is better to use a Kaiser window.

When applying phase measurement, we would start with a control state of the form (omitting normalisation)
\begin{equation}
    \sum_{m=-N}^N \frac 1{2N} \frac{I_0\left( \pi\alpha\sqrt{1-(m/N)^2}\right)}{I_0(\pi\alpha)} |m\rangle ,
\end{equation}
{where $I_0$ is a zeroth-order modified Bessel function of the first kind.}
If we call the operator combining $U$ and the reflection on the flag qubit $W$, then
we would then control between applications of $W$ and $W^\dagger$ with eigenvalue $e^{i\theta}$ to give
\begin{equation}
    \sum_{m=-N}^N \frac 1{2N} \frac{I_0\left( \pi\alpha\sqrt{1-(m/N)^2}\right)}{I_0(\pi\alpha)} e^{im\theta} |m\rangle.
\end{equation}
The inverse quantum Fourier transform then corresponds to an inner product with a phase state
\begin{equation}
    \frac 1{\sqrt{2N+1}}\sum_{m=-N}^N e^{im\hat \theta} |m\rangle .
\end{equation}
The inner product then gives the Fourier transform, so is proportional to
\begin{equation}
    \frac{\sin\left( \sqrt{N^2(\hat\theta-\theta)^2-(\pi\alpha)^2} \right)}{I_0(\pi\alpha) \sqrt{N^2(\hat\theta-\theta)^2-(\pi\alpha)^2}}.
\end{equation}
This needs to be squared to give the probability distribution for the error in the phase measurement.

The distribution has its first zero for $\theta=(\pi/N)\sqrt{1+\alpha^2}$, so to estimate the probability in the wings of the distribution we should integrate past that point.
We also have the difficulty that we are not given the exact normalisation of the probability distribution.
To approximate the normalisation, we can approximate the centre of the distribution by a Gaussian.
The approximation can be found by taking the Taylor series of the log of the distribution about zero, and gives
\begin{equation}
    \frac{\sinh^2(\pi\alpha)}{\pi^2\alpha^2 I_0^2(\pi\alpha)} e^{-N^2(\pi\alpha\coth(\pi\alpha)-1)\Delta\theta^2/(\pi^2\alpha^2)},
\end{equation}
where we have replaced $\hat\theta-\theta=\Delta\theta$.
Taking the integral over $\Delta\theta$ then gives
\begin{equation}
    \frac{\sinh^2(\pi\alpha)}{N I_0^2(\pi\alpha)\sqrt\pi\alpha\sqrt{\pi\alpha\coth(\pi\alpha)-1}}.
\end{equation}
It is found that this expression is asymptotically
\begin{equation}
    \frac {\pi}{2N\sqrt{\alpha}} + \mathcal{O}(\alpha^{-3/2}).
\end{equation}
This can be found using the asymptotic properties of Bessel functions,
\begin{equation}
    \frac 1{I_0^2(\pi\alpha)} \approx \frac{2\pi^2\alpha}{e^{2\pi\alpha}}
\end{equation}
and $\sinh^2(\pi\alpha)\approx e^{2\pi\alpha}/4$ and $\coth(\pi\alpha)\approx 1$, so
\begin{equation}
    \frac{\sinh^2(\pi\alpha)}{N I_0^2(\pi\alpha)\sqrt\pi\alpha\sqrt{\pi\alpha\coth(\pi\alpha)-1}}
    \approx \frac{2\pi^2\alpha}{e^{2\pi\alpha}} \frac{e^{2\pi\alpha}}4 \frac 1{N\sqrt\pi \alpha \sqrt{\pi\alpha-1}}.
\end{equation}
That gives the asymptotic expression claimed.

Now, for the integral over the tails we can upper bound the probability by that where we replace the sin with 1, so we have an upper bound
\begin{equation}
    2\int_{(\pi/N)\sqrt{1+\alpha^2}}^\infty \frac{1}{I_0^2(\pi\alpha) [N^2\delta\theta^2-(\pi\alpha)^2]} d\Delta\theta =  \frac 1{I_0^2(\pi\alpha)} \frac{2\,{\rm arcsinh}(\alpha)}{\pi N \alpha}.
\end{equation}
Now using ${\rm arcsinh}(\alpha)\approx \ln(2\alpha)$, we have the asymptotic expression
\begin{equation}
    \frac{2\pi^2\alpha}{e^{2\pi\alpha}} \frac{2\ln(2\alpha)}{\pi N \alpha} = \frac{4\pi \ln(2\alpha)}{N e^{2\pi\alpha}}.
\end{equation}
Dividing by the asymptotic expression for the normalisation then gives
\begin{equation}
    \frac{4\pi \ln(2\alpha)}{N e^{2\pi\alpha}}\frac {2N\sqrt{\alpha}}{\pi} = 8\ln(2\alpha) \sqrt{\alpha} \, e^{-2\pi\alpha}.
\end{equation}
This tells us that, if we want probability of error outside the range given by $\delta$, then we should take
\begin{equation}
    \ln (1/\delta) \approx 2\pi\alpha - \ln[8\ln(2\alpha) \sqrt{\alpha}].
\end{equation}
Solving for $\alpha$ then gives
\begin{equation}
    \alpha = (1/2\pi)\ln (1/\delta) + \mathcal{O}(\ln \ln (1/\delta)).
\end{equation}
The size of the confidence interval is $(\pi/N)\sqrt{1+\alpha^2}$, so if that needs to be $\epsilon$, we should take
\begin{equation}
    N = \frac{\pi}{\epsilon}\sqrt{1+\alpha^2} = \frac{1}{2\epsilon} \ln(1/\delta) + \mathcal{O}(\epsilon^{-1} \ln \ln (1/\delta)).
\end{equation}
The higher-order $\ln \ln$ term for $\alpha$ is larger than the correction term for approximating $\sqrt{1+\alpha^2}$ by $\alpha$.
The number of calls to $U$ or $U^\dagger$ is $N$, giving the complexity stated.

\section{Qubitization with projection}
\label{append:proj}

Here we derive the expression for qubitization with a more general projection as given in Eq.~\eqref{eq:Wperp}.
When the block encoding is defined more generally using 
\begin{equation}
    \left( \ket{0}\bra{0} \otimes P \right) V \left( \ket{0}\bra{0} \otimes P \right) = \ket{0}\bra{0} \otimes H/\lambda ,
\end{equation}
then for $\ket{k}$ an eigenstate of $H$ with energy $E_k$ (and $P\ket{k}=\ket{k}$), we have
\begin{align}
    \left( \ket{0}\bra{0} \otimes P \right) V \left( \ket{0}\bra{0} \otimes P \right)\ket{0} \ket{k} &= \frac{E_k}{\lambda} \ket{0} \ket{k} \nn
    \left( \ket{0}\bra{0} \otimes P \right) V \ket{0} \ket{k} &= \frac{E_k}{\lambda} \ket{0} \ket{k} \nn
    V \ket{0} \ket{k} &= \frac{E_k}{\lambda} \ket{0} \ket{k}  + i \sqrt{1-\left| \frac{E_k}{\lambda}\right|^2}\ket{0k^\perp},
\end{align}
where $\ket{0k^\perp}$ is defined as a state such that $\left( \ket{0}\bra{0} \otimes P \right)\ket{0k^\perp} = 0$.
That is, $V$ gives an application of $H$ to the target system, with failure being flagged by states orthogonal to $\ket{0}$ on the ancilla or perpendicular to the projector $P$ on the system.
The phase factor on the orthogonal part can be chosen arbitrarily, and is chosen here as $i$ to simplify later expressions.
Then, one can define the qubiterate as $W := R V$ in the usual way except
with the reflection being
\begin{equation}
    R := i\left( 2\ket{0}\bra{0} \otimes P - I \right).
\end{equation}
This is similar to that in \cite{BerryNPJ18}, except we have included the projection $P$ in the reflection operation.
Essentially the entire chain of reasoning as in \cite{BerryNPJ18} can be used, except replacing $\ket{0}\bra{0} \otimes I$ with $\ket{0}\bra{0} \otimes P$.
Then we obtain
\begin{equation}
    W \ket{0} \ket{k} = i \frac{E_k}{\lambda} \ket{0} \ket{k}  + \sqrt{1-\left| \frac{E_k}{\lambda}\right|^2}\ket{0k^\perp}.
\end{equation}
Then, to show the correct expression for $W \ket{\chi k^\perp}$ we use
\begin{align}
W^\dagger \ket{\chi} \ket{k} &= -VR \ket{\chi} \ket{k}\nn
&=- i V \ket{\chi} \ket{k} \nn
&= i RW \ket{\chi} \ket{k}\nn
&= i R\left(  i \frac {E_k}\lambda\ket{\chi}\ket{k} + \sqrt{1-\left|\frac{E_k}\lambda\right|^2} \ket{\chi k^\perp}\right)\nn
&=- i\frac {E_k}\lambda\ket{\chi}\ket{k} +  \sqrt{1-\left|\frac{E_k}\lambda\right|^2} \ket{\chi k^\perp} \, .
\end{align}
Similarly, for $\ket{\chi k^\perp}$, we have
\begin{align}
W^\dagger \ket{\chi k^\perp} &= -VR \ket{\chi k^\perp} \nn
&= i V \ket{\chi k^\perp} \, .
\end{align}
Now applying $W^\dagger$ to the expression for $W \ket{\chi}\ket{k}$ gives
\begin{align}
\ket{\chi}\ket{k} &= i \frac {E_k}\lambda W^\dagger \ket{\chi}\ket{k} + \sqrt{1-\left|\frac{E_k}\lambda\right|^2} W^\dagger\ket{\chi k^\perp}\nn
&=  i \frac {E_k}\lambda \left(- i \frac {E_k}\lambda\ket{\chi}\ket{k} + \sqrt{1-\left|\frac{E_k}\lambda\right|^2} \ket{\chi k^\perp}\right) +i \sqrt{1-\left|\frac{E_k}\lambda\right|^2} V \ket{\chi k^\perp}\nn
&= \left| \frac {E_k}\lambda \right|^2 \ket{\chi}\ket{k}
+i \frac {E_k}\lambda
\sqrt{1-\left|\frac{E_k}\lambda\right|^2} \ket{\chi k^\perp}
+i \sqrt{1-\left|\frac{E_k}\lambda\right|^2} V \ket{\chi k^\perp}\nn
\left( 1- \left| \frac {E_k}\lambda \right|^2 \right) \ket{\chi}\ket{k} &= i \frac {E_k}\lambda
\sqrt{1-\left|\frac{E_k}\lambda\right|^2} \ket{\chi k^\perp}
+i \sqrt{1-\left|\frac{E_k}\lambda\right|^2} V \ket{\chi k^\perp}\nn
-i \sqrt{ 1- \left| \frac {E_k}\lambda \right|^2 } \ket{\chi}\ket{k} &= \frac {E_k}\lambda
 \ket{\chi k^\perp}
+ V \ket{\chi k^\perp}\nn
V \ket{\chi k^\perp} &= -\frac {E_k}\lambda
 \ket{\chi k^\perp} - i\sqrt{ 1- \left| \frac {E_k}\lambda \right|^2 } \ket{\chi}\ket{k}\nn
RV \ket{\chi k^\perp} &= i\frac {E_k}\lambda
 \ket{\chi k^\perp} + \sqrt{ 1- \left| \frac {E_k}\lambda \right|^2 } \ket{\chi}\ket{k} \, .
\end{align}
Hence we obtain Eq.~\eqref{eq:Wperp} as required.
We have corrected the extra factors of $i$ included in \cite{BerryNPJ18}.
Note that the rest of the reasoning in \cite{BerryNPJ18} is correct.

\section{Betti number and spectral gap calculations}
\label{append:proof_betti_density}

The purpose of this section is to prove Propositions \ref{prop:Betti_number}, \ref{prop:spectral_gap}, \ref{prop:Betti_number_Rips} and \ref{prop:spectral_gap_Rips}.
In this section, we will work with \emph{reduced} homology. This is identical to regular homology, except that we have an extra 1-dimensional space $C_{-1}$ and an extra boundary map $\partial_0 : C_0 \rightarrow C_{-1}$ which maps every vertex (0-simplex) to the unique basis vector of $C_{-1}$. This has the effect that the reduced homology $H_0$ is equal to the number of connected components \emph{minus one}, rather than simply the number of connected components. The rest of the homology groups $H_k$ for $k>0$ are unchanged.

\begin{definition}
Given two simplicial complexes $X$ and $Y$, define their \emph{join} $X \ast Y$ to be the simplicial complex consisting of faces $\sigma \otimes \tau := \sigma \cup \tau$ for all $\sigma \in X$, $\tau \in Y$.
\end{definition}

We observe that $K(m,k) = K(m,k-1) \ast K(m,1)$.
Moreover, the homology of the join is given by the well-known Kunneth formula.

\begin{lemma} \emph{(Kunneth formula)}
\begin{align}
\widetilde{H}_k(X \ast Y) &= \bigoplus_{i+j = k-1} \widetilde{H}_i(X) \otimes \widetilde{H}_j(Y) \\
\implies \widetilde{\beta}_k(X \ast Y) &= \sum_{i+j = k-1} \widetilde{\beta}_i(X) \widetilde{\beta}_j(Y)
\end{align}
\end{lemma}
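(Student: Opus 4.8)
The plan is to reduce the statement to the \emph{algebraic} Künneth theorem for chain complexes of vector spaces, by exhibiting the simplicial chain complex of the join as the tensor product of the reduced chain complexes of the two factors, up to a shift in degree. The whole reason for passing to reduced homology (including the extra space $C_{-1}$ generated by the empty simplex) is precisely to make this identification clean: the empty face plays the role of a multiplicative unit under the join.

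First I would handle the dimension bookkeeping. A face of $X \ast Y$ is $\sigma \cup \tau$ with $\sigma \in X \cup \{\varnothing\}$ and $\tau \in Y \cup \{\varnothing\}$, where $\varnothing$ is regarded as the unique $(-1)$-simplex contributing to $\widetilde{C}_{-1}$. Since an $i$-simplex has $i+1$ vertices, the union of an $i$-simplex and a $j$-simplex on disjoint vertex sets is an $(i+j+1)$-simplex. Grouping the basis elements of $\widetilde{C}_k(X \ast Y)$ according to the dimensions of their two parts therefore yields a vector-space isomorphism
\begin{equation}
\widetilde{C}_k(X \ast Y) \;\cong\; \bigoplus_{i+j=k-1} \widetilde{C}_i(X) \otimes \widetilde{C}_j(Y),
\end{equation}
so the join chain complex is the tensor-product complex shifted by one degree.

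The key step, and the main obstacle, is to promote this vector-space isomorphism to a chain isomorphism by pinning down the sign conventions. I would fix a total order on the vertices of $X \ast Y$ refining chosen orders on $X$ and $Y$, with all vertices of $X$ preceding those of $Y$, and send the oriented simplex $\sigma \cup \tau$ to $\sigma \otimes \tau$. One then checks that the simplicial boundary obeys the graded Leibniz rule
\begin{equation}
\partial(\sigma \cup \tau) = (\partial \sigma) \cup \tau + (-1)^{i+1}\, \sigma \cup (\partial \tau),
\end{equation}
where $i = \dim \sigma$; the sign $(-1)^{i+1}$ appears because deleting a vertex of $\tau$ must be commuted past the $i+1$ vertices of $\sigma$. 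Under the identification above this is exactly the differential on the shifted tensor-product complex, the discrepancy from the unshifted Koszul sign $(-1)^i$ being precisely the twist carried by the degree shift. Verifying this bookkeeping carefully — including the boundary maps landing in $C_{-1}$, which is where the reduced convention earns its keep — is the one genuinely fiddly part of the argument.

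Finally I would invoke the algebraic Künneth theorem: for chain complexes of vector spaces over a field $\mathbb{F}$ all $\mathrm{Tor}$ terms vanish, so $H_n(\widetilde{C}(X) \otimes \widetilde{C}(Y)) = \bigoplus_{i+j=n} \widetilde{H}_i(X) \otimes \widetilde{H}_j(Y)$. Combining this with the degree shift established above, taking $n = k-1$, gives
\begin{equation}
\widetilde{H}_k(X \ast Y) = \bigoplus_{i+j=k-1} \widetilde{H}_i(X) \otimes \widetilde{H}_j(Y),
\end{equation}
and passing to dimensions yields the stated Betti-number identity $\widetilde{\beta}_k(X \ast Y) = \sum_{i+j=k-1} \widetilde{\beta}_i(X)\,\widetilde{\beta}_j(Y)$.
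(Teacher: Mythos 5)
The paper offers no proof of this lemma at all: it invokes the K\"unneth formula for joins as a ``well-known'' fact and only uses it downstream, in the proofs of Propositions \ref{prop:Betti_number} and \ref{prop:Betti_number_Rips}. So there is no paper proof to compare against; what you have written is the standard textbook argument, and it is correct. Your degree bookkeeping is right: an $i$-simplex and a $j$-simplex on disjoint vertex sets union to an $(i+j+1)$-simplex, which forces $i+j=k-1$ and explains the shift, and treating the empty simplex as the unit in degree $-1$ is exactly the reason the paper passes to reduced homology. Your sign $(-1)^{i+1}$ in the Leibniz rule is also correct (a deleted vertex of $\tau$ commutes past the $i+1$ vertices of $\sigma$), and it is consistent with the paper's own later computation in Lemma \ref{Laplacian_lemma}, where the sign is written $(-1)^{|\sigma|}$ with $|\sigma|$ the vertex count. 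The one place where your wording is loose — attributing the discrepancy with the Koszul sign $(-1)^i$ to ``the twist carried by the degree shift'' — is harmless: the identification intertwines the join differential with the \emph{negative} of the standard tensor-product differential, and a complex and its negation have identical kernels and images, hence identical homology. Finally, your appeal to vanishing $\mathrm{Tor}$ is the step that genuinely requires field coefficients; this matches the paper's setting (chains are vector spaces over $\mathbb{F}$), and it is worth noting that over $\mathbb{Z}$ the lemma as stated would be false in general, so the hypothesis is not cosmetic.
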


We would also like to relate the Laplacian of $X \ast Y$ to the Laplacians of $X$ and $Y$.

\begin{lemma} \label{Laplacian_lemma}
Let $\sigma \in X$ be an $i$-simplex and $\tau \in Y$ a $j$-simplex with $i+j=k-1$. Then
\begin{equation}
\Delta^{X \ast Y}_k (\sigma \otimes \tau) = (\Delta^{X}_i \sigma) \otimes \tau + \sigma \otimes (\Delta^{Y}_j \tau)
\end{equation}
\end{lemma}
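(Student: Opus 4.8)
The plan is to identify the augmented (reduced) simplicial chain complex of the join $X \ast Y$ with the tensor product of the augmented chain complexes of $X$ and $Y$, and then to exploit the fact that, under the Koszul-sign tensor differential, the Hodge Laplacian splits as $\Delta \otimes 1 + 1 \otimes \Delta$. Everything here is in the reduced/augmented setting, so the $\Delta_i^X$, $\Delta_j^Y$ on the right-hand side are the reduced Laplacians (which differ from the unreduced ones only in degree $0$).

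First I would set up the chain-level isomorphism. Writing the empty simplex as the generator of $\tilde C_{-1}$, every $k$-simplex of $X \ast Y$ is uniquely $\sigma \ast \tau$ with $\sigma$ an $i$-simplex of $X$, $\tau$ a $j$-simplex of $Y$, and $i+j = k-1$ (allowing an empty factor, $i$ or $j = -1$). This gives a degree-shifted isometry
\[
\tilde C_k(X \ast Y) \;\cong\; \bigoplus_{i+j=k-1} \tilde C_i(X) \otimes \tilde C_j(Y), \qquad \sigma \ast \tau \leftrightarrow \sigma \otimes \tau,
\]
which is an isometry because the simplices on each side form orthonormal bases, so that adjoints may be taken factor-wise. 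Next I would compute the join boundary directly: applying the simplicial boundary formula to $\sigma \ast \tau = [v_0,\dots,v_i,w_0,\dots,w_j]$ and separating the deletions of the $v$'s from those of the $w$'s gives
\[
\partial^{X \ast Y}(\sigma \otimes \tau) = (\partial^X \sigma) \otimes \tau + (-1)^{i+1}\,\sigma \otimes (\partial^Y \tau),
\]
the sign $(-1)^{i+1}$ arising because the $i+1$ vertices of $\sigma$ precede those of $\tau$. Introducing the self-adjoint involution $\epsilon$ acting as $(-1)^{i+1}$ on $\tilde C_i(X)$, this says the total differential is $d = \partial^X \otimes 1 + \epsilon \otimes \partial^Y$, with adjoint $d^\dagger = \partial^{X\dagger}\otimes 1 + \epsilon \otimes \partial^{Y\dagger}$.

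The crux is then to expand $\Delta^{X\ast Y} = d^\dagger d + d\,d^\dagger$ and show the cross terms cancel. The key algebraic facts are that $\partial^X$ and $\partial^{X\dagger}$ shift degree by $\mp 1$, so $\epsilon$ anticommutes with each, $\epsilon\partial^X = -\partial^X\epsilon$ and $\epsilon\partial^{X\dagger} = -\partial^{X\dagger}\epsilon$, while $\epsilon^2 = 1$. Expanding both products, the coefficient of $\otimes\,\partial^Y$ is $\partial^{X\dagger}\epsilon + \epsilon\partial^{X\dagger} = 0$ and the coefficient of $\otimes\,\partial^{Y\dagger}$ is $\epsilon\partial^X + \partial^X\epsilon = 0$, so the mixed terms pair off and vanish, leaving
\[
\Delta^{X\ast Y} = \bigl(\partial^{X\dagger}\partial^X + \partial^X\partial^{X\dagger}\bigr)\otimes 1 + 1\otimes\bigl(\partial^{Y\dagger}\partial^Y + \partial^Y\partial^{Y\dagger}\bigr) = \Delta^X \otimes 1 + 1 \otimes \Delta^Y.
\]
Restricting to the summand $\tilde C_i(X)\otimes\tilde C_j(Y)$ with $i+j=k-1$ and evaluating on $\sigma\otimes\tau$ yields the claimed identity.

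The main obstacle is the sign bookkeeping in the last two steps: one must pin down the Koszul sign $(-1)^{i+1}$ in the join boundary and then verify that it is exactly the sign that forces the anticommutation of $\epsilon$ with $\partial^X$ and $\partial^{X\dagger}$, so that all four cross terms cancel. Everything else is routine once the isometric tensor-product identification is in place.
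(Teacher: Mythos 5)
Your proof is correct and follows essentially the same route as the paper's: identify chains of the join with the (degree-shifted) tensor product of reduced chain complexes, use the Koszul-signed Leibniz rule for $\partial$ and $\partial^\dagger$, and observe that the cross terms in $\Delta = \partial^\dagger\partial + \partial\partial^\dagger$ cancel. The paper states this tersely (writing the sign as $(-1)^{|\sigma|}$ and leaving the cancellation implicit), whereas you make the sign operator $\epsilon$ and its anticommutation with the degree-shifting maps explicit; the substance is identical.
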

\begin{proof}
Let's work in the \emph{graded algebra} $C_{-1} \oplus C_0 \oplus C_1 \oplus \dots$. We have
\begin{align*}
\Delta &= \partial^\dag \partial + \partial \partial^\dag \\
\partial(\sigma \otimes \tau) &= (\partial \sigma) \otimes \tau + (-1)^{|\sigma|} \sigma \otimes (\partial \tau) \\
\partial^\dag(\sigma \otimes \tau) &= (\partial^\dag \sigma) \otimes \tau + (-1)^{|\sigma|} \sigma \otimes (\partial^\dag \tau) \\
\implies \Delta(\sigma \otimes \tau) &= (\Delta\sigma) \otimes \tau + \sigma \otimes (\Delta\tau)
\end{align*}
\end{proof}

\begin{corollary} \label{Laplacian_corollary}
Let $\spec{\Delta}$ denote the set of eigenvalues of $\Delta$.
\begin{equation}
\spec{\Delta^{X \ast Y}_k} = \bigcup_{i+j = k-1} \spec{\Delta^X_i} + \spec{\Delta^Y_j}
\end{equation}
Here the plus notation for sets means $A + B = \{a+b : a\in A, \ b\in B\}$.
\end{corollary}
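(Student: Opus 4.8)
The plan is to lift the pointwise identity of Lemma~\ref{Laplacian_lemma} to a statement about the entire chain space by realizing $C_k(X \ast Y)$ as an orthogonal direct sum indexed by the splittings $i+j=k-1$, on each summand of which the Laplacian acts as a tensor-sum operator whose spectrum is a Minkowski sum.

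First I would record the chain-level decomposition. Since $X$ and $Y$ are defined on disjoint vertex sets, every $k$-simplex $\rho$ of $X \ast Y$ factors uniquely as $\rho = \sigma \otimes \tau = \sigma \cup \tau$ with $\sigma \in X$ and $\tau \in Y$; counting vertices gives $\dim \sigma + \dim \tau = k-1$, where in reduced homology we allow either factor to be the empty $(-1)$-simplex. Hence, working in the reduced chain complex,
\[
C_k(X \ast Y) = \bigoplus_{i+j=k-1} C_i(X) \otimes C_j(Y),
\]
and because distinct simplices are orthonormal in the chain inner product, this is an \emph{orthogonal} direct sum.

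Next I would invoke Lemma~\ref{Laplacian_lemma}. Because the Laplacian preserves degree, $\Delta^X_i \sigma \in C_i(X)$ and $\Delta^Y_j \tau \in C_j(Y)$, so the formula $\Delta^{X \ast Y}_k(\sigma \otimes \tau) = (\Delta^X_i \sigma) \otimes \tau + \sigma \otimes (\Delta^Y_j \tau)$ shows that each summand $C_i(X) \otimes C_j(Y)$ is invariant, and that $\Delta^{X \ast Y}_k$ restricts to it as the operator $\Delta^X_i \otimes I + I \otimes \Delta^Y_j$. Thus, with respect to the decomposition above, $\Delta^{X \ast Y}_k$ is block diagonal, so its spectrum is the union over $i+j=k-1$ of the spectra of the individual blocks. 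To compute the spectrum of a single block, I would note that the commuting self-adjoint operators $\Delta^X_i \otimes I$ and $I \otimes \Delta^Y_j$ are simultaneously diagonalizable: taking orthonormal eigenbases $\{u_a\}$ of $\Delta^X_i$ with eigenvalues $\mu_a$ and $\{v_b\}$ of $\Delta^Y_j$ with eigenvalues $\nu_b$, the tensors $u_a \otimes v_b$ form an eigenbasis of the block with eigenvalues $\mu_a + \nu_b$. This gives $\spec(\Delta^X_i \otimes I + I \otimes \Delta^Y_j) = \spec \Delta^X_i + \spec \Delta^Y_j$, and taking the union over all splittings yields the claimed identity.

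The tensor-sum spectral fact and the orthogonality of the decomposition are entirely standard for finite-dimensional self-adjoint operators, so the only step requiring genuine care is the bookkeeping of the reduced-homology degree shift: one must verify that the index constraint is $i+j=k-1$ rather than $i+j=k$, and that the empty-simplex ($-1$-dimensional) contributions are included so that the direct sum is exhaustive. I expect this indexing to be the main place an error could creep in, whereas the remaining arguments are routine.
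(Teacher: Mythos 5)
Your proof is correct and takes essentially the same route as the paper's: both reduce the statement to Lemma~\ref{Laplacian_lemma} applied to (joins of) eigenchains of $\Delta^X_i$ and $\Delta^Y_j$, whose eigenvalues add. The paper's proof is a one-liner that leaves implicit the exhaustiveness of these eigenchains, which your orthogonal decomposition $C_k(X \ast Y) = \bigoplus_{i+j=k-1} C_i(X) \otimes C_j(Y)$ makes explicit, so your writeup is just a more complete rendering of the same argument.
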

\begin{proof}
Use Lemma \ref{Laplacian_lemma} and let $\sigma \in C^X_i$ and $\tau \in C^Y_j$ be eigenchains of $\Delta^X_i$ and $\Delta^Y_j$ respectively.
\end{proof}

\begin{proposition}
\emph{(Restatement of Proposition \ref{prop:Betti_number}.)}

The $(k-1)^{\text{th}}$ Betti number of (the clique complex of) $K(m,k)$ is
\begin{equation}
\beta_{k-1} = (m-1)^k
\end{equation}
\end{proposition}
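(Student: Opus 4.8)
The plan is to induct downward through the join decomposition, using the Kunneth formula stated above, working throughout in reduced homology. The crucial observation is that the second factor in the decomposition $K(m,k) = K(m,k-1) \ast K(m,1)$ is always the \emph{same} simple object: $K(m,1)$ is just $m$ isolated vertices with no edges, hence homotopy equivalent to $m$ points. Its reduced homology is therefore concentrated entirely in degree $0$, with $\widetilde{\beta}_0(K(m,1)) = m-1$ (connected components minus one) and $\widetilde{\beta}_j(K(m,1)) = 0$ for $j \neq 0$. This one fact is what makes the Kunneth sum collapse to a single term at every stage.

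Next I would extract the recursion. Applying the Kunneth formula with $X = K(m,k-1)$ and $Y = K(m,1)$ gives
\[
\widetilde{\beta}_\ell(K(m,k)) = \sum_{i+j=\ell-1} \widetilde{\beta}_i(K(m,k-1))\,\widetilde{\beta}_j(K(m,1)).
\]
Because $\widetilde{\beta}_j(K(m,1))$ vanishes unless $j=0$, only the term with $j=0$, $i=\ell-1$ survives, so for every $\ell$ we obtain the clean recursion
\[
\widetilde{\beta}_\ell(K(m,k)) = (m-1)\,\widetilde{\beta}_{\ell-1}(K(m,k-1)).
\]

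Finally I would telescope. Setting $\ell = k-1$ and iterating the recursion peels off one factor of $(m-1)$ and lowers both indices by one at each step, terminating at the base object $K(m,1)$:
\[
\widetilde{\beta}_{k-1}(K(m,k)) = (m-1)^{k-1}\,\widetilde{\beta}_0(K(m,1)) = (m-1)^{k-1}(m-1) = (m-1)^k.
\]
Since the statement concerns $k \geq 2$, the relevant degree satisfies $k-1 \geq 1$, where reduced and unreduced homology agree, so $\beta_{k-1}(K(m,k)) = \widetilde{\beta}_{k-1}(K(m,k)) = (m-1)^k$ as claimed.

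There is no deep obstacle here; the argument is essentially a bookkeeping exercise once the join decomposition and Kunneth formula are in hand. The one point requiring genuine care is the reduced-versus-unreduced distinction: the whole computation hinges on using the reduced value $\widetilde{\beta}_0(K(m,1)) = m-1$ rather than the unreduced $m$, and on checking that the base object $K(m,1)$ really has homology concentrated in degree $0$. (One could alternatively carry the stronger inductive statement that $K(m,k)$ has reduced homology concentrated in degree $k-1$; this follows immediately from the same recursion applied in all degrees, but it is not actually needed to isolate $\beta_{k-1}$.)
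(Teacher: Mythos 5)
Your proof is correct and takes essentially the same route as the paper: the join decomposition $K(m,k) = K(m,k-1) \ast K(m,1)$, the reduced Betti numbers $(m-1,0,0,\dots)$ of $K(m,1)$, and induction via the Kunneth formula. You merely spell out the collapse of the Kunneth sum to a single term and the reduced-versus-unreduced bookkeeping, which the paper's one-line proof leaves implicit.
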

\begin{proof}
$K(m,k) = K(m,k-1) \ast K(m,1)$ and the Betti numbers of $K(m,1)$ are $(m-1,0,0,\dots)$. Thus by induction using the Kunneth formula, we have $\beta_{k-1} = (m-1)^k$.
\end{proof}

\begin{proposition}
\emph{(Restatement of Proposition \ref{prop:spectral_gap}.)}

The combinatorial Laplacian $\Delta_{k-1} = \partial_{k-1}^{\dagger} \partial_{k-1} + \partial_{k}\partial_{k}^{\dagger}$ of (the clique complex of) $K(m,k)$ has spectral gap
\begin{equation}
\lambda_{\min} = m
\end{equation}
\end{proposition}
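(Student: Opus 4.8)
The plan is to exploit the join decomposition $K(m,k) = K(m,k-1) \ast K(m,1)$ together with Corollary~\ref{Laplacian_corollary}, which writes the spectrum of a join Laplacian as a Minkowski sum of the spectra of the factors. I would argue by induction on $k$ that the spectrum of the top-degree Laplacian $\Delta^{K(m,k)}_{k-1}$ is a $k$-fold Minkowski sum of a single elementary two-point spectrum, from which the gap can simply be read off.

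First I would establish the base case by computing $\spec{\Delta^{K(m,1)}_0}$ directly. The complex $K(m,1)$ consists of $m$ isolated vertices, so in reduced homology its only nonzero chain groups sit in degrees $-1$ and $0$, and $\partial_1 = 0$. Hence $\Delta_0 = \partial_0^\dagger \partial_0$, where $\partial_0 : C_0 \to C_{-1}$ is the $1\times m$ all-ones matrix sending every vertex to the generator of $C_{-1}$. Thus $\Delta_0$ is the $m \times m$ all-ones matrix, whose eigenvalues are $m$ with multiplicity $1$ and $0$ with multiplicity $m-1$; I would record this as $\spec{\Delta^{K(m,1)}_0} = \{0^{(m-1)}, m\}$.

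For the inductive step I would apply Corollary~\ref{Laplacian_corollary} to $K(m,k) = K(m,k-1) \ast K(m,1)$ in degree $k-1$, giving
\begin{equation*}
\spec{\Delta^{K(m,k)}_{k-1}} = \bigcup_{i+j=k-2} \spec{\Delta^{K(m,k-1)}_i} + \spec{\Delta^{K(m,1)}_j}.
\end{equation*}
The crucial observation is that $K(m,1)$ carries chains only in degrees $j \in \{-1,0\}$, whereas $K(m,k-1)$, whose maximal cliques have size $k-1$, carries chains only up to degree $k-2$. The term $j=-1$ forces $i=k-1$, but $\Delta^{K(m,k-1)}_{k-1}$ acts on a trivial space and contributes the empty spectrum, so only the term $j=0$, $i=k-2$ survives. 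This collapses the recursion to $\spec{\Delta^{K(m,k)}_{k-1}} = \spec{\Delta^{K(m,k-1)}_{k-2}} + \{0^{(m-1)}, m\}$, and iterating down to the base case yields the $k$-fold Minkowski sum of $\{0,m\}$, namely the eigenvalues $\{0, m, 2m, \dots, km\}$, with $jm$ occurring with multiplicity $\binom{k}{j}(m-1)^{k-j}$.

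Reading off the smallest nonzero eigenvalue then gives $\lambda_{\min} = m$, as claimed; as a consistency check, the multiplicity of the zero eigenvalue is $(m-1)^k$, matching Proposition~\ref{prop:Betti_number}. I expect the main obstacle to be the careful bookkeeping of the reduced-homology degree $-1$: one must verify that the degree-$(-1)$ piece of $K(m,1)$ pairs only with a degree-$(k-1)$ piece of $K(m,k-1)$ that vanishes identically, so that exactly one term in the Minkowski union survives at each stage. Once this \emph{top-degree selection} is justified, the remainder is routine.
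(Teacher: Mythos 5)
Your proposal is correct and takes essentially the same route as the paper's own proof: the join decomposition $K(m,k)=K(m,k-1)\ast K(m,1)$, the base-case spectrum $\{0^{(m-1)},m\}$ of the reduced Laplacian $\Delta_0^{K(m,1)}$, and induction via Corollary~\ref{Laplacian_corollary}. The extra details you supply (the all-ones-matrix computation, the top-degree selection showing only the $j=0$ term survives, and the multiplicities $\binom{k}{j}(m-1)^{k-j}$ consistent with $\beta_{k-1}=(m-1)^k$) are correct elaborations of steps the paper leaves implicit.
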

\begin{proof}
Again $K(m,k) = K(m,k-1) \ast K(m,1)$. The spectrum of the $\Delta^{K(m,1)}_0$ is 0 with multiplicity $m-1$, and $m$ with multiplicity 1. Thus by induction using Corollary \ref{Laplacian_corollary}, the spectrum of $\Delta^{K(m,k)}_{k-1}$ is (ignoring multiplicities) $\{0,m,2m,\dots,km\}$. This gives $\lambda_{\min} = m$.
\end{proof}

\begin{proposition}
\emph{(Restatement of Proposition \ref{prop:Betti_number_Rips}.)}

The Rips complex described in \sec{Rips_complex_speedup} has $(2k-1)^{\text{th}}$ Betti number given by
\begin{equation}
\beta_{2k-1}(S) = (m-1)^k = \left(\frac{n}{2k} - 1\right)^k
\end{equation}
\end{proposition}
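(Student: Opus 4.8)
The plan is to realize $\mathcal{R}_1(S)$ as an iterated join of $k$ elementary pieces and then finish with the Kunneth formula, exactly as in the proof of Proposition~\ref{prop:Betti_number}. Write $G_j$ for the induced subgraph of the $1$-skeleton of $\mathcal{R}_1(S)$ on the $2m$ points of the rotated copy $S_j$ ($j=0,\dots,k-1$). Since a Rips complex is a flag (clique) complex, and the clique complex of a complete graph-join is the join of the clique complexes, it suffices to establish two facts: (i) every vertex of $S_a$ is adjacent to every vertex of $S_b$ whenever $a\neq b$, so that $\mathcal{R}_1(S)=\mathrm{Cl}(G_0)\ast\cdots\ast\mathrm{Cl}(G_{k-1})$; and (ii) each factor $\mathrm{Cl}(G_j)$ has reduced Betti numbers $\widetilde\beta_1=m-1$ and $\widetilde\beta_i=0$ otherwise.

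First I would pin down the geometry, which I expect to be the main obstacle since it turns on borderline distances. Collapsing each $m$-point clump to its center turns $S$ into $2k$ points equally spaced on the circle of radius $1/2$, at angular positions $\ell\theta$ with $\theta=\pi/k$; two centers at angular separation $s\theta$ lie at Euclidean distance $\sin(s\pi/2k)$, which is $<1$ for every $s\neq k$ and exactly $1$ for antipodal pairs ($s=k$). The $+$ and $-$ clumps of a single $S_j$ form the unique antipodal pair inside that copy, so the matched points $x^+_i,x^-_i$ sit at distance exactly $1$ (hence are joined, as asserted in the construction) while $x^+_i,x^-_j$ with $i\neq j$ sit at distance $\sqrt{1+(i-j)^2\delta^2}>1$ (hence are not). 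For every cross-copy cluster pair the separation is non-antipodal with center distance at most $\cos(\pi/2k)$, and I would verify that $\delta$ is small enough to be harmless: within a clump points spread by $O(m\delta)\le O(n^{-3})$, whereas $1-\cos(\pi/2k)\gtrsim 1/k^2\ge 4/n^2\gg n^{-3}$, so all cross-copy pairs stay strictly below distance $1$. This proves (i) and simultaneously identifies $G_j$ as two disjoint copies of $K_m$ joined by the perfect matching $\{x^+_i,x^-_i\}$.

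Next I would compute the homotopy type of each factor for (ii). The cliques of $G_j$ are exactly the subsets of the $+$ clump, the subsets of the $-$ clump, and the single matched edges $\{x^+_i,x^-_i\}$, since no clique can contain two distinct $+$ indices together with any $-$ vertex. Hence $\mathrm{Cl}(G_j)$ is two full simplices $\Delta^+$ and $\Delta^-$ glued along $m$ edges. Collapsing the two contractible simplices to points (a homotopy equivalence) leaves two vertices joined by $m$ parallel edges, which is homotopy equivalent to $\bigvee_{m-1}S^1$; thus $\widetilde\beta_1(\mathrm{Cl}(G_j))=m-1$ and all other reduced Betti numbers vanish.

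Finally I would assemble the pieces with the Kunneth formula. Encoding reduced Betti numbers in the generating polynomial $P_X(t)=\sum_i\widetilde\beta_i(X)\,t^i$, the join Kunneth formula reads $P_{X\ast Y}(t)=t\,P_X(t)\,P_Y(t)$, so for the $k$-fold join $P_{\mathcal{R}_1(S)}(t)=t^{\,k-1}\prod_{j}P_{\mathrm{Cl}(G_j)}(t)=t^{\,k-1}\big((m-1)\,t\big)^k=(m-1)^k\,t^{\,2k-1}$. Reading off the coefficient gives $\widetilde\beta_{2k-1}(\mathcal{R}_1(S))=(m-1)^k$, and since $2k-1\ge 1$ reduced and ordinary Betti numbers coincide, yielding $\beta_{2k-1}(S)=(m-1)^k=(n/2k-1)^k$ as claimed.
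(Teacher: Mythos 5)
Your proof is correct and follows essentially the same route as the paper's: decompose $\mathcal{R}_1(S)$ as the join $\mathcal{R}_1(S_0)\ast\cdots\ast\mathcal{R}_1(S_{k-1})$, observe that each factor has reduced Betti numbers $(0,m-1,0,\dots)$, and conclude by the Kunneth formula for joins. The only difference is that you explicitly verify the two facts the paper asserts without proof (the distance computations giving the join structure, and the homotopy equivalence of each factor with a wedge of $m-1$ circles), which is a welcome but not conceptually distinct elaboration.
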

\begin{proof}
The Betti numbers of $\mathcal{R}_1(S_i)$ are $(m-1,0,\dots)$. Furthermore, $R_1(S) = R_1(S_0) \ast \dots \ast R_1(S_{k-1})$. Thus by the Kunneth formula $\beta_{2k-1}(S) = (m-1)^k$.
\end{proof}

\begin{proposition}
\emph{(Restatement of Proposition \ref{prop:spectral_gap_Rips}.)}

The Rips complex described in \sec{Rips_complex_speedup} has a combinatorial Laplacian $\Delta_k = \partial_k^{\dagger} \partial_k + \partial_{k+1}\partial_{k+1}^{\dagger}$ with a constant spectral gap $\lambda_{\min}$.
\end{proposition}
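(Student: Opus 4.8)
The plan is to exploit the same join structure used in the preceding proofs. First I would verify that, because each rotated copy $S_i$ is completely connected to every other $S_j$ (all inter-copy pairs lie at distance $<1$, while only the matched pairs $x^+_i,x^-_i$ inside a single copy sit at distance exactly $1$), the flag property of the Rips complex forces $\mathcal{R}_1(S) = \mathcal{R}_1(S_0) \ast \cdots \ast \mathcal{R}_1(S_{k-1})$, a join of $k$ isomorphic copies of the single factor $X := \mathcal{R}_1(S_0)$. Then Corollary \ref{Laplacian_corollary}, applied iteratively, reduces the problem to the factor spectra: $\spec\Delta^{\mathcal{R}_1(S)}_{2k-1}$ is the set of sums $\sum_{\ell=0}^{k-1}\lambda_\ell$ with $\lambda_\ell \in \spec\Delta^{X}_{i_\ell}$ and $\sum_\ell i_\ell = k$ in reduced-homology bookkeeping (any factor taken at degree $-1$ contributes an eigenvalue $\sim m$ and is irrelevant to the gap). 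Since every $\lambda_\ell \ge 0$ and a zero eigenvalue is available in a factor only in the degree carrying its reduced homology, the smallest positive such sum is controlled entirely by the smallest positive eigenvalues of the factor Laplacians $\Delta^X_i$.

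So the crux is to show that $\mu_i := \min\spec\Delta^X_i$ (read as the spectral gap whenever $\Delta^X_i$ has a kernel) is bounded below by an absolute constant, uniformly in $m=n/2k$. Here $X$ consists of two full $(m-1)$-simplices $A$ and $B$ (the two clusters) joined by the perfect matching $M=\{(x^+_i,x^-_i)\}$. I would split by degree. For $i\ge 2$ there are no mixed simplices, so $C_i(X)=C_i(A)\oplus C_i(B)$ and $\Delta^X_i$ is block diagonal, equal to the degree-$i$ Laplacian of a full simplex on $m$ vertices; writing that simplex as a join of $m$ points and applying Corollary \ref{Laplacian_corollary} gives $\Delta^{\text{simplex}}_i = m\,I$, hence $\mu_i=m$. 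Thus small eigenvalues can only arise for $i\in\{0,1\}$.

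For $i\le 1$ I would invoke the Hodge decomposition: the nonzero spectrum of $\Delta^X_1$ is the union of the nonzero spectrum of the down-Laplacian $\partial_1^\dagger\partial_1$ (equivalently, of the ordinary graph Laplacian of the $1$-skeleton of $X$) and that of the up-Laplacian $\partial_2\partial_2^\dagger$, which is supported only on the triangles of $A$ and $B$ and therefore contributes eigenvalue $m$. The $1$-skeleton of $X$ is $K_m\sqcup K_m$ plus a perfect matching, with adjacency $\left(\begin{smallmatrix} J-I & I\\ I & J-I\end{smallmatrix}\right)$ and uniform degree $m$; decomposing into the symmetric and antisymmetric blocks $(u,\pm u)$ collapses its Laplacian to $mI-J$ and $(m+2)I-J$, yielding nonzero eigenvalues $\{2\}\cup\{m\}\cup\{m+2\}$. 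Hence the gap of $\Delta^X_1$ equals $2$, and the analogous reduced computation gives $\mu_0=2$.

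Combining the cases, every factor gap is $\ge 2$ independent of $m$, so the smallest positive element of the sumset is exactly $2$, attained by placing all factors at degree $1$ in their kernel except one excited to eigenvalue $2$; this gives $\lambda_{\min}$ constant. The main obstacle is the middle step: unlike $K(m,k)$, the factor $X$ is neither a simplex nor a join, so its spectrum is not immediate as it was in Propositions \ref{prop:Betti_number} and \ref{prop:spectral_gap}. The saving observations are that the higher-degree Laplacians collapse to full-simplex Laplacians ($=mI$), and that the single potentially small eigenvalue ($=2$) comes from the antisymmetric graph-Laplacian mode and genuinely persists rather than being absorbed into the $(m-1)$-dimensional homology $\widetilde H_1(X)$ — a point that must be checked through the Hodge decomposition.
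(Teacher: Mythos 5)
Your proposal is correct, and its skeleton is the same as the paper's: reduce via the join decomposition $\mathcal{R}_1(S) = \mathcal{R}_1(S_0) \ast \cdots \ast \mathcal{R}_1(S_{k-1})$ and Corollary \ref{Laplacian_corollary} to the spectrum of a single factor. Where you differ is that you actually carry out the step the paper elides: its proof simply says ``say the factor Laplacian has smallest eigenvalue $c$'' and concludes the join has the same, never verifying that $c$ is a constant --- which is the entire content of the proposition, since the factor $\mathcal{R}_1(S_i)$ has $2m = n/k$ vertices and grows with $n$. Your computation supplies that missing core and is right in the details: for degrees $i\ge 2$ there are no mixed simplices, so the factor Laplacian is block-diagonal with full-simplex blocks equal to $m I$; at degree $1$ the Hodge splitting reduces the nonzero spectrum to that of the up-Laplacian (supported on the triangles of the two clusters, contributing only $m$) together with the nonzero spectrum of the graph Laplacian of $K_m \sqcup K_m$ plus a perfect matching, whose symmetric/antisymmetric reduction to $mI - J$ and $(m+2)I - J$ gives $\{2, m, m+2\}$; and the reduced degree-$0$ operator $J_{2m}+L$ has smallest eigenvalue $2$. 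Since zero eigenvalues occur only at degree $1$ (where the reduced homology sits) and every other factor eigenvalue is at least $2$, the sumset structure of Corollary \ref{Laplacian_corollary} pins the gap of $\Delta_{2k-1}^{\mathcal{R}_1(S)}$ at exactly $2$, independent of $m$ and $k$. So relative to the paper you get a strictly stronger and genuinely complete statement: not just ``constant,'' but $\lambda_{\min} = 2$, at the cost of the explicit factor-spectrum calculation that the paper's two-line induction takes for granted.
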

\begin{proof}
Consider first the Laplacian $\Delta^{S_i}_k$ of $\mathcal{R}_1(S_i)$. Say it has smallest eigenvalue $c$. Then by induction using Corollary \ref{Laplacian_corollary}, the smallest eigenvalue of $\Delta^{S}_k$ is also $c$.
\end{proof}

\subsection{Perturbations of $K(m,k)$ and smaller spectral gaps} 
\label{app:new_graph_examples}

In Section~\ref{subsec:classical}, we introduced a novel classical algorithm that, in conjunction with Apers et al.'s algorithm~\cite{apers2022simple}, significantly enhances classical runtimes in the domain where quantum algorithms achieve polynomial run-times. 
Moreover, these classical algorithms can potentially operate in polynomial time, contingent on the required precision for estimating the Betti number, provided the spectral gap of the combinatorial Laplacian is sufficiently large.
In this section, we study perturbations of the graph $K(m,k)$ (defined in the previous section) that lead to a smaller spectral gap. 
These perturbations render the classical algorithms inefficient, regardless of the precision needed for the Betti number estimation, though they do not impair the efficiency of the quantum algorithm.
Additionally, these perturbations also show that the speed-ups are robust, in the sense of a much larger class of graphs having a guaranteed speed-up

We will denote our perturbed family of graphs $K'(m,k)$. 
Recall that $K(m,k)$ consisted of $k$ clusters, each with $m$ vertices.
There were no edges within the clusters, and all edges between vertices in different clusters were included.
Let $G'$ be the graph consisting of $m$ vertices with a single edge. 
Thus $G'$ consists of a single edge, and $m-2$ additional completely disconnected vertices. 
Let $K'(m,k)$ consist of $k$ clusters, where each cluster is a copy of $G'$. 
Likewise, we include all edges between clusters.
In other words, we obtain an instance of $K'$ by adding a single edge to each of the clusters which were originally independent sets.
\begin{lemma} 
\label{lemma:Betti_number_prime}
The $(k-1)^{\text{th}}$ Betti number of the clique complex of $K'(m,k)$ is
\[
\beta_{k-1} = \left(m-2\right)^k .
\]
\end{lemma}

\begin{lemma} 
\label{lemma:eigenvals_prime}
The $(k-1)^{\text{th}}$ combinatorial Laplacian $\Delta_{k-1}$ of the clique complex of $K'(m,k)$ has smallest nonzero eigenvalue
\begin{equation*}
\lambda_{\min} = 2,
\end{equation*}
and largest eigenvalue
\begin{equation*}
\lambda_{\max} = 2mk = 2n.
\end{equation*}
\end{lemma}

\begin{proof}[Proof sketch]
First, note that the clique complex of $K(m, k)$ is the $k$-fold join of the clique complex of the graph $G'$ consisting of $m$ vertices with a single edge between two of the vertices.
Next, use the Kunneth formula and Corollary~\ref{Laplacian_corollary}.
Finally, note that the zeroth combinatorial Laplacian of $\mathrm{Cl}(G')$ is
\begin{equation*}
\Delta_0 =
\begin{pmatrix}
2 & 0 & 1 & 1 & \dots \\
0 & 2 & 1 & 1 & \dots \\
1 & 1 & 1 & 1 & \dots \\
1 & 1 & 1 & 1 & \dots \\
. & . & . & . & \dots
\end{pmatrix}
\end{equation*}
The spectrum of $\Delta_0$ is thus 0 with multiplicity $m-2$, 2 with multiplicity $m$, and $2m$ with multiplicity 1. 
\end{proof}

\begin{corollary}
\label{cor:gap_prime}
The $(k-1)^{\text{th}}$ combinatorial Laplacian $\Delta_{k-1}$ of the clique complex of $K'(n/k,k)$ has normalized spectral gap
\[
\gamma = \frac{\lambda_{\min}}{\lambda_{\max}} = \frac{1}{n}.
\]
\end{corollary}

In short, when $k =c\cdot\log(n)$ the graph $K'(m,k)$ has an inverse polynomial spectral gap, as opposed to an inverse logarithmic spectral gap as in the case for $K(m,k)$. 
This key distinction amplifies the disparities in runtime between the quantum and classical algorithms. 
Notably, the quantum algorithm (still) attains a superpolynomial speedup over its classical counterparts, even in scenarios where only a constant level of additive precision for the normalized Betti number is required.

Next, we demonstrate the potential for even broader generalization of these perturbations.
Consider the $m$-vertex graph $G''$ that is a disjoint union of $m/2$ edges (i.e., there are no isolated vertices), and let $K''(m,k)$ be the $k$-fold join of $G''$.

\begin{lemma}
\label{lemma:spec_primeprime}
The spectrum of the $(k-1)^{\text{th}}$ combinatorial Laplacian of the clique complex of $K''(m,k)$ is
\[
\spec \Delta_{k-1} = \Big\{\lambda_{i,j,\ell} \text{ }\big|\text{ } i + j + \ell = k\Big\},
\]
where
\[
\lambda_{i, j, \ell} = 0\cdot i + 2\cdot j + m\cdot l
\]
and they have multiplicities
\[
\mathrm{multiplicity}[i, j, \ell] = \binom{n}{i, j, \ell} \cdot \left(\frac{m}{2} - 1\right)^i \cdot \left(\frac{m}{2}\right)^j \cdot 1^\ell.
\]
\end{lemma}
\begin{proof}
The spectrum of the $0^{\text{th}}$ combinatorial Laplacian of $G''$ is
\begin{equation*}
\Delta_0(\mathrm{Cl}(G'')) =
\begin{pmatrix}
2 & 0 & 1 & 1 & \dots \\
0 & 2 & 1 & 1 & \dots \\
1 & 1 & 2 & 0 & \dots \\
1 & 1 & 0 & 2 & \dots \\
. & . & . & . & \dots
\end{pmatrix}
\end{equation*}
A quick computation reveals that $\Delta_0$ has a spectrum
\[
    \spec\left(\Delta_0(\mathrm{Cl}(G''))\right) = \left\{0 \text{ (with multiplicity }m/2 - 1), 2 \text{ (with multiplicity } m/2), m \text{ (with multiplicity }1)\right\}.
\]
Following this observation, the lemma follows from the application of Corollary~\ref{Laplacian_corollary}.
\end{proof}

\begin{corollary}
\label{cor:gap_primeprime}
The $(k-1)^{\text{th}}$ combinatorial Laplacian $\Delta_{k-1}$ of the clique complex of $K''(n/k,k)$ has normalized spectral gap
\[
\gamma = \frac{\lambda_{\min}}{\lambda_{\max}} = \frac{2}{n}.
\]
\end{corollary}

Again, the graph $K''(m,k)$ has an inverse polynomial spectral gap, as opposed to an inverse logarithmic spectral gap. 
In particular, the quantum algorithm (still) attains a superpolynomial speedup over its classical counterparts, even in scenarios where only a constant level of additive precision for the normalized Betti number is required.
In conclusion, adding a single edge or $m/2$ edges to each cluster within $K(m,k)$ does not change the superpolynomial quantum speedup that it exhibits.

\section{Dequantization using path integral Monte Carlo}
\label{append:PIMCdequant}

Previously, we argued that the cases where TDA can potentially have a super-polynomial advantage relative to classical approaches is in cases where the clique density is high.  In such cases sampling in our quantum algorithm is efficient and eigen-decomposition is inefficient.  However, we will see here that this is not necessarily the case and that there are cases where the clique density is high wherein randomized classical algorithms can achieve scaling that is polynomially equivalent to quantum algorithms. This will show that the conditions for a substantial improvement using quantum TDA are potentially even more subtle than previous work suggests.

As discussed in the main body, our dequantization looks at imaginary time simulations of  the Hermitian operator $\TBG= B_G^2 + (1-P)\gamma_{\min}$, where $B_G$ is the square of the constrained Dirac operator, i.e.\ the combinatorial Laplacian, and the projector selects all input states that are valid simplices. Given an upper bound on the sparsity $s$ of the combinatorial Laplacian, we can obtain the decomposition $\TBG = \sum_{p=1}^D c_{p} H_p$, where $H_p$ is one-sparse, {Hermitian and unitary}, and $D=\mathcal{O}(s^2)$, in polynomial time using distributed graph coloring algorithms. These algorithms also let us compute the position of the non-zero matrix element in row $x$ of $U_{\alpha}$ using a number of queries to $B_G^2$ that scales as $\mathcal{O}(\log^*(D))$~\cite{berry2007efficient}. We will provide explicit bounds on the sparsity of the combinatorial Laplacian in a subsequent section.

In order to set up the relevant path integrals, we must first employ a  Trotter-decomposition.  This allows us to represent the exponential in terms of exponentials of the one-sparse matrices, which can then be simulated through randomization.  This leads us to the conclusion that
\begin{equation}
    e^{-\TBG t} = (e^{-\TBG t/r})^r = \left(\prod_{p=1}^D e^{-c_{p} H_{p} t/2r}\prod_{p=D}^1 e^{-c_{p} H_{p} t/2r} + O\left(\frac{\big(\sum_{p} |c_{p}|\big)^3 t^3}{r^3}\right)\right)^r.
    \label{eq:trotterboundarydecomp}
\end{equation}
As $H_{p}$ is Hermitian, it has a complete set of eigenvectors. Since $H_{p}$ is also one-sparse and as such matrices can be written as the direct sum of irreducible one and two-dimensional matrices, we can parameterize the eigenvectors to respect the structure of the two dimensional space via
 \begin{equation}
H_p \ket{\lambda_{p,\nu}} = \lambda_{p,\nu} \ket{\lambda_{p,\nu}} \, .
 \end{equation}
 Note that each eigenvector $\ket{\lambda_{p,\nu}}$ is such that $\braket{p}{\lambda_{p,\nu}}$ is non-zero for only two different computational basis vectors. 
 
Note that the first term on the RHS of~\eqref{eq:trotterboundarydecomp} contains $2rD$ terms. If we introduce a vector of indices $p = \{1,\ldots,r,1,\ldots,r,\ldots,r\}$ with $2rD$ entries denoted by $p_i$, we can express the expectation of the exponential of the boundary operator as ($H$ stands for the Haar average)
\begin{equation}
\mathbb{E}^{H}_{\ket{\psi}}\left( \prod_{i=1}^{2rD}e^{-c_{p_i} H_{p_i} t/2r} \right) \coloneqq \mathbb{E}_{H} \bra{\psi} \left( \prod_{i=1}^{2rD}e^{-c_{p_i} H_{p_i} t/2r} \right) \ket{\psi}.
\end{equation}
 
Next we set up our path integrals by selecting sets of $2rD$ indices that correspond to the eigenstates that we transition to in the path integral.  We denote such a path via the vector $\Gamma$ where $\Gamma_j$ corresponds to the index of the $j^{\rm th}$ eigenstate in the path.
Using this notation we can insert resolutions of the identity of the form $\sum_{\Gamma_j} \ket{\lambda_{p_j,\Gamma_j}}\bra{\lambda_{p_j,\Gamma_j}}$ consisting of the eigenvectors $\ket{\lambda_{p_j,\Gamma_j}}$ of each $U_{p_j}$ in between each of the $2rD$ terms and defining $\rho = \ket{\psi}\bra{\psi}$ gives
\begin{align}
     &\mathbb{E}^{H}_{\ket{\psi}}\left( \prod_{i=1}^{2rD}e^{-c_{p_i} H_{p_i} t/2r} \right) = \mathbb{E}_{H} \ \text{Tr}\left(\rho  \prod_{i=1}^{2rD}e^{-c_{p_i} H_{p_i} t/2r} \right) \nonumber \\
     &= \mathbb{E}_{H} \text{Tr}\left(\rho \sum_{\Gamma_1,\ldots,\Gamma_{2rD}} \exp{\left(-\sum_{i=1}\lambda_{p_i,\Gamma_i}t/2r\right)} \ket{\lambda_{p_1,\Gamma_1}}\bra{\lambda_{p_1,\Gamma_1}}\cdots\ket{\lambda_{p_{2rD},\Gamma_{2rD}}}\bra{\lambda_{p_{2rD},\Gamma_{2rD}}} \right) \nonumber \\
     &= \mathbb{E}_{H} \text{Tr}\left(\rho \sum_{\Gamma_1,\ldots,\Gamma_{2rD}} \exp{\left(-\sum_{i=1}\lambda_{p_i,\Gamma_i}t/2r\right)} W(\Gamma) \ket{\lambda_{p_1,\Gamma_1}}\bra{\lambda_{p_{2rD},\Gamma_{2rD}}} \right) \nonumber \\
     &= \mathbb{E}_{H} \text{Tr}\left(\rho \ \mathbb{E}_\Gamma \frac{\exp{\left(-\sum_{i=1}\lambda_{p_i,\Gamma_i}t/2r\right)} W(\Gamma) \ket{\lambda_{p_1,k_1}}\bra{\lambda_{p_{2rD},k_{2rD}}}}{\text{Pr}(\Gamma)} \right) \nonumber \\
     &= \mathbb{E}_{H} \mathbb{E}_\Gamma \text{Tr}\left(\rho \frac{\exp{\left(-\sum_{i=1}\lambda_{p_i,\Gamma_i}t/2r\right)} W(\Gamma) \ket{\lambda_{p_1,\Gamma_1}}\bra{\lambda_{p_{2rD},\Gamma_{2rD}}}}{\text{Pr}(\Gamma)} \right)\nonumber\\
     &= \frac{1}{d_{k-1}} \mathbb{E}_\Gamma \text{Tr}\left(\frac{\exp{\left(-\sum_{i=1}\lambda_{p_i,\Gamma_i}t/2r\right)} W(\Gamma) \ket{\lambda_{p_1,\Gamma_1}}\bra{\lambda_{p_{2rD},\Gamma_{2rD}}}}{\text{Pr}(\Gamma)} \right)\nonumber\\
     &= \frac{1}{d_{k-1}} \mathbb{E}_\Gamma \left(\frac{\exp{\left(-\lambda_{p_1,\Gamma_1}t/r-\sum_{i=2}^{2rD-1}\lambda_{p_i,\Gamma_i}t/2r\right)} W(\Gamma) \delta_{\Gamma_1,\Gamma_{2rD}}}{\text{Pr}(\Gamma)} \right) \, , \label{eq:haaravgtrot}
\end{align}
  where we have defined for convenience the quantity 
\begin{equation}
     W(\Gamma) = \braket{\lambda_{p_1,\Gamma_1}}{\lambda_{p_2,\Gamma_2}}\cdots \braket{\lambda_{p_{2rD-1},\Gamma_{2rD-1}}}{\lambda_{p_{2rD},\Gamma_{2rD}}} \, , \label{eq:WGamma}
\end{equation}
 with $\Gamma= [\Gamma_1,\ldots,\Gamma_{2rD}]$. In the fourth line, we divided and multiplied by a probability $\text{Pr}(\Gamma)$ to express the sum as an average, which allows to use importance sampling to minimize the variance via a judicious choice of $\text{Pr}(\Gamma)$. In the last line, we used the fact that $p_{2rD} = p_1$ for the symmetric Trotter formula. 
 
If we wish to estimate this value by sampling, the primary driver of the complexity will be the estimation of the expectation value through the sample mean which corresponds to the optimal unbiased estimator of the population mean. The number of samples scales with the variance of the set that one averages over and the variance over $\Gamma$ of the above Haar expectation is then simply
 \begin{align}
     &\mathbb{V}_\Gamma\left(\frac{1}{d_{k-1}} \text{Tr}\left(\frac{\exp{\left(-\sum_{i=1}\lambda_{p_i,\Gamma_i}t/2r\right)} W(\Gamma) \ket{\lambda_{p_1,\Gamma_1}}\bra{\lambda_{p_{2rD},\Gamma_{2rD}}}}{\text{Pr}(\Gamma)} \right) \right)\nonumber\\
     &=\frac{1}{d_{k-1}^2} \sum_{\Gamma_1,\ldots,\Gamma_{2rD-1}}  \frac{\exp{\left(-2\lambda_{p_1,\Gamma_1}t/r-\sum_{i=2}^{2rD-1}\lambda_{p_i,\Gamma_i}t/r\right)} |W(\Gamma)|^2 \delta_{\Gamma_1,\Gamma_{2rD}}}{\text{Pr}(\Gamma)}\nonumber\\
     &\quad~- \left(\frac{1}{d_{k-1}} \mathbb{E}_\Gamma \left(\frac{\exp{\left(-\lambda_{p_1,\Gamma_1}t/r-\sum_{i=2}^{2rD-1}\lambda_{p_i,\Gamma_i}t/2r\right)} W(\Gamma) \delta_{\Gamma_1,\Gamma_{2rD}}}{\text{Pr}(\Gamma)} \right) \right)^2\nonumber\\
     &\le \frac{1}{d_{k-1}^2} \sum_{\Gamma_1,\ldots,\Gamma_{2rD-1}}  \frac{\exp{\left(-2\lambda_{p_1,\Gamma_1}t/r-\sum_{i=2}^{2rD-1}\lambda_{p_i,\Gamma_i}t/r\right)} |W(\Gamma)|^2 \delta_{\Gamma_1,\Gamma_{2rD}}}{\text{Pr}(\Gamma)} \, . \label{eq:varSamp}
 \end{align}
 
There are many probability distributions that we could choose to sample from to minimize the variance in~\eqref{eq:varSamp}.  The most straight forward distribution to choose, and the appropriate one to pick in the limit of short $t$, is a uniform distribution.  However, in practice the eigenvalues in the sum may have wildly varying sizes and so the importance of each of the different paths can swing substantially.  A more natural choice to make for the probability of drawing each path is
 \begin{equation}
     {\rm Pr}(\Gamma) =\frac{ \exp{\left(-2\lambda_{p_1,\Gamma_1}t/r-\sum_{i=2}^{2rD-1}\lambda_{p_i,\Gamma_i}t/r\right)}\delta_{\Gamma\in S_\Gamma}}{\sum_{\Gamma\in S_\Gamma}\exp{\left(-2\lambda_{p_1,\Gamma_1}t/r-\sum_{i=2}^{2rD-1}\lambda_{p_i,\Gamma_i}t/r\right)}},
 \end{equation}
 where $S_\Gamma$ is the set of valid paths with $2rD$ vertices such that each edge corresponds to a path of connected eigenvectors for the one-sparse matrices used in the decomposition of $\TBG$.
 
 The central challenge in employing this formula is to estimate the value of the sums over the values of $\Gamma_i$. {As the $H_j$ in used in the derivation are Hermitian and unitary, we have that}  
 \begin{equation}
    \lambda_{p_i,\Gamma_i} = \pm  c_{p_i}. \label{eq:PIUnitaryAssump}
 \end{equation}
Furthermore, the one-sparse decomposition is chosen in such a way that the matrix elements are all off diagonal with the exception of any diagonal matrix that appears in the decomposition.  This can be seen explicitly using the discussion of the Jordan-Wigner representation of the Dirac operator.  This means that each eigenvector couples to at most two eigenvectors. At most one term is diagonal in the standard Trotter decomposition of the Dirac operator~\cite{berry2007efficient}. A simple combinatorial argument therefore leads to the conclusion that the total number of valid paths is at most $d_{k-1} 2^{2r(D-1)-1}$.

Given this choice, the normalization constant (which is analogous to a partition function) can be expressed (assuming that the diagonal element is always $p_i=D$) as
 \begin{align}
     &\sum_{\Gamma\in G} \exp{\left(-2\lambda_{p_1,\Gamma_1}t/r-\sum_{i=2}^{2rD-1}\lambda_{p_i,\Gamma_i}t/r\right)}\nonumber\\
     &\quad= d_{k-1}2^{2rD-1-r}\cosh(2c_{p_1}t/r) \prod_{i=2}^{2rD-1} (\cosh(c_{p_i} t/r)\delta_{p_i \ne D} + \delta_{p_i = D}e^{-\lambda_{p_i,\Gamma_i} t/r}/2).
 \end{align}
 This can be computed using $\mathcal{O}({\rm poly}(rD))$ arithmetic operations and so does not ruin the efficiency of the algorithm.  Note that were the sum over the $W(\Gamma)$ terms considered instead, then the result would be computationally difficult to compute as these terms generate correlations that would prevent us from performing an independent sum for each of the factors.
 
The variance $\sigma^2$ over the values of $k$ chosen in the path integrals for the expression for the Haar average is then
 \begin{align}
     \sigma^2 &= \frac{1}{d_{k-1}^2}\left(\sum_{\Gamma\in S_\Gamma} { |{W(\Gamma)}|^2 }\right)\left({\sum_{\Gamma\in S_\Gamma}\exp{\left(-2\lambda_{p_1,\Gamma_1}t/r-\sum_{i=2}^{2rD-1}\lambda_{p_i,\Gamma_i}t/r\right)}}\right) \nonumber\\
     &- \frac{1}{d_{k-1}^2}\left|{\sum_{\Gamma\in S_\Gamma}}\exp{\left(-\lambda_{p_1,\Gamma_1}t/r-\sum_{i=2}^{2rD-1}\lambda_{p_i,\Gamma_i}t/2r\right)} {W(\Gamma)} \right|^2\nonumber\\
     &\le \frac{2^{2rD} e^{2Dt\max_{i} \lambda_{p_i,\Gamma_i}}}{d_{k-1}}\nonumber\\
     &\le \frac{2^{2rD} (1/\delta)^{2D\max_{i} |\lambda_{p_i,\Gamma_i}|/\gamma_{\min}}}{d_{k-1}}\label{eq:sigma2} \, .
 \end{align}
 This shows that the variance after making this substitution is precisely equal to the gap in the Cauchy-Schwarz inequalty in the latter sum.  This suggests that in certain cases where the Cauchy-Schwarz inequality is tight, the variance may be extremely small given that we have the ability to sample from the distribution ${\rm Pr}(\Gamma)$.

\subsection{Metropolis Hastings algorithm}

Outside of specific cases such as graphs, it is difficult in general to sample from the probability distribution ${\rm Pr}(\Gamma)$ to employ the above variance reduction strategy. It is therefore necessary to provide a general method to obtain these samples if we wish to understand how we could address the problem more generally. One way to address the issue of how to sample from the distribution ${\rm Pr}(\Gamma)$ is to use the Metropolis Hastings algorithm.  The idea behind the algorithm is to design a Markov chain whose stationary distribution equals our choice of $\text{Pr}(\Gamma)$. 

We first start with a connected, undirected graph on the set of all possible states $\Gamma_1,\Gamma_2,\ldots,\Gamma_{2rD-1}$ which represent the ``paths" involved in our Trotter decomposition. Each vertex of the graph then represents one possible collection of values for $\Gamma_1,\ldots,\Gamma_{2rD-1}$. 

At each vertex $a$, we therefore select a neighbor with probability $1/(2rD-1)$. Since the degree may be less than $2rD-1$ at a given vertex, the walk may remain at that vertex as there is a non-zero probability of no edge being selected. To account for such situations, we have the following rules: if a neighboring vertex $b$ is selected and the probability of transitioning
\begin{equation}
p_b \coloneqq \text{Pr}(\Gamma)_b = \exp{\left(-2\lambda_{p_1,\Gamma_1^{(b)}}t/r-\sum_{i=2}^{2rD-1}\lambda_{p_i,\Gamma_i^{(b)}}t/r\right)}
\end{equation}
is at least as great as the probability of remaining $p_a \coloneqq \text{Pr}(\Gamma)_a$, we transition to $b$. If $p_b < p_a$, then we transition to $b$ with probability $p_b/p_a$, where $b\in \mathcal{N}(a)$ with $\mathcal{N}(a)$ referring to the neighbors of $a$ and
\begin{equation}
    \frac{p_b}{p_a}=\frac{\exp{\left(-2\lambda_{p_1,\Gamma_1^{(b)}}t/r-\sum_{i=2}^{2rD-1}\lambda_{p_i,\Gamma_i^{(b)}}t/r\right)}}{\exp{\left(-2\lambda_{p_1,\Gamma_1^{(a)}}t/r-\sum_{i=2}^{2rD-1}\lambda_{p_i,\Gamma_i^{(a)}}t/r\right)}}. \label{eq:deltaPIBd}
\end{equation}
Otherwise we remain at $a$ with probability $1 - p_b/p_a$. Defining 
\begin{equation}
p_{ab} \coloneqq \frac{1}{R}\min \left (1,\frac{p_b}{p_a} \right)
\end{equation}
and
\begin{equation}
p_{aa} \coloneqq 1 - \sum_{b \neq a} p_{ab} \, ,
\end{equation}
we can easily verify $p_a p_{ab} = p_b p_{ba}$. By the Fundamental Theorem of Markov Chains~\cite{haggstrom2002finite}, it follows that the stationary probabilities are $p_a$ as needed. 

The cost of computing the ratio $p_b/p_a$ is $\mathcal{O}(rD)$ arithmetic operations, which coincides with the cost of performing an update. The number of such updates needed to reach $\delta$ error from the stationary distribution, where $\delta$ is the total variational distance (TVD) from the stationary distribution desired, is
\begin{equation}
    T^* \in O\left(\frac{\log(1/\delta)}{\gamma_M} \right),
\end{equation}
where $\gamma_M:=1-\lambda_2$ is the eigenvalue gap of for the transition matrix $p$. This implies that the number of arithmetic operations needed to  sample from a distribution that is $\delta-$close to the stationary distribution is in
\begin{equation}
    O\left(\frac{rD{\log}(1/\delta)}{\gamma_M} \right). \label{eq:NopsMarkov}
\end{equation}
If one samples from a distribution, $P'$, that is $\delta-$close to the intended distribution $P$ then the expectation value of any function $f$ is $|\sum_j(P(j)f(j)) -\sum_j(P'(j) f(j))|\le \delta \max |f(j)|$.  Similarly, the variance obeys $|\sum_j P'(j) f(j)^2 - (\sum_j P'(j) f(j))^2|\le \mathbb{V}(f) + \mathcal{O}(\delta \max|f(j)|^2)$. 
Thus if we want the error in the mean to be less than some error $\epsilon_M$, we require \begin{equation}
\delta = \epsilon_M/\max |f(j)|.
\end{equation}

Note that in our situation 
\begin{equation}
|f| = \left| \frac{\exp{\left(-\lambda_{p_1,\Gamma_1}t/r-\sum_{i=2}^{2rD-1}\lambda_{p_i,\Gamma_i}t/2r\right)} W(\Gamma)}{\text{Pr}(\Gamma)}\right| \, ,
\end{equation}
where $r$ is the number time-steps needed in the Trotterization procedure to attain a desired Trotter error $\epsilon_T$. 
Further, we have the following bound
\begin{align}
    & \sum_{\Gamma_1,\ldots,\Gamma_{2rD-1}}\left| W(\Gamma)\exp{\left(-2\lambda_{p_1,\Gamma_1}t/r-\sum_{i=2}^{2rD-1}\lambda_{p_i,\Gamma_i}t/r\right)}\right|\nonumber\\
    &\le \sqrt{\sum_{\Gamma_1,\ldots,\Gamma_{2rD-1}} |W(\Gamma)|^2} \sqrt{\sum_{\Gamma\in S_\Gamma} \exp{\left(-4\lambda_{p_1,\Gamma_1}t/r-2\sum_{i=2}^{2rD-1}\lambda_{p_i,\Gamma_i}t/r\right)}} \, .
    \label{eq:fmaxbd}
\end{align}

Note that when we sum over a specific $p_i,\Gamma_i$ in $|W(\Gamma)|^2$, we get $1$ since the eigenvectors are orthonormal. Performing all $2rD-2$ sums therefore gives $1$ from all the inner products and the last sum involving the $(2rD-1)$-th index gives a factor of $d_{k-1}$. Additionally, each $\lambda_{p_i,\Gamma_i}$ can either be positive or negative but is upper-bounded by $\|\TBG\|_{\max} \leq \|\TBG\|_{\infty} = \gamma_{\max}$, i.e.\ the largest eigenvalue of $\TBG$. We can then bound $f_{\max}$ as follows
\begin{align}
    |f|\le f_{\max} & \le d_{k-1} e^{2\gamma_{\max}t D} 2^{r(D-1/2)}. \label{eq:fmaxbound}
\end{align}

\subsection{Trotter error in path integration}

From Corollary 12 of~\cite{Childs2019c}, the multiplicative Trotter error $m$ for a $p$-th order Trotter formula is asymptotically bounded by
\begin{equation}
    \mathcal{O} \left (\alpha \left(\frac{t}{r}\right)^{p+1} \exp{\left(\frac{2t}{r} \Upsilon \sum_{\ell=1}^{\Gamma} \|H_{\ell}\| \right )} \right )
\end{equation}
where the operator $H$ is decomposed into a sum of $\Gamma$ terms,
\begin{equation}
\alpha = \sum_{\ell_1,\ell_2,\ldots,\ell_{p+1}=1} \|[H_{\ell_{p+1}},\ldots,[H_{\ell_2},H_{\ell_1}]\cdots]\| \, ,
\end{equation}
and $\Upsilon$ is the number of ``stages" of the formula. For the symmetric Trotter-Suzuki formula, $\Upsilon = 2(5)^{q-1}$ for a TS formula of order $2q$, where $q=1$ and $p=2$ for our case. $\Gamma = 2rD$ and $\alpha$ can be upper bounded by 
\begin{equation}
    \alpha \leq (2rD) 4\max_p (c_{p})^3 \leq 8rD \gamma_{\max}^3 \, . \label{eq:alphanaivebound}
\end{equation} 
Since this is the short time multiplicative error bound for simulating an operator for time $t/r$, we want to bound the resulting error when simulating for large $t$. To this end note that if we have an operator $A$ we approximate by an operator $B$ up to some multiplicative error $m$, then $B = A(I + mC)$ where $C$ is an operator such that $\|C\| \leq 1$, and $m$ is a constant. Then 
\begin{align}
    \|B\|^r &\leq (\|A\|(I + m\|C\|)^r \leq \|A\|^r \left(1 + \sum_{q=1}^r (m\|C\|)^q \binom{r}{q} \right) \nonumber \\
    &\leq \|A\|^r \left(1 + \sum_{q=1}^r m^q \binom{r}{q} \right) \leq \|A\|^r \left(1 + \sum_{q=1}^r \left(\frac{mre}{q}\right)^q \right) \nonumber \\
    &\leq \|A\|^r \left(1 + \sum_{q=1}^r (mre)^q \right) \leq \|A\|^r \left(1 + \frac{mre}{1-mre} \right). 
\end{align}

Therefore the long-time multiplicative error is bounded by $mre/(1-mre)$ and we would like this to be less than some desired error $\epsilon_T > 0$. This implies that we must have $mre \leq \epsilon_T/(1+\epsilon_T) \leq \epsilon_T$. In our context, $A = e^{tH/r}$, $B$ is an approximation to $A$ as given by a Trotter formula, and $m$ is the short-time multiplicative Trotter error bound cited above. 

Using the bound on $m$ and substituting in the parameters relevant for our situation, we have \begin{equation}
mre \leq e\alpha \frac{t^3}{r^2} \exp \left(\frac{4t}{r} \sum_{\ell=1}^{2rD} \|H_{\ell}\|\right) \leq \epsilon_T .
\end{equation} 
If $r \geq \frac{4t}{\ln 2}\sum_{\ell} \|H_{\ell}\|$, then
\begin{equation}
mre \leq 2e\alpha \frac{t^3}{r} \leq 2e \alpha \frac{t^3}{r^2} \leq \epsilon_T
\end{equation} 
which implies that 
\begin{equation}
    r = t \max \left\{\left(\frac{4et\alpha}{\epsilon_T}\right)^{1/2},\  \frac{4}{\ln 2}\sum_{\ell} \|H_{\ell}\| \right\} . \label{eq:PIrbd}
\end{equation}
The former term dominates asymptotically, so we will henceforth take $r \in \Theta \left ( t \left(\frac{4et\alpha}{\epsilon_T}\right)^{1/2} \right)$.

The systematic error in the estimate of the expectation value from the Trotter-Suzuki formula and the finite length Markov chain is at most
\begin{equation}
    \|e^{-\TBG t}\|\epsilon_T + \epsilon_M:=\epsilon_{TM}  .
\end{equation}
The total number of operations needed to draw a single sample from the distribution with bias at most $\epsilon_{TM}$ is from~\eqref{eq:NopsMarkov} in
\begin{equation}
 \mathcal{O}\left(\frac{rD{\log}(1/\delta)}{\gamma_M} \right) = O\left(\frac{\sqrt{\alpha} t^{3/2}D{\log}(f_{\max}/\epsilon_M)}{\sqrt{\epsilon_T}\gamma_M} \right)   .
\end{equation}
Next taking $\epsilon_M = \epsilon_{TM}/2$ and similarly for $\|e^{-\TBG t}\|\epsilon_T$, we have
that the number of operations needed to draw a sample with the required bias is in
\begin{equation}
    \mathcal{O}\left(\frac{\sqrt{\|e^{-\TBG t}\|\alpha} t^{3/2}D{\log}(f_{\max}/\epsilon_{TM})}{\sqrt{\epsilon_{TM}}\gamma_M} \right) .
\end{equation}
Finally, if we set the error $\epsilon_{TMH}$ to be the error also including the bias in the mean estimate of $\dim \ker \Delta_k$ from having $t = \log(1/\epsilon)/\gamma$, we have after choosing both sources of error to be equal that the systematic error can be made less than $\epsilon_{TMH}$ using a number of operations in
\begin{equation}
    \widetilde{\mathcal{O}}\left(\frac{\sqrt{\|e^{-\TBG \log(1/\epsilon_{TMH})/\gamma}\|\alpha} D \log(f_{\max}/\epsilon_{TMH})}{\sqrt{\epsilon_{TMH}}\gamma_M \gamma_{\min}^{3/2}} \right)\subseteq \widetilde{\mathcal{O}}\left(\frac{D\sqrt{\alpha} {\log}(f_{\max}/\epsilon_{TMH})}{\sqrt{\epsilon_{TMH}}\gamma_M \gamma_{\min}^{3/2}} \right) .
\end{equation}

\subsection{Sample bounds}

We finally need to consider the sampling error $\epsilon_S$ that arises from taking only a finite number of samples. Standard probabilistic arguments show that the number of samples $N_S$ needed to achieve a given $\epsilon_S$ scales as $\sigma^2/\epsilon_S^2$.  The mean-squared error $\epsilon^2$ is then
\begin{equation}
    \epsilon^2 = \epsilon_S^2 + \epsilon_{TMH}^2 .
\end{equation}
As before, we choose to make the two contributions to the error equal. There is a final source of complexity that needs to be considered though.  Algorithm~\ref{alg:classical} begins by drawing a valid $(k-1)$-simplex to start at to ensure that we are within the space of interest. This means that we need to randomly draw vertices until we find a vertex that is in a $k$-clique.  The probability of drawing such a simplex is $|{\rm Cl}_k(G)|/|\mathcal{H}_k|$, which is the clique density for the graph. Thus with high probability, a number of samples proportional to the reciprocal of this will be needed.  Each such sample requires clique detection, which is argued in~\sec{clique} scales as $\mathcal{O}(|E|)$ up to logarithmic terms in $k$. This leads us to a cost of
\begin{equation}
  N_{\rm op} \in \widetilde{\mathcal{O}}\left(\frac{|E|d_{k-1}\sigma^2}{|{\rm Cl}_k(G)|\epsilon^2} +\frac{D \sqrt{\alpha} \log(f_{\max}/\epsilon)}{\epsilon^{5/2}} \frac{\sigma^2}{\gamma_M \gamma_{\min}^{3/2}}\right) .\label{eq:initialNopbound}
\end{equation}

We now substitute $\alpha, f_{\max}$ for variables related directly to the properties of the combinatorial Laplacian. We substitute $t \geq \log(1/\epsilon)/\gamma_{\min}$ throughout and drop subdominant logarithmic terms. Firstly, from \eqref{eq:alphanaivebound} we get 
\begin{equation}
    \alpha \leq 8rD \gamma_{\max}^3 \in \widetilde{\Theta} \left( \frac{D t^{3/2} \alpha^{1/2} \gamma^3_{\max}}{\epsilon^{1/2}}\right) \implies \sqrt{\alpha} \in \widetilde{\Theta}\left( D \frac{\gamma_{\max}^3}{\gamma_{\min}^{3/2} \epsilon^{1/2}} \right) . \label{eq:sqrtalphabound}
\end{equation}
This bound on $\sqrt{\alpha}$, Eq.~\eqref{eq:PIrbd}, and the fact that $\TBG$ is positive semi-definite imply
\begin{equation}
    r\in \widetilde{\mathcal{O}}\left(\|e^{-\TBG t/2}\|\sqrt{\alpha} t^{3/2} /\sqrt{\epsilon} \right)= \widetilde{\mathcal{O}}\left(D\gamma_{\max}^3 /\epsilon \gamma_{\min}^3 \right) .
\end{equation}
From \eqref{eq:fmaxbound}, the logarithm scales as
\begin{equation}
    \log\left(\frac{f_{\max}}{\epsilon}\right) \in \widetilde{\Theta}\left( \log(d_{k-1}) + \frac{D^2\gamma_{\max}^3}{\gamma_{\min}^3\epsilon}\right) .
\end{equation}
The prior bound on the variance in \eqref{eq:sigma2} evaluates to
\begin{align}\label{eq:varworst}
        \sigma^2 &\leq \frac{2^{2rD} (1/\epsilon)^{2D\max_{i} |\lambda_{p_i,\Gamma_i}|/\gamma_{\min}}}{d_{k-1}}\nonumber\\
        &\in  \frac{2^{\mathcal{O}(D^2 \kappa^3/\epsilon))} (1/\epsilon)^{2D \kappa}}{d_{k-1}},
\end{align}
where we have defined the following quantity, which is analogous to the condition number for the Dirac operator restricted to $\mathcal{H}_k$, $\kappa = \gamma_{\max}/\gamma_{\min}$ (and neglecting the kernel). Substituting these expressions in Eq.~\eqref{eq:initialNopbound} then implies the number of operations for the algorithm obeys 
\begin{align}\label{eq:NopBdgeneral}
    N_{op} &\in \widetilde{\mathcal{O}}\left( \frac{|E|d_{k-1}\sigma^2}{|{\rm Cl}_k(G)|\epsilon^2}+  \frac{D^4\sigma^2\gamma_{\max}^3}{\gamma_{\min}^3 \epsilon^3 \gamma_M}\left(\log(d_{k-1})D^{-2}+ \frac{\gamma_{\max}^3}{\gamma^{3} \epsilon} \right)\right)\nonumber\\
    &\in \widetilde{O}\left(\frac{\sigma^2}{\epsilon^2}\left(\frac{|E|d_{k-1}}{|{\rm Cl}_k(G)|}+\frac{D^4 }{\gamma_M}\frac{\kappa^3}{\epsilon} \left( \log(d_{k-1}) D^{-2} +  \frac{\kappa^3}{\epsilon} \right)\right)\right).
\end{align}

In the event we assume the worst case bound on the variance, the total number of operations is in
\begin{equation}\label{eq:NopBd}
    N_{\rm op}\in \widetilde{\mathcal{O}} \left(\frac{2^{\mathcal{O}(D^2 \kappa^3/\epsilon))} }{d_{k-1}\epsilon^{2+2D \kappa}}\left(\frac{|E|d_{k-1}}{|{\rm Cl}_k(G)|}+\frac{D^4 }{\gamma_M}\frac{\kappa^3}{\epsilon} \left( \log(d_{k-1}) D^{-2} +  \frac{\kappa^3}{\epsilon} \right)\right)\right) .
\end{equation}
This shows that under worst case scenario scaling for the variance, our algorithm is efficient if $D\kappa$ is poly-logarithmic in $n$, $\kappa/\epsilon$ is a constant and $\gamma_M^{-1}$ and the inverse density of cliques are polynomial in $n$.

While the above restrictions on the situations where the classical randomized algorithm is efficient are significant, they do imply that the TDA algorithm can be efficient even in cases where $d_{k-1} = \binom{n}{k}$ is exponentially large provided the graph is clique-dense.  This possibility is not obvious if one only compares to classical algorithms like diagonalization, which scales polynomially with the dimension. 

Finally, the number of operations varies in~\eqref{eq:NopBdgeneral} with the variance of the path integrals which we upper bound with an exponential in $D$.  While this scaling may seem prohibitive in the case where the graph is nearly complete, the variance bound in this case is extremely loose and using a particular bound designed for this scenario yields much better scaling as we will see later.

\subsection{Analysis}

In order to give bounds on the sparsity of the combinatorial Laplacian, we first need to define a few terms. Let $K$ be a simplicial complex with $N$ vertices. The \textbf{up-degree} of a $k$-simplex $\sigma \in K$, denoted by $\text{deg}_U(\sigma)$, is the number of $k+1$ simplices in $K$ that $\sigma$ is in the boundary of. The \textbf{lower-degree} or \textbf{down-degree} of $\sigma$, denoted by $\deg_L(\sigma)$, is the number of $k-1$ simplices in $K$ that are in the boundary of $\sigma$. If two $k$-simplices $\sigma_1,\sigma_2 \in K$ both contain a $k-1$ simplex in their boundary, they are said to be \textbf{lower adjacent}. If $\sigma_1, \sigma_2$ are in the boundary of a $k+1$ simplex, they are said to be \textbf{upper adjacent}. Lemma 3.2.4 of \cite{goldberg2002} shows that if $\sigma_1, \sigma_2$ are distinct and lower adjacent, their common $(k-1)$-simplex is $\sigma_1 \cap \sigma_2$ and is unique if it exists. Similarly, Lemma 3.2.2 of \cite{goldberg2002} shows that if $\sigma_1$ and $\sigma_2$ are upper adjacent, their common $(k+1)$-simplex is unique. 

The down degree of a $k$-simplex $\sigma \in K$ is always $k+1$ for any simplicial complex simply because any $k$-simplex contains $\binom{k+1}{k} = k+1$ simplices of dimension $k-1$ in its boundary. Its up-degree is less trivial to determine, but can be bounded as follows. 

\begin{proposition}
{Let $K$ be a simplicial complex with $N$ vertices and let $\sigma$ be a $k$-simplex in $K$. The up-degree of $\sigma$ is bounded by $\min\{N-k-1, d\}$, where $d$ is maximum (up)-degree of all the vertices in $K$.}    
\end{proposition}

\begin{proof}
The most simple argument is that the up-degree is bounded by the number of possible ways to extend a $k$-simplex to a $k+1$ simplex by adding another point. Since we have $N - k - 1$ other points to choose from, the up-degree is bounded by this quantity. 

A more rigorous argument is to consider the largest eigenvalue of the combinatorial Laplacian $\Delta_k = \partial_k^{\dag} \partial_{k} + \partial_{k+1}\partial^{\dag}_{k+1}$ has the bound $\lambda_{\max}(\Delta_k) \leq N$ \cite{duval2002shifted}. The diagonal matrix elements of $\Delta_k$ are given in Theorem 3.3.4 of \cite{goldberg2002} as $\deg_U(\sigma_i) + k + 1$ for all the $k$-simplices $\sigma_i \in K$ when $k > 0$. Since $\Delta_k$ is a real symmetric matrix, the {Courant-Fischer theorem \cite{horn_johnson_2012}} and the preceding bound on its largest eigenvalue together imply $\deg_U(\sigma) + k + 1 \leq N$, which in turn shows $\deg_U(\sigma) \leq N - k -1$.  

Another trivial upper bound is simply given by the maximal (up) degree $d$ of any vertex $v$ in the simplicial complex $K$. This is because the vertex added to a $k$-simplex to turn it into a $(k+1)$-simplex has to be connected to all vertices in the $k$-simplex. The number of vertices connected to all the vertices in the $k$-simplex is upper bounded by $d$. Hence we may take $\deg_U(\sigma) \leq \min\{N-k-1,\ d\}.$\end{proof}

\begin{proposition}
{$\partial_k$ has row-sparsity equal to $\min\{N-k-1, d\}$ and column sparsity equal to $k+1$. $\Delta_k$ has row and column sparsity bounded by $(k+1)(N-k-1)$.} 
\label{cor:BoundarySparsity}
\end{proposition}

\begin{proof}
With the same notation as above, $\partial_k$ acts on the vector space of $k$-simplices in $K$ with the standard basis vectors corresponding to the $k$-simplices themselves. It has column sparsity equal to the down-degree of any $k$-simplex in $K$, i.e.\ $k+1$, because $\partial_k$ has non-zero elements in a column corresponding to a fixed $k$-simplex only when a $k-1$-simplex is the boundary of that $k$-simplex. 

Its row sparsity however is given by the largest up-degree of all $k$-simplices in $K$ and from the preceding discussion is bounded by $\min\{N-k-1,\ d\}$. As $\Delta_k$ is Hermitian, the column and row sparsity of $\Delta_k$ coincide and equals the maximum of the number of $k$-simplices $c'$ such that $c \cap c'$ is a $(k-1)$-simplex and $c \cup c'$ is \textit{not} a $(k+1)$-simplex for all $k$-simplices $c$ (see Theorem 3.3.4 of \cite{goldberg2002}). A trivial upper bound for this is $(k+1)(N - k -1)$ (note that this is precisely the product of the bounds given for the up and lower degrees earlier). This is because in order to construct for a given $c$ a $c'$ as above, we can remove any of the $k+1$ vertices from $c$ and add any of the $(N-k-1)$ vertices not in $c$ to form $c'$. 
\end{proof}

Note however that tighter bounds can be achieved by also using the largest degree $d$ of a vertex, e.g.\ another upper bound on the sparsity is $\mathcal{O}(kd)$. Thus we can set $D = O(k^2d^2)$ or $D = O(k^2(N-k)^2)$ for the number of terms $D$ in the one-sparse decomposition of the combinatorial Laplacian depending on if $d$ or $N-k-1$ is smaller (see the discussing preceding \eqref{eq:trotterboundarydecomp}).

We now consider a few cases where our algorithm can run efficiently. Our first example will be the extreme case where the input to the algorithm is the completely disconnected graph on $N$ points. The only non-zero Betti number of the associated clique complex in this case is $\beta_0 = N$. Since $\dim \ker (\Delta_0) = \beta_0 = N$, $\Delta_0$ is the $N \times N$ zero matrix and $\kappa$ is undefined. We therefore cannot directly use the preceding asymptotic expressions for the number of samples required. The analysis is simple if we refer to the general formula for $\sigma^2$ in \eqref{eq:sigma2} however. 

\begin{proposition}
{Let $K$ be the completely disconnected graph on $N$ vertices. The variance of the path-integral Monte Carlo sampling procedure for the Betti numbers of $K$ is 0.} 
\end{proposition}

\begin{proof}
    In this case, $\lambda = 0$ for all the eigenvalues of $\Delta_0$. We can choose the standard basis vectors for the $N$-dimensional vector space of $0$-simplices, where each basis vector corresponds to a $0$-simplex, as the eigenvectors of $\Delta_0$. Then $W(\Gamma)$ is merely a product of Kronecker delta functions. There are precisely $N$ valid paths (loops) for the Markov chain corresponding to each of the $N$ vertices, so $\sum_{\Gamma \in S_{\Gamma}} W(\Gamma) = \sum_{\Gamma \in S_{\Gamma}} |W(\Gamma)|^2 = \sum_{\Gamma\in S_\Gamma} = N$. The variance then reduces to $\sigma^2 = \frac{N^2}{d_0^2} - \frac{N^2}{d_0^2} = 0.$
\end{proof}

Even though the variance is precisely 0 for this case, we do not have a priori knowledge of the structure of the input graph from the standpoint of this algorithm, aside from an upper bound on the degree of the vertices. We only possess access to it via oracle queries that verify whether a collection of vertices forms a $k$-clique. Therefore, we cannot conclude anything definite about the Betti numbers with only a single sample in this situation as the zero variance result might suggest, unless given an additional promise that the \textit{least} upper bound is a fixed value or 0.

Another simple case is when the input is a complete graph on $N = n+1$ points. 

\begin{proposition}
{Let $K$ be the clique complex of the complete graph on $N = n +1$ vertices. The variance of the path-integral Monte Carlo sampling procedure for the Betti numbers of $K$ is 0.} 
\end{proposition}

\begin{proof}
The clique complex of this graph corresponds to an $n$-simplex. Again from Theorem 3.3.4 of \cite{goldberg2002}, $(\Delta_k)_{ii} = \deg_U(\sigma_i) + k + 1$ and the off-diagonal entries are either 0 or $\pm 1$ if $k > 0$. We argue that in this case, all the off-diagonal entries are 0. From the same theorem, it suffices to show that for a fixed $k$-simplex $\sigma_i$, the sum of the number of other upper adjacent simplices and other non-lower adjacent simplices must equal $\binom{n+1}{k+1} - 1$ (since $\Delta_k$ is a $\binom{n+1}{k+1} \times \binom{n+1}{k+1}$ matrix). 

We first calculate the up-degree of any $k$-simplex in our $n$-simplex. The number of $(k+1)$-simplices in an $n$-simplex is $\binom{n+1}{k+2}$. The number of $k$-simplices in a $(k+1)$-simplex is $k+2$. The number of $k$-simplices in an $n$-simplex is $\binom{n+1}{k+1}$. Intuitively, we can determine the up-degree by finding the number of $k$-simplices contained in the boundary of all $(k+1)$-simplices and dividing by the actual number of $k$-simplices in an $n$-simplex to account for over-counting. This quantity is precisely $$\frac{(k+2)\binom{n+1}{k+2}}{\binom{n+1}{k+1}} = n-k$$ and we thus get that the diagonal entries of $\Delta_k$ are all equal to $n+1$ when $k > 0$ (note that this saturates the bound on the up-degree of a $k$-simplex in any simplicial complex given earlier with $N = n +1$). 

Now fix a particular $k$-simplex $\sigma$ in the $n$-simplex. We want to count the number of \textit{other} $k$-simplices that are \textit{not} lower adjacent to $\sigma$. $\sigma$ has $k+1$ simplices of dimension $k-1$ in its boundary. From the above result, each of these $(k-1)$-simplices has up-degree $n-(k-1) = n-k+1$. Then the number of potential lower adjacent $k$-simplices is $(k+1)(n-k+1)$. But by the uniqueness of common upper simplices, we have counted $\sigma$ $k+1$ times, one for each of the $(k-1)$-simplices in the boundary of $\sigma$. Thus the number of \textit{other} lower adjacent simplices is $(k+1)(n-k+1) - (k+1) = (k+1)(n-k)$. Thus the number of \textit{other} $k$-simplices \textit{not} lower adjacent to $\sigma$ is $\binom{n+1}{k+1} - (k+1)(n-k) - 1$. On the other hand, the number of \textit{other} $k$-simplices that are upper adjacent to $\sigma$ is given by multiplying the up-degree $n-k$ of $\sigma$, which gives the number of $k+1$ simplices that are upper-adjacent to $\sigma$, by the number of $k$-simplices in each $k+1$-simplex which is $k+2$. But this overcounts $\sigma$ by $n-k$, so $(n-k)(k+2) - (n-k) = (n-k)(k+1)$. The sum of this and $\binom{n+1}{k+1} - (k+1)(n-k) - 1$ is clearly $\binom{n+1}{k+1} - 1$, so the only non-zero entries in $\Delta_k$ are the diagonal ones when $k > 0$. 

When $k = 0$, Theorem 3.3.4 in \cite{goldberg2002} shows the diagonal entries of $\Delta_0$ are the degree of the vertices in our $n$-simplex, which is precisely $n$. It also implies all the off-diagonal entries are $0$ since every vertex is upper-adjacent in an $n$-simplex. Thus in either the $k = 0$ or $k > 0$ case, $\Delta_k$ is proportional to the identity. We can then set $D = r = 1$ in \eqref{eq:sigma2} and the number of valid paths summed over is precisely $d_k$. By an analogous reasoning to the completely disconnected case, $\sum_{\Gamma \in S_{\Gamma}} W(\Gamma) = \sum_{\Gamma \in S_{\Gamma}} |W(\Gamma)|^2 = \sum_{\Gamma\in S_\Gamma} = d_k$ and $$\sigma^2 = e^{-2t\lambda}\left(\frac{d_k^2}{d_k^2} - \frac{d_k^2}{d_k^2}\right) = 0,$$ where $\lambda = n+1$ when $k > 0$ and $\lambda = n$ when $k = 0$. \end{proof}

Analogous arguments on needing a promise on the least upper bound on the degree of the vertices hold as in the case of the completely disconnected graph above. 

More can be said of $\kappa$ in restricted cases. For instance, it is known that orientable simplicial complexes of dimension $d \leq 2$ have $\kappa \in \mathcal{O}(n_k^{2})$, where $n_k$ denotes the number of $k$-simplices and $0 \leq k \leq 2$, and it is further conjectured that $\kappa \in \mathcal{O}(n_k^{2/d})$ in most cases~\cite{friedman:betti}. This latter bound is quite favourable for high-dimensional simplices, i.e.\ when $d \sim n$ and with the approximation that $n_k \sim 2^n$. Unfortunately, it is difficult to give tight bounds on $\gamma_M$ and $\sigma^2$ in \eqref{eq:NopBdgeneral} for arbitrary simplicial complexes. {Though the Perron-Frobenius theorem applied to stochastic matrices implies $0 < \gamma_M \leq 1$ \cite{horn_johnson_2012}, its scaling with the other parameters of relevance are unknown}. 

While we do not ultimately expect this classical TDA method to be efficient generically, the above discussions show there exist sufficient conditions under which it can be used to extract Betti numbers in polynomial time without suffering from a generic exponential dependence on the number of data points characteristic of other classical TDA algorithms. Furthermore, it creates uncertainty about the necessary and sufficient conditions for an exponential advantage for quantum TDA as the cases where this algorithm has an exponential advantage relative to the deterministic classical algorithm are analogous to those for the quantum algorithm. This means that an identification of clear cases where exponential advantage is likely remains an important open problem within the domain of quantum algorithms for TDA.

\end{document}